\documentclass[a4paper,11pt]{article}
\input{Preamble.tex}

\begin{document}

\pagenumbering{Roman}

\hypersetup{pageanchor=false}

\newcommand{\acknow}{F.F. was supported by the Research Council of Norway via the project BWCA (grant no. 314528). T.K. was supported by the Research Council of Norway via the project BWCA (grant no. 314528), by Grant Numbers 16582 and 54451, Basic Algorithms Research Copenhagen (BARC), from the VILLUM Foundation, and by the European Union under Marie Skłodowska-Curie Actions (MSCA), project no. 101206430.}

\title{Finding sparse induced subgraphs on graphs of bounded induced matching treewidth\thanks{\acknow}}

\author{
    Hans L. Bodlaender \\
    \small{Utrecht University, the Netherlands} \\
    \small{\texttt{h.l.bodlaender@uu.nl}}
    \and
    Fedor V. Fomin \\
    \small{University of Bergen, Norway} \\
    \small{\texttt{fedor.fomin@uib.no}}
    \and
    Tuukka Korhonen \\
    \small{University of Copenhagen, Denmark} \\
    \small{\texttt{tuko@di.ku.dk}}
}

\maketitle

\thispagestyle{empty}

\begin{abstract}
The \emph{induced matching width} of a tree decomposition of a graph $G$ is the cardinality of a largest induced matching $M$ of $G$, such that there exists a bag that intersects every edge in~$M$.
The \emph{induced matching treewidth} of $G$, denoted by $\treemu(G)$, is the minimum induced matching width of a tree decomposition of $G$.
The parameter $\treemu$ was introduced by Yolov~[SODA~'18], who showed that, for example, \mwis can be solved in polynomial-time on graphs of bounded $\treemu$.
Lima, Milanič, Muršič, Okrasa, Rzążewski, and Štorgel [ESA~'24] conjectured that this algorithm can be generalized to a meta-problem called \mwisbt, where we are given a vertex-weighted graph $G$, an integer $w$, and a $\CMSO_2$-sentence $\Phi$, and are asked to find a maximum-weight set $X \subseteq V(G)$ so that $G[X]$ has treewidth at most $w$ and satisfies $\Phi$.
They proved the conjecture for some special cases, such as for the problem \mwif.

In this paper, we prove the general case of the conjecture.
In particular, we show that \mwisbt is polynomial-time solvable when $\treemu(G)$, $w$, and $|\Phi|$ are bounded.
The running time of our algorithm for $n$-vertex graphs $G$ with $\treemu(G) \le k$ is $f(k, w, |\Phi|) \cdot n^{\OO(k w^2)}$ for a computable function $f$.
\end{abstract}

\begin{textblock}{20}(-0.5, 4.8)
\includegraphics[width=100px]{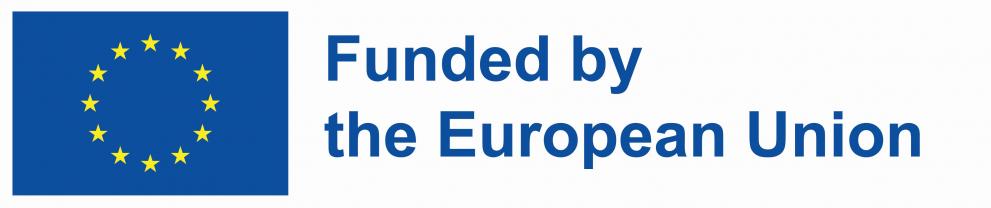}%
\end{textblock}

\thispagestyle{empty}

\newpage


\setcounter{page}{1}



\pagenumbering{arabic}

\hypersetup{pageanchor=true}

\clearpage
\setcounter{page}{1}

\section{Introduction}
Graph decompositions lie at the core of modern algorithmic graph theory.  
While treewidth is still the most influential structural graph parameter, many successful generalizations have emerged, including clique-width, rank-width, mim-width, Boolean-width, and tree-independence number~\cite{DBLP:journals/jcss/CourcelleER93,DBLP:journals/jct/OumS06,vatshelle:thesis,DBLP:journals/tcs/Bui-XuanTV11,DBLP:journals/jctb/DallardMS24,DBLP:conf/soda/Yolov18}.
Continuing this line of research, Yolov introduced the \emph{induced matching treewidth} of a graph $G$, denoted by $\treemu(G)$~\cite{DBLP:conf/soda/Yolov18}\footnote{Yolov called his parameter ``minor-matching hypertree width'', but we use ``induced matching treewidth'', following~\cite{DBLP:conf/esa/LimaMMORS24}}.
Whereas treewidth limits the \emph{size} of each bag, $\treemu$ limits the size of an induced matching that can intersect a bag.

More formally, for a graph $G$ and a set of vertices $X \subseteq V(G)$, we denote by $\mu(X)$ the cardinality of a largest induced matching $M$ so that for each edge $uv \in M$, it holds that $\{u,v\} \cap X \neq \emptyset$.
Then, for a tree decomposition $\Tc = (T,\bag)$, where $T$ is a tree and $\bag$ a bag-function $\bag \colon V(T) \to 2^{V(G)}$, the \emph{induced matching width} of $\Tc$ is $\mu(\Tc) = \max_{t \in V(T)} \mu(\bag(t))$.
The induced matching width of $G$, $\treemu(G)$, is the minimum of $\mu(\Tc)$ over all tree decompositions $\Tc$ of $G$.
Note that $\treemu(G) \le \tw(G)+1$, where $\tw(G)$ denotes the treewidth of $G$.
Furthermore, the induced matching treewidth of a complete graph or a complete bipartite graph is at most~$1$, while they have unbounded treewidth.

When working with graph decompositions, two fundamental algorithmic issues arise:

\begin{enumerate}
  \item  How can one compute a “good’’ decomposition---typically, one whose width is close to optimal---in the fastest possible time?
  \item  Which hard computational problems become tractable once such a low-width decomposition is available?
\end{enumerate}

For example, while computing the treewidth of a graph is NP-hard  \cite{ArnborgCP87}, a rich toolbox of FPT~\cite{Bodlaender96}, approximation~\cite{FeigeHL08}, and FPT-approximation~\cite{Korhonen21} algorithms is available for this purpose. The second question is, in a sense, even more rewarding: Once a tree decomposition of width $k$ is given, a large family of problems can be solved in time $f(k)\cdot n$, by dynamic programming.  Courcelle’s celebrated meta-theorem \cite{DBLP:journals/iandc/Courcelle90} crystallizes this: Every graph property expressible in counting monadic second-order logic ($\CMSO_2$) is decidable in linear time on graphs of bounded treewidth.

Similar “two-question’’ narratives play out for any width measures, including $\treemu$. 
For computing tree decompositions with small induced matching width, Yolov~\cite{DBLP:conf/soda/Yolov18} gave an algorithm that for a given $n$-vertex graph $G$ and integer $k$, in time $n^{\OO(k^3)}$ either constructs a tree decomposition of $G$ whose induced matching width is $\OO(k^3)$ or correctly reports that $\treemu(G)$ is larger than~$k$.
Dallard, Fomin, Golovach, Korhonen, and Milanic~\cite{DBLP:conf/icalp/DallardFGKM24} improved the running time and approximation ratio of  Yolov's algorithm by giving an algorithm of running time  $2^{\OO(k^2)} n^{\OO(k)}$. Their algorithm either outputs a tree decomposition of $G$ of $\mu$-width at most $8k$, or concludes that $\treemu(G)$  is larger than~$k$.
Dallard et al. also showed that for every constant $k\geq 4$, it is NP-complete to decide whether $\treemu(G)\leq k$ for a given graph~$G$.\footnote{The results of Dallard et al. \cite{DBLP:conf/icalp/DallardFGKM24} are stated for related parameter tree-independence number, but due to a reduction by Yolov~\cite{DBLP:conf/soda/Yolov18} between computing it and induced matching treewidth, they hold for $\treemu$ as well.}

With respect to the second question, the landscape for $\treemu$ is far from settled, and identifying the limits of algorithmic applicability of  induced matching treewidth is the main motivation of our work.  Yolov showed that graphs of bounded $\treemu$ admit polynomial-time algorithms for \textsc{Maximum Independent Set}, \textsc{$c$-Coloring} for constant $c$, and, more generally, for \textsc{Graph Homomorphism} to a fixed pattern graph~\cite{DBLP:conf/soda/Yolov18}.
Lima, Milanič, Muršič, Okrasa, Rzążewski, and Štorgel recently extended Yolov’s results by proving that \textsc{Minimum-Weight Feedback Vertex Set} (equivalently, \textsc{Maximum-Weight Induced Forest}) can be solved in time $n^{\mathcal{O}(k)}$ on graphs $G$ with $\treemu(G) \le k$~\cite{DBLP:conf/esa/LimaMMORS24}.

More satisfactory than designing individual dynamic programming algorithms for graphs of bounded $\treemu$ would be a general meta-theorem, akin to Courcelle's theorem~\cite{DBLP:journals/iandc/Courcelle90}, providing an automated way of obtaining algorithms for graphs of bounded $\treemu$.
Inspired by an earlier meta-theorem for graphs with a polynomial number of minimal separators by Fomin, Todinca, and Villanger~\cite{FominTV15}, Lima et al.~\cite{DBLP:conf/esa/LimaMMORS24} conjectured that the following meta-problem, called \mwisbt, is solvable in polynomial-time on graphs of bounded $\treemu$ (when the inputs $w$ and $\Phi$ are also bounded).

\medskip

\defproblema{ \mwisbt}{Graph $G$, a weight function $\we \colon V(G) \to \R$, an integer $w$, and a $\CMSO_2$-sentence $\Phi$. }{Find a maximum-weight set $X \subseteq V(G)$ such that $G[X]$ has treewidth at most $w$ and satisfies $\Phi$.}

This problem is equivalent to \mwis when $w = 0$ and $\Phi$ is always true, and to \textsc{Maximum-Weight Induced Forest} when $w = 1$ and $\Phi$ is always true.
Furthermore, it captures, for example, the problem \textsc{Weighted Longest Induced Cycle} when $w=2$ and $\Phi$ describes that $G[X]$ is a cycle.
In addition to proving the \textsc{Maximum-Weight Induced Forest} case of the conjecture, Lima et al. proved the conjecture also when $\treemu$ is replaced by a more restrictive parameter \emph{tree-independence number}~\cite{DBLP:conf/esa/LimaMMORS24}.
Tree-independence number, $\treealpha$, is defined like $\treemu$, but using the function $\alpha(\bag(t))$, the cardinality of a largest independent set in $\bag(t)$, instead of the function $\mu(\bag(t))$.
Note that $\treemu(G) \le \treealpha(G)$, but, for example, complete bipartite graphs have unbounded $\treealpha$ but bounded $\treemu$.

\subsection{Our contribution.}
We resolve the conjecture of Lima et al.~\cite{DBLP:conf/esa/LimaMMORS24} in the affirmative.

\begin{restatable}{theorem}{maintheorem}
\label{the:main}
There is an algorithm that, given as input an $n$-vertex graph $G$, a weight function $\we \colon V(G) \to \R$, integers $k$ and $w$, and a $\CMSO_2$-sentence $\Phi$, in time $f(k, w, |\Phi|) \cdot n^{\OO(k w^2)}$, where $f$ is a computable function, either (1) returns a maximum-weight set $X \subseteq V(G)$ so that $G[X]$ has treewidth at most $ w$ and satisfies $\Phi$, or (2) determines that no such set $X$ exists, or (3) concludes that $\treemu(G) > k$.
\end{restatable}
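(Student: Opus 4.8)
The plan is to combine three ingredients: a tree decomposition of $G$ of small induced matching width; a bounded-size tree automaton for the target property ``$\tw\le w$ and $\models\Phi$''; and a dynamic program over the first decomposition that builds a width-$\le w$ tree decomposition of $G[X]$ while keeping only $n^{\OO(kw^2)}$ states per node, the bound on the number of states being the place where the induced matching width is exploited.

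First, I would invoke the algorithm of Dallard, Fomin, Golovach, Korhonen and Milanič to, in time $2^{\OO(k^2)}n^{\OO(k)}$, either certify $\treemu(G)>k$ --- in which case we output~(3) --- or produce a tree decomposition $\Tc=(T,\bag)$ of $G$ with $\mu(\Tc)\le 8k$; by standard preprocessing I may assume $\Tc$ is rooted, binary, and of size $\OO(n)$. Second, ``$\tw(H)\le w$'' is characterized by a finite and (for fixed $w$) computable set of forbidden minors, and minor-containment is $\CMSO_2$-expressible, so $\Psi:=(\tw\le w)\wedge\Phi$ is a single $\CMSO_2$-sentence of length $|\Phi|+g(w)$; by Courcelle's theorem in its automaton form there is a deterministic bottom-up tree automaton $\autom$ with $f(w,|\Phi|)$ states that, run on a nice width-$\le w$ tree decomposition of a (possibly boundaried) graph $H$, accepts iff $H\models\Psi$. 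Since the dynamic program will only ever construct width-$\le w$ decompositions, in fact $\autom$ need only check $\Phi$.

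The core is the dynamic program over $\Tc$. For a node $t$ of $T$ with bag $B=\bag(t)$, let $A_t$ be the union of the bags in the subtree rooted at $t$; then $B$ separates $A_t\setminus B$ from the rest of $G$, hence $X\cap B$ separates the corresponding parts of $G[X]$. Processing $T$ bottom-up, at $t$ we enumerate ``profiles'' of pairs $(X_t,\Tc_{X_t})$, where $X_t\subseteq A_t$ is a candidate partial solution and $\Tc_{X_t}$ is a partial width-$\le w$ tree decomposition of $G[X_t]$ read in synchrony with $\Tc$, storing for each profile the maximum $\we(X_t)$ that realizes it. A profile must record (a) the state of $\autom$ reached on $\Tc_{X_t}$ at its current frontier, (b) the constantly many ``active'' bags of $\Tc_{X_t}$ --- each of size $\le w+1$ --- that still need to meet vertices outside $A_t$, together with their adjacencies into $B$, and (c) enough of $X\cap B$ to later verify that gluing with the top part gives a valid tree decomposition of $G[X]$. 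The structural lemma, which is where $\mu(B)\le\OO(k)$ enters, says that one may restrict to decompositions $\Tc_X$ of $G[X]$ in a normal form in which the data in (c) collapses to a set of $\OO(kw)$ ``representative'' vertices of $B$ together with the bipartite adjacency pattern between them and the active bags: the vertices of $X\cap B$ that truly affect the interface are those lying in active bags of $\Tc_X$ or participating in a maximum induced matching hitting $B$, while the rest of $X\cap B$ is ``bulk'' (think of the leaves of a dominated star, or one side of a complete bipartite subgraph inside $B$) whose membership influences only the weight and can be optimized locally. Counting the possibilities --- $n^{\OO(kw^2)}$ choices for the representative vertices and active-bag contents, and $f(w,|\Phi|)$ automaton states --- gives $n^{\OO(kw^2)}\cdot f(w,|\Phi|)$ profiles per node. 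The introduce/forget/join transitions update profiles in the obvious way; correctness is proved in two directions: soundness (any genuine solution $X$ together with a normal-form width-$\le w$ decomposition of $G[X]$ yields a consistent family of profiles of total weight $\we(X)$) and completeness (any accepting family of profiles can be stitched, along the separators $B$ and by re-attaching the bulk, into an $X$ with $\tw(G[X])\le w$ and $G[X]\models\Phi$ of the recorded weight), both by cut-and-paste along $\Tc$. Reading the best accepting profile at the root and backtracking produces~$X$; if there is none we output~(2).

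The main obstacle is precisely the structural lemma behind (c): one must isolate a normal form for width-$\le w$ tree decompositions of $G[X]$ in which the interaction with each (unboundedly large) bag $B$ of $\Tc$ is controlled by only $\OO(kw)$ vertices, and then show that cutting and re-gluing $G[X]$ along $X\cap B$ --- including re-attaching the bulk of $X\cap B$ --- preserves both the treewidth bound and the $\CMSO_2$-type; this re-attachment of the bulk is exactly what makes the general case harder than the tree-independence-number case of Lima et al., where $|X\cap B|\le k(w+1)$ is automatically bounded and can simply be tracked verbatim. Everything else --- computing $\Tc$, constructing $\autom$, setting up the recurrences, and backtracking --- is routine once that lemma is in place.
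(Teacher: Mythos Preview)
Your outline matches the paper's architecture at the coarsest level --- compute a small-$\mu$ decomposition, build a CMSO automaton, run a DP --- and you correctly flag the ``structural lemma behind (c)'' as the crux. But the proposal does not contain the ideas that make that lemma true, and one of your intermediate claims is actually false.

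The claim that the bulk of $X\cap B$ ``influences only the weight and can be optimized locally'' is wrong as stated: adding or removing a bulk vertex changes $G[X]$ and hence can change whether $G[X]\models\Phi$ (think of $\Phi$ counting vertices modulo $r$). The paper handles this not by local optimization but by exploiting \emph{optimality} of the sought solution. It fixes \emph{the} optimal solution via lexicographic tie-breaking, defines a ``basic signature'' $(C,S,D)$ with $|C|,|D|=\OO(kw)$ and $S\in\isfam(B)$, and shows that the undetermined ``dangling'' vertices of $S$ all have $N(v)\cap X\subseteq C$. It then groups them by the pair $(N(v)\cap C,\ \text{sign of }\we(v))$ together with the automaton state of their one-vertex attachment, and proves via a pumping argument on $\autom$ that the optimal solution takes, from each group, either at most $\max(w+1,|Q|)$ or all but at most $|Q|$ vertices, always a weight-sorted suffix. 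This is what collapses the number of possible intersections $X\cap B$ to $f(k,w,|\Phi|)\cdot n^{\OO(kw)}$; nothing like it appears in your proposal.

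Separately, your inner decomposition is mis-specified. You ask for a width-$\le w$ decomposition of $G[X]$ with ``constantly many'' active bags at each node of $\Tc$; the paper instead builds a decomposition of $G[X]$ on the \emph{same} tree $T$ but of width $\OO(kw^2)$ (not $\le w$), via an explicit partition $X=X_1\cup X_2$ based on degrees in $G[X]$: $X_2$ is independent and each $v\in X_2$ appears in exactly one bag $\topnode_\Tc(v)$, while $X_1$ vertices are pushed upward to meet their $X_2$-neighbours. The treewidth bound $\le w$ is then enforced by folding it into the CMSO sentence, and the automaton runs at width $\OO(kw^2)$. Proving the $\OO(kw^2)$ width bound (three separate $\OO(kw^2)$ counts, each via a matching-to-induced-matching argument using $\tw(G[X])\le w$ and $\mu(B)\le k$) is the second piece of real work you have not supplied. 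Without these two ingredients, the DP you sketch cannot be made to go through.
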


When $w$ and $\Phi$ are constants, the running time is $f(k) \cdot n^{\OO(k)}$, which matches (when $n$ is sufficiently large compared to $k$) to the running times known for \mwis and \textsc{Maximum-Weight Induced Forest}~\cite{DBLP:conf/soda/Yolov18,DBLP:conf/esa/LimaMMORS24}.
As for the computational model for the real weights, we only use the assumptions that two numbers can be added and compared efficiently.
We note that \Cref{the:main} supports also negative weights, although we are not aware of any application where they would turn up naturally (i.e., a minimization problem).

The class of optimization problems addressed by \Cref{the:main} is  broad.  
It subsumes the weighted variants of several well-studied problems, including
\textsc{Maximum Independent Set},
\textsc{Maximum Induced Matching},
\textsc{Longest Induced Path},
\textsc{Longest Induced Cycle},
\textsc{Maximum Induced Forest}, and
\textsc{Maximum Induced Subgraph Excluding a Planar Minor}.  
For more examples of problems expressible by this framework, we refer the reader to~\cite{FominTV15}.

Even more generally, by using a known fact that $K_{t,t}$-subgraph-free graphs of bounded $\treemu$ have bounded treewidth~\cite{DBLP:conf/esa/LimaMMORS24}, we can replace the bounded treewidth condition in \Cref{the:main} by $K_{t,t}$-subgraph-freeness.
A graph is $K_{t,t}$-subgraph-free if it does not contain the complete bipartite graph $K_{t,t}$ as a subgraph.

\begin{restatable}{corollary}{corkttfree}
\label{cor:kttfreealg}
There is an algorithm that, given as input an $n$-vertex graph $G$, a weight function $\we \colon V(G) \to \R$, integers $k$ and $t$, and a $\CMSO_2$-sentence $\Phi$, in time $f(k, t, |\Phi|) \cdot n^{g(k,t)}$, where $f$ and $g$ are computable functions, either (1) returns a maximum-weight set $X \subseteq V(G)$ so that $G[X]$ is $K_{t,t}$-subgraph-free and satisfies $\Phi$, or (2) determines that no such set $X$ exists, or (3) concludes that $\treemu(G) > k$.
\end{restatable}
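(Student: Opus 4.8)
The plan is to derive \Cref{cor:kttfreealg} from \Cref{the:main} by expressing $K_{t,t}$-subgraph-freeness as a $\CMSO_2$ side condition and then trading that condition for an equivalent treewidth bound. Three ingredients go in. (i) $K_{t,t}$-subgraph-freeness is definable by a first-order (hence $\CMSO_2$) sentence $\Psi_t$ of length $\OO(t^2)$, namely the negation of ``there exist $a_1,\dots,a_t,b_1,\dots,b_t$, pairwise distinct on each side, such that $a_i$ is adjacent to $b_j$ for all $i,j$'' (adjacency already forces the two sides to be disjoint, so these are $2t$ distinct vertices). (ii) By~\cite{DBLP:conf/esa/LimaMMORS24} there is a computable function $h$ such that every $K_{t,t}$-subgraph-free graph $H$ with $\treemu(H)\le k$ satisfies $\tw(H)\le h(k,t)$. (iii) $\treemu$ is monotone under taking induced subgraphs: intersecting every bag of a minimum-$\mu$-width tree decomposition of $G$ with a set $X$ yields a tree decomposition of $G[X]$, and every induced matching of $G[X]$ is also an induced matching of $G$, so $\treemu(G[X])\le\treemu(G)$.

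Concretely, I would first run the approximation algorithm of Dallard et al.~\cite{DBLP:conf/icalp/DallardFGKM24} with parameter $k$; if it reports $\treemu(G)>k$, I output outcome~(3). Otherwise it returns a tree decomposition certifying $\treemu(G)\le 8k$, and I set $w:=h(8k,t)$ and call \Cref{the:main} on $G$ with the weight function $\we$, integers $8k$ and $w$, and the sentence $\Phi\wedge\Psi_t$. Since $\treemu(G)\le 8k$, this call cannot conclude $\treemu(G)>8k$, so it either returns a maximum-weight $X$ with $\tw(G[X])\le w$ and $G[X]\models\Phi\wedge\Psi_t$, or reports there is none; I return the corresponding outcome~(1) or~(2). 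Correctness then reduces to the identity
\[
\{\,X : \tw(G[X])\le w \text{ and } G[X]\models\Phi\wedge\Psi_t\,\}=\{\,X : G[X]\text{ is }K_{t,t}\text{-subgraph-free and }G[X]\models\Phi\,\},
\]
where ``$\subseteq$'' holds because $G[X]\models\Psi_t$ exactly when $G[X]$ is $K_{t,t}$-subgraph-free, and ``$\supseteq$'' holds because a $K_{t,t}$-subgraph-free $G[X]$ has $\treemu(G[X])\le\treemu(G)\le 8k$ by~(iii), hence $\tw(G[X])\le h(8k,t)=w$ by~(ii). The running time is $2^{\OO(k^2)}n^{\OO(k)}$ plus $f(8k,h(8k,t),|\Phi|+\OO(t^2))\cdot n^{\OO(k\cdot h(8k,t)^2)}$, which is of the claimed form $f'(k,t,|\Phi|)\cdot n^{g(k,t)}$ for computable $f'$ and $g$.

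I do not anticipate a real obstacle: the statement is essentially a repackaging of \Cref{the:main} together with two facts that are already on the table. The one point that needs care is to invoke the approximation algorithm before \Cref{the:main}, so that \Cref{the:main} is applied to an instance on which $\treemu$ is already known to be bounded. This is precisely what makes the treewidth bound interchangeable with $K_{t,t}$-subgraph-freeness and rules out the corner case where $k<\treemu(G)$ and an optimal $K_{t,t}$-subgraph-free induced subgraph has treewidth exceeding $w$ and would be overlooked if one called \Cref{the:main} naively with parameter $k$ and bound $h(k,t)$.
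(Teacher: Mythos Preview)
Your proof is correct and follows the same strategy as the paper: augment $\Phi$ with a $\CMSO_2$ clause expressing $K_{t,t}$-subgraph-freeness, use the bound from~\cite{DBLP:conf/esa/LimaMMORS24} (stated in the paper as \Cref{lem:kttfreebdtw}) to convert this into a treewidth bound, and invoke \Cref{the:main}. Your extra step of pre-running the approximation algorithm to certify $\treemu(G)\le 8k$ before setting $w=h(8k,t)$ is a careful treatment of the edge case you identify (where $\treemu(G)>k$ yet \Cref{the:main} does not report outcome~(3)); the paper's one-line proof glosses over this point.
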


Note that $K_{t,t}$-subgraph-freeness (equivalently, excluding a clique and a biclique as induced subgraphs) is the most general hereditary condition that in conjunction with bounded $\treemu$ guarantees bounded treewidth, as both cliques and bicliques have bounded induced matching treewidth but unbounded treewidth.
The class of $K_{t,t}$-subgraph-free graphs includes planar graphs (which are $K_{3,3}$-free due to Kuratowski's theorem), and more generally, $H$-minor-free graphs, for every fixed graph $H$.
It also contains graphs of bounded maximum degree, as well as graphs of bounded degeneracy.
Thus, \Cref{cor:kttfreealg} allows us to directly capture problems such as \textsc{Maximum Planar Induced Subgraph} (equivalently, \textsc{Minimum Planar Vertex Deletion}), \textsc{Maximum $\mathcal{F}$-Minor-Free Induced Subgraph} (equivalently, \textsc{Minimum $\mathcal{F}$-Minor-Free Deletion}), and \textsc{Maximum Cubic Induced Subgraph}.

We believe that our techniques allow capturing even more problems than captured by \Cref{the:main} and \Cref{cor:kttfreealg}, and we will discuss this and other future directions in the Conclusions section at the end of the paper (\Cref{sec:conclusions}).

\paragraph{Organization of the paper.}
We start by presenting the sketch of the proof of \Cref{the:main} in \Cref{section:overview}.
Then, in \Cref{sec:preliminaries} we provide definitions and preliminary results on graphs, treewidth, and logic.
The proofs of \Cref{the:main} and \Cref{cor:kttfreealg} will be presented in full detail in \Cref{sec:bigsignatures,sec:innertree,sec:dynprog,sec:finalproof}, with \Cref{sec:bigsignatures,sec:innertree,sec:dynprog} each presenting one of the three parts of the algorithm, and \Cref{sec:finalproof} putting them together.
We conclude the paper with \Cref{sec:conclusions}.

\section{Overview}\label{section:overview}
In this section we present a sketch of the proof of \Cref{the:main}.
We first explain the high-level approach in \Cref{subsec:overview:highlevel} and state two key lemmas, and then in \Cref{subsec:overview:sketchyolovgen,subsec:overview:sketchinner} sketch the proofs of these key lemmas.

\subsection{High-level plan}
\label{subsec:overview:highlevel}
The algorithm of \Cref{the:main} works by dynamic programming on a given tree decomposition $\Tc$ with small $\mu(\Tc)$.
By previous work on computing such tree decompositions~\cite{DBLP:conf/soda/Yolov18,DBLP:conf/icalp/DallardFGKM24}, we can assume that we are given such a tree decomposition whose width is optimal within a constant factor.

\paragraph{Yolov's algorithm.}
We start by reviewing the dynamic programming algorithm of Yolov~\cite{DBLP:conf/soda/Yolov18} for \mwis parameterized by $\mu(\Tc)$.
A \emph{maximal independent set} in a graph $G$ is an inclusion-wise maximal set of vertices that is an independent set.
The core lemma that Yolov proves is the following.

\begin{lemma}[\cite{DBLP:conf/soda/Yolov18}]
\label{lem:overview:yolovlem}
Let $G$ be an $n$-vertex graph and $B \subseteq V(G)$ a set of vertices.
There are at most $n^{\OO(\mu(B))}$ different sets $I \cap B$, where $I$ is a maximal independent set of $G$, and they can be enumerated in $n^{\OO(\mu(B))}$ time.
\end{lemma}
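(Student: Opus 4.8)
The plan is to prove the bound via a bounded-search-tree procedure that enumerates a superset of $\{I\cap B : I\text{ a maximal independent set of }G\}$, and then to bound the number of leaves of the tree by $n^{\OO(\mu(B))}$. The running time follows because the tree has depth at most $n$ and every node is processed in polynomial time, and the number of distinct traces $I\cap B$ is at most the number of leaves, which we deduplicate at the end.

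The branching I would use maintains an independent set $S\subseteq V(G)$, thought of as a partial commitment to a maximal independent set $I\supseteq S$: a vertex of $B\cap S$ is committed into $I$ and a vertex of $B\cap N(S)$ is committed out of $I$, so the ``unresolved'' part of $B$ is $U=B\setminus N[S]$. While $U\neq\emptyset$, pick a vertex $v\in U$ (the choice of $v$ will matter). In any maximal independent set $I\supseteq S$, either $v\in I$ or $v$ has a neighbour in $I$, and in both cases the corresponding vertex lies in $N[v]\setminus N[S]$ (it cannot lie in $N[S]$); so branch into the $|N[v]\setminus N[S]|\le n$ cases ``add $x$ to $S$'', one per $x\in N[v]\setminus N[S]$. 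Each branch keeps $S$ independent (since $x\notin N[S]$) and makes $U$ strictly smaller (since $v$ leaves it), so each branch terminates with $B\subseteq N[S]$, at which point we output $S\cap B$. Completeness (every trace is produced) follows by running the procedure guided by a fixed $I$. Soundness (every output is a genuine trace) follows because any final $S$ extends to a maximal independent set $I$, and then $I\cap B=S\cap B$: a vertex of $(I\cap B)\setminus S$ would lie in $N(S)\subseteq N(I)$, contradicting independence. (Two reformulations are convenient to record: $I\cap B=B\setminus N(I)$, and ``$J$ is a trace iff $J$ is a maximal independent set of $G[B\setminus N(W)]$ for some independent $W\subseteq V(G)\setminus B$''.)

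For the leaf count the target is that the subtree rooted at a state with unresolved part $U$ has at most $n^{\OO(\mu(U))}$ leaves, proved by induction on $|U|$. The base case $\mu(U)=0$ means no edge of $G$ is incident to $U$, so every picked vertex is isolated and has exactly one branch, and the subtree is a single path. For the inductive step one wants to choose $v\in U$ well --- for instance as an endpoint of a maximum induced matching of $G$ all of whose edges meet $U$ --- so that in each child the potential $\mu(U\setminus N[x])$ has decreased, which together with the branching factor $n$ gives the bound. A branching with $\ge 2$ children occurs exactly when the picked $v$ has an unresolved neighbour $y$, and the guiding intuition is that such pairs $(v,y)$ are the edges of an induced matching meeting $B$ (through $v$): by construction $v$ and $y$ are non-adjacent to, and distinct from, everything already placed in $S$, which is what keeps the extracted matching induced as it grows; hence at most $\OO(\mu(B))$ such events can occur.

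The hard part is that the potential need not strictly decrease at a single branching node. Already for $G=C_5$ with $B=V(G)$, at every vertex $v$ and every $x\in N[v]$ one has $\mu(V(G)\setminus N[x])=\mu(V(G))=1$, although over two consecutive branches the potential does drop. So the inductive step cannot be a naive one-step decrease; it must be amortized (a drop within a bounded number of levels), which in turn forces a careful choice of the branching vertex. An intertwined difficulty is that the crude branching over-counts on dense substructures --- if $B$ is a clique with a private pendant at each vertex there are only $|B|+1$ traces but the search tree already has $\Theta(|B|^2)$ leaves, and worse stacking must be ruled out --- so the accounting must be global rather than per root-to-leaf path: the tree is forced to be ``caterpillar-like'' (only polynomially many leaves are spawned) exactly where $B$ is dense, and it can be genuinely bushy only where the branching vertices are spread out, in which case they together with their private witnesses exhibit an induced matching meeting $B$ of size comparable to the branching depth, so $\mu(B)$ is correspondingly large. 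Quantifying this trade-off --- equivalently, bounding $\sum_{\text{internal nodes}}(\#\text{children}-1)$ by $n^{\OO(\mu(B))}$ --- is where the real work lies.
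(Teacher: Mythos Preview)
The paper does not prove this lemma; it is cited from Yolov, with only the remark that the argument is ``short but not very easy'' and adapts the Balas--Yu bound on the number of maximal independent sets. So there is no in-paper proof to compare against.

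Your branching framework is the standard one and your completeness and soundness arguments are fine. But what you have written is not a proof, and you say so yourself: the entire content of the lemma is the leaf bound, and your last sentence is that ``quantifying this trade-off\ldots is where the real work lies.'' Everything before that line is setup. You correctly diagnose both obstructions --- that the potential $\mu(U)$ need not drop after one branching step (your $C_5$ example) and that the accounting cannot be path-local (your clique-with-pendants example) --- and your intuition about where the tree is caterpillar-like versus bushy is sound. But diagnosing the obstructions is not the same as overcoming them: no working potential, charging scheme, or inductive invariant is actually supplied. As it stands, the proposal is a plan together with an honest list of why the plan is hard to execute; the combinatorial argument that closes the gap --- precisely the content of the cited Balas--Yu/Yolov proofs --- is absent, and you would need to reproduce it (or find an alternative) for this to count as a proof.
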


The proof of \Cref{lem:overview:yolovlem} is short but not very easy.
It is a modification of an earlier proof establishing that the number of maximal independent sets of a graph $G$ is at most $n^{\OO(\mu(G))}$~\cite{DBLP:journals/networks/BalasY89}.

By \Cref{lem:overview:yolovlem},  for each bag $B = \bag(t)$ of the input tree decomposition $\Tc = (T,\bag)$, we can in $n^{\OO(\mu(\Tc))}$ time enumerate a family $\fami(t)$ of subsets of $\bag(t)$, such that  $I \cap \bag(t) \in \fami(t)$ for any optimal solution $I$.
Here, we use that every optimal solution $I$ is a maximal independent set, which can be assumed after deleting all vertices with weight $\le 0$.

Now, performing dynamic programming to find an optimal solution is not difficult.
Assume that $\Tc = (T,\bag)$ is rooted, and denote by $\subtree(t) \subseteq V(G)$ the union of the bags of the subtree of $\Tc$ rooted at $t$.
In a state of the dynamic programming we store, for each pair $t \in V(T)$ and $S \in \fami(t)$, a maximum weight independent set $I$ of $G$ such that  (1) $I \subseteq \subtree(t)$, (2) $I \cap \bag(t') \in \fami(t')$ for all descendants $t'$ of $t$, and (3) $I \cap \bag(t) = S$.
Assuming that $T$ is a binary tree, the transitions to compute all of the states are straightforward.
Since there is an optimal solution $I$ satisfying $I \cap \bag(t) \in \fami(t)$ for all $t$, this dynamic programming finds an optimal solution in $n^{\OO(\mu(\Tc))}$ time.

\paragraph{Analogue of Yolov's lemma.}
The first ingredient of our algorithm is a statement similar to \Cref{lem:overview:yolovlem}.
Let $(G, \we, \Phi, w)$ be an instance of \mwisbt, and let us say that $X \subseteq V(G)$ is \emph{the optimal solution} to $(G,\we, \Phi, w)$ if $X$ is a maximum-weight solution to $(G, \we, \Phi, w)$, and furthermore, among maximum-weight solutions it is the lexicographically  largest (subject to some fixed vertex ordering of $G$).
We show the following lemma.

\begin{restatable}[Informal version of \Cref{lem:bagsignatures}]{lemma}{lemoverviewyolovgen}
\label{lem:overview:yolovgen}
Given an instance $(G, \we, \Phi, w)$ of \mwisbt, and a set of vertices $B \subseteq V(G)$ with $\mu(B) \le k$, we can in $f(k, w, |\Phi|) \cdot n^{\OO(kw)}$ time, where $f$ is a computable function, enumerate a family $\fami$ of subsets of $B$ such that  if $X$ is the optimal solution to $(G, \we, \Phi, w)$, then $X \cap B \in \fami$.
\end{restatable}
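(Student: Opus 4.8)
I would prove \Cref{lem:overview:yolovgen} as a ``fattened'' version of the Balas--Yegneswaran-style counting underlying Yolov's \Cref{lem:overview:yolovlem}: rather than enumerating traces of a maximal \emph{independent set}, I would enumerate traces of an induced subgraph of treewidth at most $w$, guessing the solution one width-$(w{+}1)$ bag at a time. Fix the optimal solution $X$ and a tree decomposition $\mathcal D=(T',\beta)$ of $G[X]$ of width at most $w$. Since $G[X]\models\Phi$ and $\tw(G[X])\le w$, the standard translation of $\CMSO_2$ into tree automata gives an automaton $\autom$, of size bounded by a computable function of $w$ and $|\Phi|$, whose run on a suitably labelled $\mathcal D$ witnesses $G[X]\models\Phi$. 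The family $\fami$ returned by the algorithm will be the set of leaves of a branching procedure that incrementally reconstructs the pair consisting of $X\cap B$ and the restriction of $(\mathcal D,\autom\text{-states})$ to a neighbourhood of $B$; at the end we forget everything except the subset of $B$. The argument is local to $B$: it uses only $\mu(B)\le k$, not a global bound on $\treemu(G)$.

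\textbf{The combinatorial input from $\mu(B)\le k$.} The key fact is the following. Let $M$ be a maximum induced matching of $G$ that is hit by $B$; then $|V(M)|\le 2k$, and every vertex of $B\setminus N_G[V(M)]$ has all of its neighbours inside $N_G[V(M)]$ — otherwise an edge of $G$ with both endpoints outside $N_G[V(M)]$ and one endpoint in $B$ could be appended to $M$. In particular $B\setminus N_G[V(M)]$ is independent in $G$; and, more importantly, this ``$B$ is hit by no long induced matching'' property is exactly what bounds the \emph{depth} of the branching: a run of the procedure that keeps committing bags genuinely interacting with $B$ will, after $\OO(k)$ steps, have exposed an induced matching of $G$ of size larger than $k$ hit by $B$, a contradiction. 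Hence the depth is $\OO(k)$, while each step guesses a bag of at most $w+1$ vertices together with an $\autom$-transition — at most $n^{\OO(w)}\cdot|\autom|$ choices — giving $|\fami|\le f(k,w,|\Phi|)\cdot n^{\OO(kw)}$ and the same running time.

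\textbf{The branching procedure and correctness.} A state of the recursion records (roughly) a set $A\subseteq B$ already committed to $X$, a partial width-$w$ tree decomposition of the committed part with its boundary bags and $\autom$-states, and the still-undecided part $B'\subseteq B$. A step either stops and outputs $A$, or picks a next bag $S$ of at most $w+1$ vertices to append to the current partial decomposition, commits $S\cap B$, updates the partial decomposition and the $\autom$-states, and recurses. To see that some branch reconstructs $X\cap B$, one reads off from $\mathcal D$ a canonical ``peeling'' of $G[X\cap B]$ into bags — using the lexicographic maximality in the definition of \emph{the} optimal solution to fix the order of peeling and which vertices get included, removing genuine nondeterminism — and checks that every peeling step is realisable as a branching step making \emph{progress}, i.e.\ adding at least one new edge of $G[X]$ incident to $B$ that still forms an induced matching together with all previously committed such edges; by the previous paragraph there are only $\OO(k)$ such steps.

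\textbf{Main obstacle.} The technical heart — and the step I expect to be hardest — is making the peeling/progress argument precise: defining exactly how the procedure ``commits a bag'' of a tree decomposition of a graph that is not yet known, proving that the true optimal trace lies on some branch (this is where lexicographic maximality has to be exploited carefully), and, crucially, establishing the depth bound, i.e.\ that a branch staying relevant to $B$ for more than $\OO(k)$ steps can be converted into an induced matching of $G$ of size $>k$ hit by $B$. Threading the $\CMSO_2$ automaton $\autom$ through an incrementally built partial decomposition so that the final candidate provably corresponds to a graph satisfying $\Phi$ is additional bookkeeping, but it only costs a multiplicative factor depending on $w$ and $|\Phi|$ and does not affect the exponent of $n$.
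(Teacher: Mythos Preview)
Your branching procedure has a genuine gap: the set it outputs has size at most $O(kw)$, but $X\cap B$ can be arbitrarily large. With depth $O(k)$ and at most $w+1$ vertices committed per step, the committed set $A$ you return satisfies $|A|\le O(kw)$; yet already for $w=0$ (where $X$ is a maximum independent set) one can have $|X\cap B|=\Theta(n)$. You note that $B\setminus N_G[V(M)]$ is independent, but you never explain how to decide \emph{which} of its vertices belong to $X$; the line ``commits $S\cap B$, updates the partial decomposition and the $\autom$-states'' does not produce these vertices, and your use of lexicographic maximality only canonicalises the peeling order---it does nothing to pin down this large independent remainder. The progress bound is also shaky: you claim every step ``staying relevant to $B$'' contributes a new edge to an induced matching hit by $B$, but bags of a width-$w$ decomposition of $G[X]$ that intersect $B$ need not do so, so the depth bound is not established.

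This independent remainder is precisely the hard part, and the paper's proof is organised around it rather than around branching. The paper first guesses a vertex cover $C\subseteq X$ of size $O(kw)$ of the edges of $G[X]$ incident to $B$ (this corresponds, roughly, to what your branching would discover), then uses Yolov's \Cref{lem:overview:yolovlem} to enumerate a maximal independent set $S$ of $G[B]$ with $(X\cap B)\setminus C\subseteq S$, and guesses a further set $D\subseteq X\setminus C$ of size $O(kw)$ whose neighbourhood kills the vertices of $S$ that have a neighbour in $X\setminus C$. The remaining ``dangling'' vertices $v\in S\setminus(N(D)\cup C)$ all satisfy $N(v)\cap X\subseteq C$; they are grouped by the triple $(\mathrm{sign}(\we(v)),\,N(v)\cap C,\,\text{$\autom$-state of }v)$ into $2^{O(kw)}\cdot|Q|$ classes. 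A pumping argument on the automaton $\autom$, together with the weight-then-lexicographic optimality of $X$, shows that within each class either at most $\max(w+1,|Q|)$ vertices are chosen or all but at most $|Q|$ are, and moreover they form a suffix in the weight order. This is where optimality of $X$ does real work, and your proposal has no counterpart to it.
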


In particular, if $(G, \we, \Phi, w)$ describes the \mwis problem with positive vertex-weights, then \Cref{lem:overview:yolovgen} follows from \Cref{lem:overview:yolovlem}.
The proof of \Cref{lem:overview:yolovgen} builds on ideas from~\cite{DBLP:conf/esa/LimaMMORS24}, where a similar lemma was proven for \mwif, but requires also significant new ideas.

For \mwis, the rest of the algorithm after \Cref{lem:overview:yolovlem} was easy.
This is not the case for \mwisbt, and was not the case even for \mwif.
In particular, for \mwif, one needs to check during the dynamic programming that we are indeed building an acyclic induced subgraph.
This is done in~\cite{DBLP:conf/esa/LimaMMORS24} by tracking the connected components of the partial solution restricted to the current bag, analogously to the dynamic programming for \mwif parameterized by treewidth.
As the intersection of the partial solution with the bag can have unbounded size, this requires an additional observation that all but a bounded number of vertices in the intersection must have degree at most $1$ in the solution.

For \mwisbt, such ad hoc ideas will not work, as we have to capture the full expressivity of $\CMSO_2$ on the partial solution that we are building, with our only tool in hand being that the solution has bounded treewidth.
This leads to the next main ingredient of our algorithm.

\paragraph{Inner tree decomposition.}
The most commonly used approach for testing if a graph of bounded treewidth satisfies a $\CMSO_2$ formula is dynamic programming over a tree decomposition of bounded width.
(It is also possible to test such properties with help of graph reduction~\cite{ArnborgCPS93}, but we will not build upon that here.)
Therefore, a natural idea is to define an ``inner tree decomposition'' for the solution $X$ inside the tree decomposition $\Tc$, and evaluate the dynamic programming for $\CMSO_2$ as a part of the overall dynamic programming scheme.
In particular, given a tree decomposition $\Tc = (T,\bag)$ of $G$ and a hypothetical solution $X \subseteq V(G)$ with $\tw(G[X]) \le w$, we could consider the tree decomposition $\Tc_X = (T, \bag_X)$, with $\bag_X(t) = \bag(t) \cap X$, of $G[X]$, and evaluate the dynamic programming for $\CMSO_2$ on $G[X]$ along $\Tc_X$ at the same time as finding $X$ itself.
This approach was indeed used in the context of the parameter ``tree-independence number'' by~\cite{DBLP:conf/esa/LimaMMORS24}.
However, it clearly fails in the context of $\treemu$, as the size of $\bag_X(t) = \bag(t) \cap X$ can be unbounded even when $X$ is an independent set.

Nevertheless, we manage to make the idea work with a more complicated definition of $\Tc_X$.
Our definition hits the sweet spot in that it is simple enough to consider the tree decomposition $\Tc_X$ on top of doing dynamic programming on $\Tc$, while still achieving that $\Tc_X$ has bounded width.
In particular, we prove the following lemma.

A \emph{nice tree decomposition} is a rooted tree decomposition where every node $t$ either (1) has two children $c_1$, $c_2$ with $\bag(c_1) = \bag(c_2) = \bag(t)$, (2) has one child $c$ with $|\bag(c) \setminus \bag(t)| + |\bag(t) \setminus \bag(c)| \le 1$, or (3) has no children and $\bag(t) = \emptyset$.

\begin{restatable}[Informal version of \Cref{lem:virtdecomp}]{lemma}{lemoverviewinnerdecomp}
\label{lem:overview:innerdecomp}
Let $G$ be a graph, $\Tc = (T,\bag)$ a nice tree decomposition of $G$ with $\mu(\Tc) \le k$, and $X \subseteq V(G)$ a set with $\tw(G[X]) \le w$.
There exists a partition $(X_1, X_2)$ of $X$ and a bag-function $\bag_X \colon V(T) \to 2^{X}$, such that 
\begin{itemize}
\item $\Tc_X = (T, \bag_X)$ is a tree decomposition of $G[X]$ of width $\OO(k w^2)$,
\item $\bag(t) \cap X_1 \subseteq \bag_X(t) \cap X_1 \subseteq \subtree_{\Tc}(t) \cap X_1$ for all $t \in V(T)$,
\item $X_2$ is an independent set of $G$, and
\item for each $v \in X_2$, the only node $t \in V(T)$ with $v \in \bag_X(t)$ is the closest node $t$ to the root of $T$ with $v \in \bag(t)$.
\end{itemize}
\end{restatable}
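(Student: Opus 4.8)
The plan is to put $H := G[X]$ (so $\tw(H) \le w$, hence $H$ is $(w{+}1)$-degenerate), let $X_1$ be a minimum vertex cover of $H$, and let $X_2 := X \setminus X_1$; then $X_2$ is an independent set of $G$ for free, since the edges of $G$ with both endpoints in $X$ are exactly the edges of $H$. Everything else rests on one combinatorial estimate, which I would isolate first: for every $D \subseteq V(H)$ with $\mu(D) \le k$ we have $|X_1 \cap D| = \OO(kw)$. Applied with $D = \bag(t) \cap X$, whose $\mu$ is at most $\mu(\bag(t)) \le \mu(\Tc) \le k$, this gives $|\bag(t) \cap X_1| = \OO(kw)$ at every node $t$, which is what makes the intersection of the hypothetical solution with each bag ``small after deleting an independent part''.

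To prove the estimate, the engine is the folklore fact that in a $d$-degenerate graph every matching of size $m$ contains an induced submatching of size at least $m/(4d+1)$: the conflict graph on the $m$ matching edges has at most $2dm$ edges (each conflict comes from a non-matching edge among the $2m$ matched vertices, which induce a $d$-degenerate graph spanning $< 2dm$ such edges), so Caro--Wei gives an independent set — i.e.\ an induced submatching — of the claimed size. In particular a graph of treewidth $\le w$ with induced matching number $\le k$ has a vertex cover of size $\OO(wk)$. Now fix $D$ with $\mu(D) \le k$. Every vertex of the minimum vertex cover $X_1$ has a neighbour in $X_2$. Peel off from $X_1 \cap D$ a vertex cover of $H[X_1 \cap D]$ of size $\OO(wk)$ (its induced matching number is $\le \mu(D) \le k$); the remainder $W'$ is independent in $H$. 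For $W'$ Hall's condition into $X_2$ holds, since a deficient independent set $S \subseteq W'$ would make $(X_1 \setminus S) \cup (N_H(S) \cap X_2)$ a strictly smaller vertex cover of $H$ (the only edges needing a check, those inside $S$, are absent as $S$ is independent). So there is a matching of $W'$ into $X_2$, every edge of which touches $D$; extracting an induced submatching bounds $|W'|/(4w+5)$ by $\mu(D) \le k$, and hence $|X_1 \cap D| = \OO(wk)$.

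For the inner decomposition: normalise $\Tc$ so its root bag is empty, so by niceness each node is the node closest to the root containing $v$ for at most one vertex $v$ (its \emph{anchor}). Put every $v \in X_2$ into $\bag_X$ only at its anchor. Put every $v \in X_1$ into $\bag_X(t)$ for every $t$ with $v \in \bag(t)$, and in addition for every $t$ on the root-path strictly above the top of that subtree, up to the anchor of each $X_2$-neighbour $u$ of $v$ lying there (the ancestors of a node form a path, so this is one well-defined connected upward extension). That $\Tc_X := (T,\bag_X)$ is a tree decomposition of $G[X]$ with the stated containments and the $X_2$-singleton property is then a direct check: the vertex and subtree conditions hold by construction; an edge inside $X_1$ is covered at a common $\bag$-node; and an edge $uv$ with $u \in X_1$, $v \in X_2$ is covered at the anchor of $v$ precisely because $u$ was extended there, one verifying that this node lies above a $\bag$-node of $u$ so that $\bag_X(t)\cap X_1 \subseteq \subtree_\Tc(t)\cap X_1$ is preserved.

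\textbf{The crux, and the step I expect to be hardest, is bounding the width of $\Tc_X$.} Each $\bag_X(t)$ is $\bag(t)\cap X_1$ (size $\OO(kw)$), plus the at most one vertex anchored at $t$, plus the $X_1$-vertices \emph{extended through} $t$; the last set is the problem. One shows every such vertex has an $X_2$-neighbour lying in $\bag(t)$ whose anchor is above $t$, so all the relevant edges touch $\bag(t)$, and then wants to bound the set via a maximal-induced-matching argument (such a matching touches $\bag(t)$, hence has $\le k$ edges) together with a degree bound on the $X_2$-side. The genuine difficulty is that the $X_1$-endpoints can have unbounded degree in $H$, so extended vertices cannot be charged to a bounded neighbourhood; resolving this seems to require (i) arranging that $X_2$-vertices have degree $\le w+1$ in $H$ — e.g.\ by instead placing every $H$-vertex of degree $> w+1$ into $X_1$ and taking a minimum vertex cover of the rest, which forces re-proving estimate (A) for this modified $X_1$ (the extra term, the number of high-degree vertices in a bag, is again controlled by $\mu(\bag(t)) \le k$ together with bounded treewidth) — and (ii) a careful charging of the extended vertices exploiting that all their interactions touch $\bag(t)$, possibly replacing each relevant vertex by an $\OO(w)$-size piece of an auxiliary width-$w$ decomposition of $H$ to enforce connectivity; this last point is plausibly where the extra factor of $\OO(w)$, turning $\OO(kw)$ into the claimed $\OO(kw^2)$, enters.
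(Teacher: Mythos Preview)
Your plan converges to essentially the paper's proof, but by a slightly longer route, and your last paragraph both over- and under-shoots.

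The paper does not use a minimum vertex cover. It sets a degree threshold $2(w{+}1)$ in $H=G[X]$, lets $X_H$ be the high-degree vertices and $X_L$ the rest, and then takes $X_1 = X_H \cup X_{LL}$ where $X_{LL}$ is the set of low-degree vertices with at least one low-degree neighbour; $X_2$ is what remains. Thus $X_2$ consists of low-degree vertices all of whose $H$-neighbours are high-degree. Your ``fix (i)'' is exactly this move (up to the choice between a minimum vertex cover of $H[X_L]$ and the trivial cover $X_{LL}$), and it is the whole fix: once $X_2$-vertices have degree $\le O(w)$, the extended vertices at $t$ are bounded by the very argument you already sketched. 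Every extended vertex has an $X_2$-neighbour in $\bag(t)$; greedily match extended vertices into $\bag(t)\cap X_2$ (each $X_2$-vertex absorbs at most $O(w)$ of them, so the matching has size at least $|\text{extended}|/O(w)$); then your induced-submatching extraction in a treewidth-$w$ graph gives an induced matching touching $\bag(t)$ of size $|\text{extended}|/O(w^2) \le k$. No auxiliary decomposition of $H$ is needed, and your worry that the $X_1$-endpoints have unbounded degree is a red herring: the greedy matching only needs bounded degree on the side you match \emph{into}. The paper bounds $|\bag(t)\cap X_H|$, $|\bag(t)\cap X_{LL}|$, and the extended set by three parallel applications of this same matching-then-thin-to-induced trick; for $X_H$ it uses a Hall step much like yours.

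Your Hall-based bound $|X_1 \cap D| = O(kw)$ for a minimum vertex cover $X_1$ is a genuinely nice alternative to the paper's two-term bound, and it would survive the modification to $X_1 := X_H \cup \text{MVC}(H[X_L])$ (the $X_H$ part contributing an extra $O(kw^2)$ exactly as in the paper). So your proposal is correct once you drop (ii) and simply carry out (i); the only thing the paper buys with its cruder choice $X_{LL}$ in place of a minimum vertex cover is a uniform, Hall-free analysis.
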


Furthermore, the actual statement of \Cref{lem:overview:innerdecomp} as \Cref{lem:virtdecomp} provides several further properties, assuming that $\Tc$ is a ``supernice'' tree decomposition.

Armed with \Cref{lem:overview:yolovgen,lem:overview:innerdecomp}, the dynamic programming algorithm for solving \mwisbt is rather straightforward but very tedious due to numerous cases and a lot of bookkeeping.
The main idea of this part of the paper is the design of the inner tree decomposition $\Tc_X = (T,\bag_X)$ in \Cref{lem:overview:innerdecomp}.
In particular, it is essential to choose $\Tc_X = (T,\bag_X)$ to have structure suitable for understanding it in the middle of dynamic programming while constructing $X$, but at the same time we have to keep its width bounded to evaluate $\CMSO_2$ on $G[X]$.

The actual dynamic programming algorithm keeps track of (1) the intersection $X \cap \bag(t)$ of the partial solution $X$ with $\bag(t)$, which must be in $\fami(t)$ given by \Cref{lem:overview:yolovgen}, (2) the partition of $X \cap \bag(t)$ to $X_1 \cap \bag(t)$ and $X_2 \cap \bag(t)$, which is easy to track as $|X_1 \cap \bag(t)| \le |\bag_X(t)| \le \OO(k w^2)$, and (3) the state of the dynamic programming that evaluates whether $G[X]$ satisfies the $\CMSO_2$-formula $\Phi$ using the inner tree decomposition $\Tc_X$.
We need to pay attention to make sure that the tree decomposition $\Tc_X$ is indeed a tree decomposition of $G[X]$, for which it suffices to make sure that (1) $X_2$ is indeed an independent set and (2) we do not ``forget'' vertices in $X_1$ before their neighbors in $X_2$ have made their appearance in $\Tc_X$.

\subsection{Proof sketch of \Cref{lem:overview:yolovgen}}
\label{subsec:overview:sketchyolovgen}
Let us re-state \Cref{lem:overview:yolovgen} and sketch its proof.

\lemoverviewyolovgen*
\begin{proof}[Proof sketch]
The general approach to proving \Cref{lem:overview:yolovgen} is to define a ``signature'' that uniquely defines $X \cap B$, such that  there are at most $f(k, w, |\Phi|) \cdot n^{\OO(kw)}$ different signatures.
In the first part of this proof we define a ``basic signature'' that only uses the fact that $G[X]$ has bounded treewidth (but not the optimality $X$), which defines all vertices in $X \cap B$ except some well-structured ``dangling vertices''.
In the second part, we use $\Phi$ and the optimality of $X$ to also include the information about the dangling vertices in the signature.

The first step towards the proof of \Cref{lem:overview:yolovgen} is to observe that the set of edges of $G[X]$ incident to $B$ must have bounded vertex cover.
Indeed, if the graph $G[X]^B$ consisting of the edges of $G[X]$ incident to $B$ would have vertex cover number $> 2 \cdot k \cdot (w+1)$, then it would contain a matching $M$ of size $> k \cdot (w+1)$, which by $\tw(G[X]) \le w$ would contain a subset $M' \subseteq M$ of size $|M'|>k$ that is an induced matching, which would contradict $\mu(B) \le k$.
Therefore, the first part of the signature is a set $C \subseteq X$ of size $|C| \le \OO(kw)$ that is a vertex cover of $G[X]^B$.
Note that vertices of $C$ can be also outside of $B$.

The vertices $X \cap B \setminus C$ then form an independent set.
The second part of the signature is a maximal independent set $S$ of $G[B]$ such that  $X \cap B \setminus C \subseteq S$.
By \Cref{lem:overview:yolovlem}, there are at most $n^{\OO(k)}$ such maximal independent sets $S$.

The knowledge of $C$ and $S$ determines that $X \cap B$ is a superset of $C \cap B$ and a subset of $S \cup (C \cap B)$.
An obvious reason for why a particular vertex in $S$ should not be in $X \cap B$ is that its inclusion to $X$ would cause $C$ to not be a vertex cover of $G[X]^B$.
Therefore, we need to fix some vertices in $X \setminus C$ to ``kick out'' their neighbors from $S$.
Let $S_D \subseteq S$ be the vertices in $S$ that have a neighbor in $X \setminus C$, and thus cannot be in $X$, in particular, $S_D = S \cap N(X \setminus C)$.
It turns out, we can choose a subset $D \subseteq X \setminus C$ of size $|D| \le \OO(kw)$ such that  $S_D = S \cap N(D)$.
This is proven by an argument that uses $\tw(G[X]) \le w$ and $\mu(B) \le k$.

Now, the triple $(C, S, D)$ ``tells'' us that $X \cap B$ is a superset of $C \cap B$ and a subset of $(S \setminus N(D)) \cup (C \cap B)$.
However, it does not tell whether the vertices in $S \setminus (N(D) \cup C)$ are in $X \cap B$ or not.
We call such vertices \emph{dangling vertices}.
We observe that for each dangling vertex $v$ it holds that $N(v) \cap X \subseteq C$.
It remains to encode the information about the dangling vertices in the signature.
For this, we use $\Phi$ and the optimality of $X$.

We group dangling vertices $v$ based on the neighborhood $N(v) \cap X \subseteq C$ into at most $2^{|C|} \le 2^{\OO(kw)}$ groups, and encode each group independently.
We also further divide the groups based on whether the weight $\we(v)$ is negative or non-negative.
We observe that as each vertex in a group has the same neighborhood into $X$, for the satisfaction of $\Phi$ it only matters how many vertices we choose from the group, as different ways of choosing the same number of vertices yield isomorphic graphs $G[X]$ in the end.
Therefore, the optimal solution must choose some number of the highest-weight vertices from the group.
Another observation is that by using the finite-stateness of $\CMSO_2$ on graphs of bounded treewidth, we can show that the optimal solution should choose either a bounded number of vertices from a group, or all but at most a bounded number of vertices.
Here, ``bounded number'' is some bound depending on a computable function of $|\Phi|$ and $w$.
Therefore, for each group, we can encode which vertices are chosen by giving only one bit telling whether we choose almost none or almost all, and one number that is bounded by a function of $|\Phi|$ and $w$.
In total, we manage to encode all dangling vertices by an amount of information depending only on $k$, $|\Phi|$, and $w$.
\end{proof}

\subsection{Proof sketch of \Cref{lem:overview:innerdecomp}}
\label{subsec:overview:sketchinner}
Let us re-state \Cref{lem:overview:innerdecomp} and sketch its proof.

\lemoverviewinnerdecomp*
\begin{proof}[Proof sketch]
First, we classify the vertices in $X$ to high-degree and low-degree vertices.
We let $X_H \subseteq X$ consist of the vertices in $X$ that have degree $>2 (w+1)$ in $G[X]$, and $X_L = X \setminus X_H$ the other vertices.
Then, we obtain the partition $(X_1,X_2)$ of $X$ by letting $X_1$ consist of $X_H$ and the vertices in $X_L$ that have neighbors in $X_L$, and $X_2$ consist of the vertices in $X_L$ that have no neighbors in $X_L$.

After fixing $X_1$, $X_2$, and the tree decomposition $\Tc = (T,\bag)$, the tree decomposition $\Tc_X = (T,\bag_X)$ is essentially uniquely defined by the lemma statement.
Denote by $\topnode_{\Tc}(v) \in V(T)$ the node $t$ closest to the root with $v \in \bag(t)$.
For each vertex $v \in X_2$, the unique $t$ with $v \in \bag_X(t)$ is defined to be $\topnode_{\Tc}(v)$.

By the lemma statement, $\bag(t) \cap X_1 \subseteq \bag_X(t) \cap X_1$, so each vertex in $X_1$ is at least in the same bags in $\Tc_X$ as in $\Tc$.
This ensures that the edge-condition between vertices in $X_1$ is trivially satisfied.
However, to satisfy the edge-condition between vertices in $X_1$ and $X_2$, some of the vertices in $X_1$ need to be inserted to additional bags.
In particular, consider $u \in X_1$ and $v \in X_2$ with $uv \in E(G)$.
If $u \in \bag(\topnode_{\Tc}(v))$, then we do not need to do anything, as $u,v \in \bag_X(\topnode_{\Tc}(v))$ must hold by the definitions.
Otherwise, it holds that $\topnode_{\Tc}(u)$ is a descendant of $\topnode_{\Tc}(v)$, so we need to insert $u$ to all bags on the path between $\topnode_{\Tc}(u)$ and $\topnode_{\Tc}(v)$, all of which are ancestors of $\topnode_{\Tc}(u)$.
Doing this for all such pairs $u,v$, we end up ``extending'' each vertex $u \in X_1$ up to some ancestor of $\topnode_{\Tc}(u)$.

It remains to analyze the width of $\Tc_X$.
The fact that $\Tc$ is nice implies that $|\bag_X(t) \cap X_2| \le 1$, so we are interested in $|\bag_X(t) \cap X_1|$.
We consider three cases of vertices in $\bag_X(t) \cap X_1$: Those in $X_H \cap \bag(t)$, those in $X_L \cap X_1 \cap \bag(t)$, and those inserted because they have neighbor in $X_2 \cap \bag(a)$, where $a$ is an ancestor of $t$.

We start by showing $|X_H \cap \bag(t)| \le \OO(k w^2)$.
Because $\tw(G[X])\le w$, there is a subset $X_H' \subseteq X_H \cap \bag(t)$ of size $|X_H'| \ge |X_H \cap \bag(t)|/(w+1)$ that is an independent set.
By using Hall's theorem with the fact that each vertex in $X_H'$ has degree $> 2(w+1)$ in $G[X]$, we obtain either a matching of size $|X_H'|$ between $X_H'$ and $N(X_H') \cap X$, or a subgraph of $G[X]$ with average degree $>w+1$.
The latter case would imply $\tw(G[X]) > w$, so we indeed get the matching.
However, by using that $\tw(G[X]) \le w$, for any matching $M$ we can find a subset $M' \subseteq M$ with $|M'| \ge |M|/(w+1)$ that is an induced matching.
Therefore, $|X_H'| > k \cdot (w+1)$ would imply that $\mu(\bag(t)) > k$, so it must be that $|X_H'| \le k \cdot (w+1)$ and thus $|X_H \cap \bag(t)| \le k \cdot (w+1)^2$.

We then show $|X_L \cap X_1 \cap \bag(t)| \le \OO(k w^2)$.
Each vertex in $X_L \cap X_1 \cap \bag(t)$ has a neighbor in $X_L$, but all vertices in $X_L$ have degree at most $2 (w+1)$ in $G[X]$.
It follows that we can find a matching $M$ of size $|M| \ge |X_L \cap X_1 \cap \bag(t)|/(2 (w+1)+1)$ in $G[X_L]$ with each of its edges incident to $X_L \cap X_1 \cap \bag(t)$.
By again turning it into an induced matching, we conclude that $|X_L \cap X_1 \cap \bag(t)| \le k \cdot (2 (w+1)+1) \cdot (w+1)$.

We then consider the vertices $X_1 \cap \bag_X(t) \setminus \bag(t)$, that were inserted to fix the edge condition between $X_1$ and $X_2$.
Each such vertex must have a neighbor in $\bag(t) \cap X_2$.
So it suffices to bound $|N(\bag(t) \cap X_2)| \le \OO(k w^2)$. Since every vertex from  $\bag(t) \cap X_2$ is of degree at most
 most $2 (w+1)$, we obtain a matching $M$ of size $|M| \ge |N(\bag(t) \cap X_2)|/(2 (w+1))$ between $\bag(t) \cap X_2$ and $N(\bag(t) \cap X_2)$.
Again, by turning it into an induced matching, we conclude that $|N(\bag(t) \cap X_2)| \le k \cdot (2 (w+1)) \cdot (w+1)$.
\end{proof}

\section{Preliminaries}\label{sec:preliminaries}

In this section, we introduce basic and advanced definitions, as well as some fundamental graph-theoretic lemmas.

\subsection{Basic definitions}
For two integers $a,b$, we denote by $[a,b]$ the set of integers $i$ with $a \le i \le b$.
We use $[a]$ as a shorthand for $[1,a]$.

A graph consists of a set of vertices $V(G)$ and a set of edges $E(G)$.
For a set $X \subseteq V(G)$, we denote by $G[X]$ the subgraph of $G$ induced by $X$.
The set of neighbors of a vertex $v$ is denoted by $N(v)$ and the closed neighborhood by $N[v] = \{v\} \cup N(v)$.
For a set of vertices $X$, we denote $N(X) = \bigcup_{v \in X} N(v) \setminus X$ and $N[X] = N(X) \cup X$.

A \emph{tree} is a connected acyclic graph.
The vertices of a tree are sometimes called \emph{nodes}.
A \emph{rooted tree} is a tree where one node is selected as the root.
A \emph{binary tree} is a rooted tree where each node has at most two children.
A node $t$ of a rooted tree $T$ is an \emph{ancestor} of a node $s$ if $t$ is on the unique path from $s$ to the root.
In that case, $s$ is a \emph{descendant} of $t$.
Note that every node is both a descendant and an ancestor of itself.

A \emph{tree decomposition} of a graph $G$ is a pair $\Tc = (T,\bag)$, so that
\begin{enumerate}
\item $V(G) = \bigcup_{t \in V(T)} \bag(t)$, (vertex-condition)
\item for each $uv \in E(G)$, there is $t \in V(T)$ so that $\{u,v\} \subseteq \bag(t)$, (edge-condition) and 
\item for each $v \in V(G)$, the set of nodes $\{t \in V(T) \mid v \in \bag(t)\}$ forms a connected subtree of $T$ (connectedness-condition). 
\end{enumerate}

A \emph{rooted/binary tree decomposition} is a tree decomposition where the tree $T$ is rooted/binary.
If $\Tc$ is a rooted tree decomposition and $v \in V(G)$, we denote by $\topnode_{\Tc}(v) \in V(T)$ the node of $T$ so that $v \in \bag(\topnode_{\Tc}(v))$ and subject to that, $\topnode_{\Tc}(v)$ is as close to the root as possible. For node $t\in V(T)$, we denote by $\subtree_{\Tc}(t)$ the set of vertices of $G$ belonging to the bags corresponding to the nodes of the subtree of $T$ rooted at $t$. 

The \emph{width} of a tree decomposition is $\max_{v \in V(T)} |\bag(t)|-1$, and the \emph{treewidth} of a graph $G$, $\tw(G)$, is the minimum width of a tree decomposition of it.

A \emph{matching} is a set of edges $M \subseteq E(G)$ so that no vertex is incident to more than one edge in $M$.
An \emph{induced matching} is a matching $M$, so that the edges $M$ are the only edges of the subgraph induced by the vertices incident to $M$.
For a set $X \subseteq V(G)$, we denote by $\mu(X)$ the maximum cardinality of an induced matching $M$, so that each $e \in M$ is incident to a vertex in $X$.
We also denote $\mu(G) = \mu(V(G))$.
The \emph{induced matching width} of a tree decomposition is $\mu(\Tc) = \max_{t \in V(T)} \mu(\bag(t))$, and the \emph{induced matching treewidth} of a graph $G$, denoted by $\treemu(G)$, is the minimum induced matching width of a tree decomposition of it.


\subsection{$\CMSO_2$}
The Counting Monadic Second Order Logic ($\CMSO_2$) is a logic in which many graph properties can be expressed.
It has tight and well-studied connections to treewidth~\cite{DBLP:journals/iandc/Courcelle90,DBLP:journals/apal/Seese91,DBLP:conf/lics/BojanczykP16}.

In $\CMSO_2$ the variables are individual vertices, individual edges, sets of vertices, and sets of edges.
It includes binary predicates for testing if a vertex is incident to an edge, for testing if a vertex is in a set of vertices, for testing if an edge is in a set of edges, and for testing equality.
Furthermore, it includes for any two integers $q$ and $r$ with $0 \le q < r$ a unary predicate that tests whether the cardinality of a set is $q$ modulo $r$.
It includes also the standard logical connectives $\wedge$, $\vee$, and $\lnot$, and the quantifiers $\forall$ and $\exists$ which can bind variables.
A $\CMSO_2$-sentence $\Phi$ is a formula without free variables that can be formed using this vocabulary.
The \emph{length} $|\Phi|$ of a $\CMSO_2$-sentence $\Phi$ is the number of symbols plus the sum of the integers $q$ and $r$ used in the modular counting predicates.
We refer to the book of Courcelle and Engelfriet~\cite{CEbook} for a more detailed introduction to $\CMSO_2$.

We note that we can also verify $\tw(G) \le w$ in $\CMSO_2$.

\begin{lemma}[\cite{DBLP:conf/icalp/LagergrenA91}]
\label{lem:cmsotwcheck}
There is an algorithm that, given an integer $w$, constructs a $\CMSO_2$-sentence $\Phi_w$ so that a graph $G$ satisfies $\Phi_w$ if and only if $\tw(G) \le w$.
\end{lemma}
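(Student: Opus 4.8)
The plan is to exploit the fact that the class $\mathcal{C}_w$ of graphs of treewidth at most $w$ is closed under taking minors, together with the existence of a \emph{finite and computable} obstruction set. Concretely, for each fixed $w$ one can compute a finite family $\mathcal{F}_w$ of graphs such that a graph $G$ satisfies $\tw(G) \le w$ if and only if $G$ contains no member of $\mathcal{F}_w$ as a minor. Finiteness of $\mathcal{F}_w$ follows from the Graph Minors theorem (or, more self-containedly, from the finite-congruence analysis of Lagergren and Arnborg); computability follows from the effective version of that analysis, which outputs the obstructions explicitly, or alternatively from combining a computable upper bound on the size of the obstructions with any treewidth decision procedure, so that $\mathcal{F}_w$ is found by exhaustively testing all graphs up to that size. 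Given $\mathcal{F}_w$, the sentence $\Phi_w$ is simply the conjunction, over all $H \in \mathcal{F}_w$, of the sentence ``$G$ does not contain $H$ as a minor''.

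It then remains to observe that ``$G$ contains $H$ as a minor'' is expressible in $\CMSO_2$ (indeed without using any modular counting predicates) by a sentence whose length depends only on $H$. Writing $V(H) = \{h_1, \dots, h_p\}$, one existentially quantifies vertex sets $B_1, \dots, B_p \subseteq V(G)$ and asserts: each $B_i$ is nonempty; the $B_i$ are pairwise disjoint; each induced subgraph $G[B_i]$ is connected (using the standard encoding of connectedness of a vertex set as the nonexistence of a partition of $B_i$ into two nonempty parts with no edge between them); and for every edge $h_i h_j \in E(H)$ there is an edge of $G$ with one endpoint in $B_i$ and the other in $B_j$. Since $H$ is fixed, all of these are finite conjunctions, so the whole ``$H$ is a minor of $G$'' sentence, and hence $\Phi_w$, has length bounded by a computable function of $w$. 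Correctness is immediate from the choice of $\mathcal{F}_w$: $G \models \Phi_w$ iff $G$ contains no $H \in \mathcal{F}_w$ as a minor iff $\tw(G) \le w$.

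The only genuinely nontrivial ingredient — and thus the main obstacle — is the existence of the finite and \emph{computable} obstruction set $\mathcal{F}_w$, which is precisely the content of the cited work; everything downstream of that is routine. It is worth noting why a purely ``direct'' approach is unattractive here: one would like $\Phi_w$ to existentially guess a width-$w$ tree decomposition of $G$, but the decomposition tree is not part of $G$, and there is no evident way to encode such a tree using only a bounded amount of information attached to the vertices and edges of $G$ itself, which is exactly what a bounded-length $\CMSO_2$ sentence is restricted to. The obstruction-set route sidesteps this by pushing all the unbounded structure into the (finite, $w$-dependent) family $\mathcal{F}_w$ computed once in advance.
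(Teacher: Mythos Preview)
Your proposal is correct and is essentially the standard argument underlying the cited reference. Note, however, that the paper does not actually prove this lemma: it is stated as a black-box citation to Lagergren and Arnborg, so there is no ``paper's own proof'' to compare against. Your write-up is a faithful expansion of why the cited result holds --- finite computable minor-obstruction set for $\tw \le w$, plus the routine $\CMSO_2$ encoding of ``$H$ is a minor'' --- and nothing more is needed here.
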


We did not state any bound for the running time of the algorithm or for $|\Phi_w|$, but by definition, they are bounded by computable functions of $w$.



\subsection{Tree decomposition automata}
In this paper, we do not directly manipulate $\CMSO_2$ in the dynamic programming, but instead use known results on \emph{tree decomposition automata} for verifying whether a graph satisfies a $\CMSO_2$-sentence.
The use of tree decomposition automata for formalizing dynamic programming on tree decompositions is a classic idea, especially in the context of $\CMSO_2$~\cite{DBLP:journals/jal/ArnborgLS91,DBLP:journals/iandc/Courcelle90,CEbook,DBLP:series/txcs/DowneyF13}.
We introduce our own definition of tree decomposition automaton that is convenient in the context of this paper, but note that the results from~\cite{DBLP:journals/jal/ArnborgLS91,DBLP:journals/iandc/Courcelle90} easily lift to our definitions.

A \emph{tree decomposition automaton} of width $\ell$ is a $5$-tuple $\autom = (\sigma^0, \sigma^1, \sigma^2, Q, F)$, where $Q$ is the set of \emph{states}, $F \subseteq Q$ is the set of \emph{accepting states}, and $\sigma^0$, $\sigma^1$, and $\sigma^2$ describe, informally speaking, how to compute the state of a node from the states of its children when it has, $0$, $1$, or $2$, children, respectively.
More formally,
\begin{itemize}
\item $\sigma^0$ is a function that takes as input a set $X$ of size at most $\ell+1$ and returns a state $\sigma^0(X) \in Q$,
\item $\sigma^1$ is a function that takes as input a state $q$ and two sets $X$, $Y$, each of size at most $\ell+1$, and returns a state $\sigma^1(q, X, Y) \in Q$, and
\item $\sigma^2$ is a function that takes as input two states $q_1$ and $q_2$, and three sets $X$, $Y$, $Z$, each of size at most $\ell+1$, and returns a state $\sigma^2(q_1, q_2, X, Y, Z)$.
\end{itemize}

The sets $X$, $Y$, and $Z$ taken by these functions are subsets of the vertex set $V(G)$ of the graph $G$ we are processing.
We assume (somewhat abusing the definitions), that the functions $\sigma^0$, $\sigma^1$, and $\sigma^2$ also receive the induced subgraphs $G[X]$, $G[Y]$, and $G[Z]$.
We also assume that the vertex set $V(G)$ is totally ordered, and the functions receive the orderings restricted to the sets $X$, $Y$, and $Z$.

On the algorithmic level, the functions $\sigma^0$, $\sigma^1$, and $\sigma^2$ are represented as Turing machines.
The \emph{evaluation time} of $\autom$, denoted by $\evaltime(\autom)$ is the maximum running time of these Turing machines.

The \emph{run} of $\autom$ on a binary tree decomposition $\Tc = (T,\bag)$ of width at most $\ell$ is the unique labeling $\run_{\autom} \colon V(T) \to Q$ so that
\begin{itemize}
\item if a node $t$ has no children, then $\run_{\autom}(t) = \sigma^0(\bag(t))$,
\item if a node $t$ has one child $c$, then $\run_{\autom}(t) = \sigma^1(\run_{\autom}(c), \bag(c), \bag(t))$, and
\item if a node $t$ has two children $c_1$, $c_2$, then $\run_{\autom}(t) = \sigma^2(\run_{\autom}(c_1), \run_{\autom}(c_2), \bag(c_1), \bag(c_2), \bag(t))$.
\end{itemize}

We say that $\autom$ accepts the pair $(G, \Tc)$ if $\run_{\autom}(r) \in F$ for the root $r$ of $T$.

Now we can state the well-known result that dynamic programming for $\CMSO_2$ can be expressed as a tree decomposition automaton.

\begin{lemma}[\cite{DBLP:journals/jal/ArnborgLS91,DBLP:journals/iandc/Courcelle90,DBLP:series/txcs/DowneyF13}]
\label{lem:cmsotreeautomaton}
There is an algorithm that, given a $\CMSO_2$-sentence $\Phi$ and an integer $\ell$, constructs a tree decomposition automaton $\autom = (\sigma^0, \sigma^1, \sigma^2, Q, F)$ of width $\ell$ so that 
\begin{itemize}
\item $\autom$ accepts $(G,\Tc)$ if and only if $G$ satisfies $\Phi$, and
\item $\max(|Q|, \evaltime(\autom)) \le f(\ell, |\Phi|)$ for a computable function $f$.
\end{itemize}
\end{lemma}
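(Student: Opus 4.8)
The plan is to derive \Cref{lem:cmsotreeautomaton} by reformulating the classical theorem that every $\CMSO_2$-expressible graph property is recognized by a finite bottom-up tree automaton running over standard ``term'' encodings of bounded-width tree decompositions~\cite{DBLP:journals/jal/ArnborgLS91,DBLP:journals/iandc/Courcelle90,DBLP:series/txcs/DowneyF13,CEbook}. Fix $\ell$ and $\Phi$. First I would invoke that classical result to obtain a finite deterministic tree automaton $\mathcal{B}$ over the finite alphabet of boundaried-graph operations on boundaries of size at most $2(\ell+1)$ --- introducing a boundary vertex, forgetting a boundary vertex, adding all edges between two specified boundary vertices, and gluing two boundaried graphs along their common boundary --- such that $\mathcal{B}$, run on any term that evaluates to $G$ (regardless of which vertices the term leaves on the boundary), ends in an accepting state if and only if $G$ satisfies $\Phi$, and such that the number of states of $\mathcal{B}$ is bounded by a computable function of $\ell$ and $|\Phi|$; the computability of this bound, including its dependence on the moduli of the counting predicates that contribute to $|\Phi|$, is explicit in these references. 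Allowing boundaries of size $2(\ell+1)$ rather than $\ell+1$ costs nothing and is convenient below.

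Second, I would set $\autom = (\sigma^0,\sigma^1,\sigma^2,Q,F)$ so that the state of $\autom$ at a node $t$ is the state reached by $\mathcal{B}$ on a term building $G[\subtree_{\Tc}(t)]$ with boundary $\bag(t)$; thus $Q$ is the (finite) state set of $\mathcal{B}$ together with a rejecting sink, and $F$ consists of the accepting states of $\mathcal{B}$. Each of $\sigma^0,\sigma^1,\sigma^2$ reconstructs, from the data it receives, a short term over $\mathcal{B}$'s alphabet and chains $\mathcal{B}$'s transition along it: $\sigma^0(\bag(t))$ runs $\mathcal{B}$ on the term introducing the vertices of $\bag(t)$ and then adding all edges of the received graph $G[\bag(t)]$; $\sigma^1(q,\bag(c),\bag(t))$ runs $\mathcal{B}$ from $q$ along the term forgetting each vertex of $\bag(c)\setminus\bag(t)$, then introducing each vertex of $\bag(t)\setminus\bag(c)$, then adding all edges of $G[\bag(t)]$; and $\sigma^2$ first glues the two child graphs along $\bag(c_1)\cap\bag(c_2)$, then forgets the vertices of $(\bag(c_1)\cup\bag(c_2))\setminus\bag(t)$, then introduces the vertices of $\bag(t)\setminus(\bag(c_1)\cup\bag(c_2))$, then adds all edges of $G[\bag(t)]$; here the intermediate boundary has size at most $|\bag(c_1)\cup\bag(c_2)|\le 2(\ell+1)$. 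The structural fact making this well-defined is that the transition functions receive the bags together with the restriction of the fixed vertex order of $G$, which provides a consistent naming of the vertices shared between a node and its children; using the connectedness condition of tree decompositions one checks that each vertex introduced in these steps genuinely does not occur in the subtree(s) processed so far.

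Correctness of $\autom$ then reduces to checking that the term implicitly traversed by a run of $\autom$ on $(G,\Tc)$ evaluates to $G$, after which the lemma follows from the correctness of $\mathcal{B}$. The delicate point --- which I expect to demand the most care, though it is routine --- is that $\Tc$ is an arbitrary binary tree decomposition rather than a ``nice'' one, so a vertex may be introduced at several incomparable leaves of its bag-subtree and an edge may have both endpoints present at several incomparable nodes. I would handle this by working with a graph algebra in which re-adding an already-present edge is a no-op and gluing is allowed along an arbitrary common boundary: re-introducing an already-present vertex in a different branch is then harmless because the copies are identified at their first common ancestor, where the gluing boundary $\bag(c_1)\cap\bag(c_2)$ contains that vertex; and re-adding all edges of $G[\bag(t)]$ at every node is harmless by idempotence while ensuring, via the edge condition, that every edge of $G$ is produced somewhere. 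Finally, $|Q|$ equals the size of $\mathcal{B}$ plus one, and each transition runs the fixed finite automaton $\mathcal{B}$ along a term of length $\OO(\ell)$ built from bags of size at most $\ell+1$, so $\evaltime(\autom)\le f(\ell,|\Phi|)$ for a computable $f$; the construction of $\autom$ from $\Phi$ and $\ell$ is effective because that of $\mathcal{B}$ is.
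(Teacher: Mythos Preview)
The paper does not prove this lemma; it is stated with citations to \cite{DBLP:journals/jal/ArnborgLS91,DBLP:journals/iandc/Courcelle90,DBLP:series/txcs/DowneyF13} and used as a black box. Your proposal is therefore not competing with any argument in the paper, and what you have written is a reasonable and standard derivation of the paper's automaton model from the classical term-based formulation of Courcelle's theorem.

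A couple of points that would deserve more care in a full write-up, though they do not constitute gaps in your plan. First, the classical automaton $\mathcal{B}$ operates over a \emph{finite} alphabet, which means boundary vertices must carry abstract labels from a fixed finite set (say $[2(\ell+1)]$), not the actual vertex names of $G$. Your transition functions $\sigma^0,\sigma^1,\sigma^2$ must therefore translate between concrete vertices and abstract labels; the total order on $V(G)$ that the paper assumes is precisely what lets you do this consistently across a parent and its children, and you allude to this, but the bijection with abstract labels should be made explicit. Second, your handling of non-nice decompositions via idempotent edge-addition and gluing along an arbitrary common sub-boundary is correct in spirit, but note that ``gluing along $\bag(c_1)\cap\bag(c_2)$'' is not the standard composition in most presentations (which glue along the full, matching boundary); you would implement it as relabel-to-disjoint-labels, then glue, then forget --- all of which fit within your $2(\ell+1)$ label budget. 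With these details filled in, the argument goes through, and the bounds on $|Q|$ and $\evaltime(\autom)$ follow exactly as you say.
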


Again, we did not define the running time of the algorithm of \Cref{lem:cmsotreeautomaton}, but by definition it is bounded by a computable function on $|\Phi|$ and $\ell$.

\subsection{Graph-theoretic observations}
We then present some simple graph-theoretic observations that will be used multiple times in this paper.


We first recall the folklore results that graphs of treewidth at most $ w$ are sparse in many ways.

\begin{proposition}
\label{lem:twsparseness}
Every graph $G$ of treewidth at most $ w$
\begin{itemize}
\item has at most $|V(G)|\cdot w$ edges, and
\item contains an independent set of size at least $|V(G)|/(w+1)$.
\end{itemize}
\end{proposition}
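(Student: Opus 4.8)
The plan is to derive both items from a single structural fact: every graph $G$ of treewidth at most $w$ is \emph{$w$-degenerate}, meaning that every subgraph of $G$ has a vertex of degree at most $w$, equivalently that $V(G)$ admits an ordering $v_1, \dots, v_n$ in which each $v_i$ has at most $w$ neighbors among $\{v_1, \dots, v_{i-1}\}$. Both conclusions follow from such an ordering by routine counting, so the real work is establishing degeneracy.

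First I would fix a tree decomposition $\Tc = (T,\bag)$ of $G$ of width at most $w$, root $T$ arbitrarily, and assume (after a standard cleanup) that it is non-redundant, i.e., $\bag(s) \not\subseteq \bag(t)$ whenever $s,t$ are adjacent in $T$; this can be arranged by repeatedly contracting an edge of $T$ whose endpoints' bags are nested, which preserves the tree-decomposition axioms and the width. Now take a leaf $t$ of $T$ with parent $p$ (if $T$ is a single node the claim is trivial since $|V(G)| \le w+1$). By non-redundancy there is a vertex $v \in \bag(t) \setminus \bag(p)$; by the connectedness condition $v$ lies in no bag other than $\bag(t)$, and hence by the edge condition every neighbor of $v$ lies in $\bag(t)$, so $\deg_G(v) \le |\bag(t)| - 1 \le w$. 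Deleting $v$ from $G$ and from all bags yields a tree decomposition of $G - v$ of width at most $w$, so by induction on $|V(G)|$ the graph $G$ is $w$-degenerate; placing $v$ last and prepending a degeneracy ordering of $G-v$ produces the desired vertex ordering.

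Given the ordering $v_1, \dots, v_n$, the edge bound follows immediately: each edge $v_iv_j$ with $i<j$ is charged to its later endpoint $v_j$, and each $v_j$ receives at most $w$ charges, so $|E(G)| \le wn = |V(G)| \cdot w$. For the independent set, I would run greedy coloring along the same ordering: when coloring $v_i$, its already-colored neighbors number at most $w$, so one of the colors $\{1, \dots, w+1\}$ is free; this yields a proper $(w+1)$-coloring, and by pigeonhole some color class is an independent set of size at least $|V(G)|/(w+1)$.

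The only step requiring any care is the cleanup guaranteeing a non-redundant tree decomposition together with the verification that the contraction operation preserves all three tree-decomposition axioms and does not increase the width; everything after that is bookkeeping. An alternative, if one prefers to avoid the cleanup, is to observe directly that among all tree decompositions of $G$ of width $\le w$ one may pick one minimizing $\sum_{t} |\bag(t)|$, and argue that in such a minimal decomposition no leaf bag is contained in its parent's bag; this sidesteps the contraction argument at the cost of a minimality-of-measure argument. Either way the proposition reduces to $w$-degeneracy, which I expect to be the crux.
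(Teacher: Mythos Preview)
Your proof is correct and follows essentially the same route as the paper: both reduce to $w$-degeneracy of bounded-treewidth graphs, obtain the edge bound from the back-degree in a degeneracy ordering, and obtain the independent set via a greedy $(w{+}1)$-coloring and pigeonhole. The only difference is that the paper cites degeneracy and the edge bound as folklore exercises, whereas you supply the leaf-bag argument explicitly.
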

\begin{proof}
The first result is Exercise~7.15 in \cite{DBLP:books/sp/CyganFKLMPPS15}. For the second result, let us remind that
 a graph $G$ is called $d$-degenerate if every subgraph of $G$ contains a vertex of degree
at most $d$. It is well-known, see e.g.  Exercise~7.14 in \cite{DBLP:books/sp/CyganFKLMPPS15}, that graphs of treewidth $w$ are $w$-degenerate. It is also 
  well-known and easy to prove that every $w$-degenerate graph $G$ can be properly colored in at most $w+1$ colors. Hence, $G$ contains an independent set of size at least $|V(G)|/(w+1)$.
  \end{proof}

%

We then observe that on graphs of small treewidth, one can always find a 
sufficiently large subset of a matching that is an induced matching.

\begin{lemma}
\label{lem:inducematching}
Let $G$ be a graph of treewidth $\tw(G) \le w$, and $M$ a matching in $G$.
There exists a subset $M' \subseteq M$ with $|M'| \ge |M|/(w+1)$ that is an induced matching in $G$.
\end{lemma}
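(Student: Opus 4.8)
The plan is to build the induced matching greedily, using the fact that each edge we add to $M'$ can ``block'' only a bounded number of other edges of $M$. Concretely, orient the argument as follows. Form an auxiliary graph $H$ on the vertex set $M$ (so the nodes of $H$ are the edges of $G$ in our matching), where two edges $e,f \in M$ are adjacent in $H$ if and only if some endpoint of $e$ is adjacent in $G$ to some endpoint of $f$. An independent set in $H$ is precisely a subset $M' \subseteq M$ such that no edge of $G$ runs between the vertex sets of two distinct edges of $M'$, i.e.\ an induced matching of $G$. So it suffices to show that $H$ has an independent set of size at least $|M|/(w+1)$, and by the second bullet of \Cref{lem:twsparseness} it is enough to show $\tw(H) \le w$.

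To bound $\tw(H)$, I would start from a tree decomposition $\Tc = (T,\bag)$ of $G$ of width at most $w$, and turn it into a tree decomposition of $H$ with the same tree $T$. For each edge $e = uv \in M$, pick a node $t_e \in V(T)$ with $\{u,v\} \subseteq \bag(t_e)$ (this exists by the edge-condition). Now define a bag-function $\bag'$ on $T$ by putting the node $e$ of $H$ into $\bag'(t)$ whenever $t = t_e$; more carefully, to get the connectedness-condition one should instead include $e$ in every bag $\bag'(t)$ for $t$ on the subtree of $T$ induced by $\{t' : \bag(t') \cap \{u,v\} \neq \emptyset\}$ — this subtree is connected, being the intersection (union along the edge $uv$) of the subtrees of $u$ and $v$, which overlap at $t_e$. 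The key point is the width bound: for a fixed node $t \in V(T)$, every node $e = uv$ of $H$ with $e \in \bag'(t)$ has an endpoint in $\bag(t)$, and distinct edges of the matching $M$ have disjoint endpoint sets, so $|\bag'(t)| \le |\bag(t)| \le w+1$. Hence this is a tree decomposition of width at most $w$ for any graph on vertex set $M$ whose edges respect it; it remains to check the edge-condition of $H$ itself: if $e = uv$ and $f = xy$ are adjacent in $H$, then some endpoint, say $u$, is adjacent in $G$ to some endpoint, say $x$, so by the edge-condition of $\Tc$ there is a node $t$ with $\{u,x\} \subseteq \bag(t)$, and by construction $e, f \in \bag'(t)$.

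Putting it together: $\Tc' = (T,\bag')$ is a tree decomposition of $H$ of width at most $w$, so $\tw(H) \le w$, so by \Cref{lem:twsparseness} the graph $H$ (which has $|M|$ vertices) has an independent set $M'$ with $|M'| \ge |M|/(w+1)$, and such an $M'$ is an induced matching of $G$, completing the proof.

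The main obstacle is getting the connectedness-condition of $\Tc'$ right: naively placing each $e$ only at the single node $t_e$ satisfies the width and edge conditions but not connectedness, while the fix (placing $e$ along the whole connected subtree $\{t' : \bag(t') \cap \{u,v\} \neq \emptyset\}$) must be checked not to inflate any bag — which it does not, precisely because every such node contains an endpoint of $e$ and the matching endpoints are disjoint. Everything else is routine. (A slicker alternative avoiding connectedness entirely: contract each edge $uv \in M$ in $G$ to a single vertex; the resulting graph $G'$ contains $H$ as a subgraph and is a minor of $G$, hence $\tw(G') \le w$, hence $\tw(H) \le w$ — but this needs the standard fact that treewidth is minor-monotone, which is not quoted in the excerpt, so I would present the explicit tree-decomposition construction above.)
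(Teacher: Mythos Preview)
Your proof is correct and follows essentially the same approach as the paper: define the conflict graph $H$ on vertex set $M$, show $\tw(H)\le w$, and apply \Cref{lem:twsparseness} to extract a large independent set, which is exactly an induced matching. The only difference is that the paper establishes $\tw(H)\le w$ via the one-line observation that $H$ is a minor of $G$ (contract each matching edge, delete the rest), whereas you unfold this into an explicit tree decomposition $(T,\bag')$ --- and indeed you note the minor argument yourself as the ``slicker alternative,'' so the two proofs coincide in substance.
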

\begin{proof}
Let $G_M$ be the graph whose vertex set is $M$, and two vertices of $G_M$ are adjacent if there is an edge between the vertices of the corresponding edges in $G$.
We observe that $G_M$ is a minor of $G$ (obtained by contracting $M$ and deleting all other vertices), and therefore has treewidth at most $w$.
By \Cref{lem:twsparseness}, $G_M$ has an independent set of size at least $|V(G_M)|/(w+1) = |M|/(w+1)$, and therefore $M$ contains a subset of size $\ge |M|/(w+1)$ that is an induced matching.
\end{proof}

\section{Bag signatures}\label{sec:bigsignatures}
The first ingredient of our algorithm is to enumerate for each bag of the given tree decomposition a family of subsets of the bag, so that the the intersection of ``the optimal solution'' with the bag is in this family.

For this, let us start by defining the optimal solution in a unique way.
Let a 4-tuple  $(G, \we, \Phi, w)$ be an instance of \mwisbt, and assume (without loss of generality) that there is a total order $\le_V$ on the set $V(G)$.
For two sets $X, Y \subseteq V(G)$, we say that $X$ is lexicographically larger than $Y$ if either (1) $|X| > |Y|$, or (2) $|X| = |Y|$ and when both are sorted, in the smallest position where they differ the element of $X$ is larger than the element of $Y$.
The two properties of this definition that we use are that (1) it is a total order on the subsets of $V(G)$, and (2) if $X' = (X \setminus \{u\}) \cup \{v\}$ for $u \in X$ and $v \notin X$, and $u <_V v$, then $X'$ is lexicographically larger than $X$.

A set $X \subseteq V(G)$ is a \emph{solution} to $(G, \we, \Phi, w)$ if $\tw(G[X]) \le w$ and $G[X]$ satisfies $\Phi$.
A solution $X$ is \emph{the optimal solution} if there is no other solution $X'$ so that either (1) $\we(X') > \we(X)$, or (2) $\we(X') = \we(X)$ and $X'$ is lexicographically larger than $X$.
Note that the optimal solution is indeed unique (if any solution exists).

This section is dedicated to the proof of the following lemma.

\begin{restatable}{lemma}{lembagsignatures}
\label{lem:bagsignatures}
There is an algorithm that, given an instance $(G, \we, \Phi, w)$ of \mwisbt, and a tree decomposition $\Tc = (T,\bag)$ of $G$ with $\mu(\Tc) \le k$, in time $f(k, w, |\Phi|) \cdot n^{\OO(kw)} \cdot |V(T)|$, for some computable function  $f(k, w, |\Phi|)$, computes for each $t \in V(T)$ a family $\fami(t)$ of subsets of $\bag(t)$, so that for all $t \in V(T)$,
\begin{itemize}
\item if $X \subseteq V(G)$ is the optimal solution to $(G, \we, \Phi, w)$, then $X \cap \bag(t) \in \fami(t)$, and
\item $|\fami(t)| \le f(k, w, |\Phi|) \cdot n^{\OO(kw)}$.
\end{itemize}
\end{restatable}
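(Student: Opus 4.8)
The plan is to follow the blueprint laid out in the proof sketch of \Cref{lem:overview:yolovgen} in the overview, turning it into a full construction of the families $\fami(t)$. Fix a bag $B = \bag(t)$ with $\mu(B) \le k$, and let $X$ be the (unknown) optimal solution. The goal is to attach to $X \cap B$ a ``signature'' taken from a set of size at most $f(k,w,|\Phi|)\cdot n^{\OO(kw)}$ that determines $X \cap B$ uniquely; then $\fami(t)$ is obtained by enumerating all possible signatures and, for each, reconstructing the unique candidate set. First I would establish the \emph{basic signature}. Let $G[X]^B$ be the subgraph of $G[X]$ consisting of edges incident to $B$. If its vertex cover number exceeded $2k(w+1)$, König-type reasoning gives a matching of size $>k(w+1)$ in $G[X]^B$, and by \Cref{lem:inducematching} (using $\tw(G[X])\le w$) a sub-induced-matching of size $>k$, each edge incident to $B$, contradicting $\mu(B)\le k$. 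Hence there is a vertex cover $C \subseteq X$ of $G[X]^B$ with $|C| \le \OO(kw)$; $C$ is the first component of the signature (guessed as an arbitrary $\OO(kw)$-subset of $V(G)$, so $n^{\OO(kw)}$ choices). Then $X \cap B \setminus C$ is independent in $G[B]$, so it extends to a maximal independent set $S$ of $G[B]$; by \Cref{lem:overview:yolovlem} there are $n^{\OO(k)}$ choices for $S$, which is the second component. At this point $C \cap B \subseteq X \cap B \subseteq S \cup (C\cap B)$.

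Next I would isolate the \emph{dangling vertices}. Vertices of $S$ excluded from $X$ because they see $X\setminus C$ form $S_D = S \cap N(X\setminus C)$; the sketch asserts (via $\tw(G[X])\le w$ and $\mu(B)\le k$) that there is $D \subseteq X\setminus C$ with $|D|\le \OO(kw)$ and $S\cap N(D) = S_D$ — I would prove this by the same matching-to-induced-matching argument applied to a minimal such $D$, charging each vertex of $D$ a private neighbor in $S$ and bounding the resulting induced matching. Add $D$ to the signature ($n^{\OO(kw)}$ choices). Now $(C,S,D)$ pins down $X\cap B$ up to the set of dangling vertices $S \setminus (N(D)\cup C)$, and for each dangling $v$ one checks $N(v)\cap X \subseteq C$. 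Group the dangling vertices by their neighborhood $N(v)\cap X \subseteq C$ — at most $2^{|C|} = 2^{\OO(kw)}$ groups — and further by $\sign(\we(v))$. Within a group, all vertices are ``interchangeable'' with respect to $G[X]$ up to isomorphism, so to maximize weight (with tie-break by the lexicographic order, which forces choosing the $\le_V$-largest vertices among equal weights) the optimal solution takes a prefix of the group sorted by $(\we(v), v)$. Using \Cref{lem:cmsotreeautomaton} together with \Cref{lem:cmsotwcheck} — i.e., the finite-stateness of the automaton $\autom$ for $\Phi \wedge \Phi_w$ of width $\OO(kw)$ — one shows that adding many interchangeable twins eventually cycles the automaton state, so the optimum takes either at most $N_0$ or all-but-at-most-$N_0$ of each group, where $N_0$ is a computable function of $|\Phi|$ and $w$; this is the crux pumping argument. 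Hence each group is encoded by one bit and one integer in $[0,N_0]$, giving $(N_0+1)^{2\cdot 2^{\OO(kw)}}$ further choices, absorbed into $f(k,w,|\Phi|)$.

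Finally I would assemble the algorithm: enumerate all signatures $(C, S, D, \text{per-group data})$; for each, compute the canonical candidate $X\cap B$ by taking $C\cap B$, the non-dangling members of $S$, and the prescribed prefixes/suffixes of each dangling group; collect these into $\fami(t)$. The size bound $f(k,w,|\Phi|)\cdot n^{\OO(kw)}$ and the per-node running time follow since enumerating $S$ costs $n^{\OO(k)}$ by \Cref{lem:overview:yolovlem}, enumerating $C$ and $D$ costs $n^{\OO(kw)}$, and processing each signature costs $f(k,w,|\Phi|)\cdot n^{\OO(1)}$ (sorting groups, running the automaton); multiplying by $|V(T)|$ handles all bags.

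I expect the main obstacle to be the twin-pumping step: making precise the claim that ``the optimal solution takes almost none or almost all of each group'' requires running the $\CMSO_2$-automaton not on $G[X]$ itself but on a tree decomposition that exposes the group of interchangeable dangling vertices as pendant leaves attached to $C$, arguing that beyond a threshold the automaton's state (and the satisfaction of $\Phi \wedge \Phi_w$) depends only on the count modulo the period, and then combining this with the weight ordering and the lexicographic tie-break to conclude the optimum is forced into the claimed shape. Care is also needed because the automaton width must accommodate $C$ plus one dangling vertex at a time, which is $\OO(kw)$, and because the bounded-treewidth constraint $\tw(G[X])\le w$ must be folded into $\Phi$ via \Cref{lem:cmsotwcheck} so that ``solution'' is exactly ``automaton-accepted''.
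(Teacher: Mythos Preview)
Your proposal is correct and follows essentially the same approach as the paper's proof. Two minor methodological differences are worth flagging: the paper includes the automaton state $q_v$ (computed on the two-bag ``neighborhood tree decomposition'' of $v$) as part of the type of a dangling vertex, so that the pumping argument goes through without appealing to isomorphism-invariance of intermediate automaton states; and rather than folding $\Phi_w$ into the automaton, the paper uses the bound $\tw(G[X])\le w$ structurally (showing that if many dangling vertices of one type are selected then $|N(U)\cap X|\le w$, and hence $G[X\setminus U]$ admits a width-$w$ decomposition with a leaf bag $N(U)\cap X$ to which the pendants can be attached), which keeps the construction a bit tighter but is equivalent in spirit to your plan.
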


\subsection{Basic signatures}
\label{subsec:basicsigns}
We start the proof of \Cref{lem:bagsignatures} by establishing the properties of intersections of a  vertex set  $B$ with $\mu(B) \le k$ (view it as a bag $\bag(t)$ of a tree decomposition) and a set $X$ inducing a graph of small treewidth (view it as the optimal solution).

The first result is due to Yolov \cite{DBLP:conf/soda/Yolov18}.
\begin{lemma}[\cite{DBLP:conf/soda/Yolov18}]
\label{lem:isfam}
Let $G$ be an $n$-vertex graph and $B \subseteq V(G)$ a set with $\mu(B) \le k$.
Let $\isfam(B)$ be the set containing the intersection $I \cap B$ for every maximal independent set $I$ of $G$.
It holds that $|\isfam(B)| \le n^{\OO(k)}$, and $\isfam(B)$ can be computed in $n^{\OO(k)}$ time, when $G$, $B$, and $k$ are given.
\end{lemma}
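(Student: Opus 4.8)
The statement is Yolov's lemma (Lemma~\ref{lem:overview:yolovlem}) phrased for the restriction $I \cap B$ rather than for all maximal independent sets globally, so I will prove it by reducing to, and then slightly strengthening, the classical bound of Balas and Yu~\cite{DBLP:journals/networks/BalasY89} that a graph $H$ has at most $|V(H)|^{\OO(\mu(H))}$ maximal independent sets. The key point is that $\mu(B)$, not $\mu(G)$, controls the count, so I cannot apply the global bound as a black box; instead I will run the Balas--Yu style branching argument but only ``charge'' against edges incident to $B$.

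\medskip

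\noindent\textbf{Main steps.} First I would set up a recursive enumeration procedure $\mathrm{Enum}(B, A)$, where $A \subseteq V(G)$ records the set of vertices already committed to the independent set, and which outputs all sets of the form $I \cap B$ for maximal independent sets $I \supseteq A$ with $I \cap N(A) = \emptyset$. The base case is when $B \setminus N[A] = \emptyset$: every such $I$ agrees on $B$ (namely $I \cap B = A \cap B$), so we output one set. For the recursive case, pick any vertex $v \in B \setminus N[A]$. In a maximal independent set extending $A$, either $v \in I$, or $v \notin I$ in which case some neighbor $u \in N(v)$ lies in $I$ (by maximality, since $v$ cannot be added). This gives a branching: one branch adds $v$ to $A$, and for each $u \in N(v) \setminus N[A]$ one branch adds $u$ to $A$. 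To bound the recursion tree I would use the standard trick of choosing $v$ to have minimum degree \emph{within the relevant part of $B$}: following Balas--Yu, repeatedly pick $v\in B\setminus N[A]$ of minimum degree in $G[B\setminus N[A]]$, and observe that either this minimum degree is small (so few branches) or $G[B \setminus N[A]]$ has large minimum degree, which by a Ramsey/greedy argument forces a large induced matching all of whose edges touch $B$, contradicting $\mu(B) \le k$. Quantitatively, when the minimum degree in $G[B\setminus N[A]]$ exceeds some threshold $d(k)$, one extracts an induced matching of size $>k$ incident to $B$; hence at every node of the recursion we branch into at most $d(k)+1 = \OO(k)$ (more precisely, some function polynomial-ish in $k$) subproblems, and the recursion depth is at most $|B| \le n$ because each branch strictly enlarges $N[A]\cap B$. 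A naive bound gives $(d(k))^{n}$, which is too weak, so the refinement is to argue that the number of \emph{leaves} is $n^{\OO(k)}$: each root-to-leaf path consists of ``cheap'' steps (branching factor bounded by current min-degree) and one can amortize, exactly as in the Balas--Yu analysis, against an induced matching of size $\le k$, so the total number of leaves is $n^{\OO(k)}$. Each leaf yields one candidate set $I\cap B$, so $|\isfam(B)| \le n^{\OO(k)}$, and the whole procedure runs in $n^{\OO(k)}$ time since each node does $\mathrm{poly}(n)$ work (computing a minimum-degree vertex, computing $N[A]$). Finally, since we only care about maximal independent sets of $G$ and every maximal independent set arises from some run of the procedure (start with $A=\emptyset$), the family output contains $I\cap B$ for every maximal independent set $I$.

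\medskip

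\noindent\textbf{Main obstacle.} The delicate part is the amortized counting of leaves: the recursion depth is genuinely linear in $n$, so one must show that along any path only $\OO(k)$ of the branch steps can have ``large'' branching factor, and the remaining steps have branching factor essentially $1$ or $2$. This is exactly where the Balas--Yu argument pays for itself: each time we branch on a minimum-degree vertex $v$ whose degree in $G[B\setminus N[A]]$ is large, we can either find an edge of a growing induced matching incident to $B$ (bounding how often this happens by $k$), or the structure collapses; between such expensive steps, the min-degree is bounded by a constant and the branching is controlled. Getting the bookkeeping of this charging argument exactly right — in particular, ensuring every edge we charge is genuinely incident to $B$ and that the collected edges genuinely form an \emph{induced} matching (not merely a matching) in $G$ — is the crux, and it is where I would spend the most care. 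Since Yolov~\cite{DBLP:conf/soda/Yolov18} already carried this out, the honest move is to follow his proof essentially verbatim, re-checking only that the localization to $B$ goes through, which it does because every vertex $v$ on which we branch lies in $B$.
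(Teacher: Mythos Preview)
The paper does not give its own proof of this lemma; it is stated with attribution to Yolov~\cite{DBLP:conf/soda/Yolov18} and used as a black box. Your proposal therefore already goes further than the paper, and deferring to~\cite{DBLP:conf/soda/Yolov18} for the bookkeeping, as you do at the end, is exactly what the paper does.

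That said, your sketch contains a genuine imprecision worth flagging. When you branch on $v \in B \setminus N[A]$, the number of children is $1 + |N_G(v) \setminus N[A]|$, i.e., one plus the degree of $v$ in $G - N[A]$, \emph{not} its degree in $G[B \setminus N[A]]$. Hence picking $v$ of minimum degree inside $G[B \setminus N[A]]$ does not by itself bound the branching factor, and a large minimum degree in $G[B \setminus N[A]]$ gives you a large induced matching only in $G[B]$, not in $G$---but $\mu(B)$ is defined with respect to induced matchings of $G$. Yolov's argument avoids this by working with degrees in the whole remaining graph and by encoding, for each maximal independent set $I$, a short certificate of vertices (each in $N[B]$) that already pins down $I \cap B$; if the certificate gets longer than $\OO(k)$, one extracts an induced matching of $G$ of size $>k$ touching $B$. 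Your recursion-depth observation (that $B \setminus N[A]$ strictly shrinks at every step) is correct and is what limits the certificate length to $|B|$, but the step that caps the number of ``expensive'' choices at $\OO(k)$ must use degrees in $G$, not in $G[B]$. Since you explicitly identify this charging step as the crux and plan to follow~\cite{DBLP:conf/soda/Yolov18} verbatim there, the plan is salvageable, but as written the min-degree heuristic is aimed at the wrong graph.
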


Notice that $\isfam(B)$ contains all maximal independent sets of $G[B]$, but contains also non-maximal independent sets of $G[B]$.

We then establish that the set of edges of $G[X]$ incident to $B$ have a small vertex cover.

\begin{lemma}
\label{lem:vertexcover}
Let $G$ be a graph, $B \subseteq V(G)$ s.t. $\mu(B) \le k$, and $X \subseteq V(G)$ s.t. $\tw(G[X]) \le w$.
There exists a set $C \subseteq X$ of size $|C| =\OO(kw)$, so that every edge of $G[X]$ incident to $B$ is incident to $C$.
\end{lemma}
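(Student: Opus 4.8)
The plan is to bound the vertex cover number of the graph $G[X]^B$, defined as the subgraph of $G[X]$ consisting of exactly those edges of $G[X]$ that are incident to $B$ (keeping, say, all vertices of $X$ but only these edges). By the König-type duality via matchings — or more directly, by the standard fact that a graph with no matching of size $m+1$ has a vertex cover of size at most $2m$ (take both endpoints of a maximal matching) — it suffices to show that $G[X]^B$ has no matching of size larger than $\OO(kw)$; then taking $C$ to be the set of endpoints of a maximal matching in $G[X]^B$ gives $|C| \le \OO(kw)$, and $C \subseteq X$ since all endpoints lie in $X$.

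So suppose for contradiction that $G[X]^B$ contains a matching $M$ with $|M| > k(w+1)$. Every edge of $M$ is an edge of $G[X]$, hence $M$ is a matching in $G[X]$, which has treewidth at most $w$. By \Cref{lem:inducematching}, there is a subset $M' \subseteq M$ with $|M'| \ge |M|/(w+1) > k$ that is an induced matching \emph{in $G[X]$}. Since $G[X]$ is an induced subgraph of $G$, and an induced matching in an induced subgraph is also an induced matching in $G$ (the vertices incident to $M'$ induce the same graph whether viewed inside $G[X]$ or inside $G$), $M'$ is an induced matching in $G$. Moreover, each edge of $M'$ lies in $M \subseteq E(G[X]^B)$, so each edge of $M'$ is incident to a vertex of $B$. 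This exhibits an induced matching in $G$ of size $> k$, every edge of which touches $B$, contradicting $\mu(B) \le k$.

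Hence $G[X]^B$ has no matching of size exceeding $k(w+1)$, so a maximal matching in it has at most $k(w+1)$ edges, and the set $C$ of its at most $2k(w+1) = \OO(kw)$ endpoints is a vertex cover of $G[X]^B$ contained in $X$, as required. I do not anticipate a serious obstacle here; the only mild subtlety is being careful that the induced matching produced inside $G[X]$ remains induced in the ambient graph $G$ and continues to be incident to $B$ — both follow immediately because $G[X]$ is an \emph{induced} subgraph, so no extra edges appear when passing back to $G$. (If one wants $|C|$ to be exactly $\OO(kw)$ with a cleaner constant, one can instead invoke König's theorem directly, but that requires bipartiteness of $G[X]^B$, which need not hold, so the factor-$2$ argument via a maximal matching is the right tool.)
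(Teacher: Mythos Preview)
Your proof is correct and follows essentially the same approach as the paper: bound the maximum matching in the subgraph of $G[X]$ formed by edges incident to $B$, use \Cref{lem:inducematching} (applied inside $G[X]$, which has treewidth at most $w$) to extract an induced matching of size $>k$ touching $B$, and derive a contradiction with $\mu(B)\le k$; then take the endpoints of a maximal matching as the vertex cover $C$. You are somewhat more explicit than the paper about why the induced matching in $G[X]$ remains induced in $G$, but the argument is the same.
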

\begin{proof}
Let $G'$ be the subgraph of $G[X]$ consisting of the edges incident to $B$.
We claim that $G'$ has a vertex cover of size at most $2 k (w+1)$.
If the minimum vertex cover $C$ of $G'$ is larger than $2 k (w+1)$, then $G'$ contains a matching $M$ of size $|M| > k (w+1)$.
As $G'$ is a subgraph of $G$, $M$ is also a matching in $G$.
By \Cref{lem:inducematching}, $M$ has a subset $M'$ of size $|M'| > k$ that is an induced matching.
This contradicts that $\mu(B) \le k$. Hence all edges of  $G[X]$ incident to $B$ can be covered by at most  $2 k (w+1)$ vertices. 
\end{proof}

The next lemma shows that every independent set $S$ in  $B \setminus X$ that is dominated by $X$, can also be dominated by a small amount, $\OO(kw)$, vertices of $X$.
It will be applied to prune out unwanted vertices from the set $B \cap X$.


\begin{lemma}
\label{lem:throwaway}
Let $G$ be a graph, $B \subseteq V(G)$ a set with $\mu(B) \le k$, $X \subseteq V(G)$ a set with $\tw(G[X]) \le w$, and $S \subseteq B \setminus X$ an independent set in $G$.
There exists a set $D \subseteq X$ of size $|D| \le \OO(kw)$, so that every vertex of $S$ that is adjacent to $X$ is adjacent to $D$.
\end{lemma}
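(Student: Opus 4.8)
The plan is to take $D$ to be an inclusion-minimal subset of $X$ that dominates $S' := \{v \in S : N(v) \cap X \neq \emptyset\}$, and to bound $|D|$ by extracting from $D$ an induced matching all of whose edges are incident to $B$; then $\mu(B) \le k$ forces this matching to have size at most $k$. Such a $D$ exists because $D = X$ itself dominates $S'$ by the definition of $S'$, so we may pass to an inclusion-minimal one.

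First I would exploit minimality via a private-neighbour argument: for every $d \in D$, since $D \setminus \{d\}$ no longer dominates $S'$, there is a vertex $p(d) \in S'$ with $N(p(d)) \cap D = \{d\}$. The vertices $p(d)$ are pairwise distinct (a single vertex cannot have $N(\cdot) \cap D = \{d\}$ and $N(\cdot) \cap D = \{d'\}$ for $d \ne d'$), and $\{p(d) : d \in D\} \subseteq S \subseteq B \setminus X$ is disjoint from $D \subseteq X$. Hence $M := \{\, d\, p(d) : d \in D\,\}$ is a matching in $G$ of size exactly $|D|$, and every edge of $M$ is incident to $B$ through its endpoint $p(d) \in S \subseteq B$.

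Next, since $D \subseteq X$ and treewidth does not increase under taking induced subgraphs, $\tw(G[D]) \le w$, so by \Cref{lem:twsparseness} there is an independent set $D' \subseteq D$ with $|D'| \ge |D|/(w+1)$. I claim $M' := \{\, d\, p(d) : d \in D'\,\}$ is an induced matching: there are no edges inside $D'$ (it is independent), no edges inside $\{p(d) : d \in D'\} \subseteq S$ (it is independent in $G$ by hypothesis), and for each $d' \in D'$ the only neighbour of $p(d')$ in $D$ — a fortiori in $D'$ — is $d'$, so the only edges between the two sides of $M'$ are the matching edges themselves. All edges of $M'$ are incident to $B$, so $\mu(B) \le k$ gives $|D'| \le k$, whence $|D| \le k(w+1) = \OO(kw)$; and by construction $D \subseteq X$ dominates $S'$, i.e. every vertex of $S$ adjacent to $X$ is adjacent to $D$, which is exactly the claim.

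The argument is short and I do not anticipate a genuine obstacle; the one subtlety worth flagging is that the private-neighbour matching $M$ coming from a minimal $D$ need not be induced, because of possible edges inside $D$. The fix is to restrict to an independent subset $D'$ of $D$, which makes $M'$ induced and thus lets $\mu(B) \le k$ bite; this is precisely where the hypothesis $\tw(G[X]) \le w$ is used, and it costs only the factor $w+1$.
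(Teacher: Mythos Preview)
Your proof is correct and follows essentially the same approach as the paper: take an inclusion-minimal $D \subseteq X$ dominating $S' = S \cap N(X)$, use minimality to assign each $d \in D$ a private neighbour $p(d) \in S'$, pass to an independent subset $D' \subseteq D$ of size $\ge |D|/(w+1)$ using $\tw(G[X]) \le w$, and observe that the matching $\{d\,p(d) : d \in D'\}$ is induced and incident to $B$, forcing $|D'| \le k$. The paper's argument and yours are the same up to presentation.
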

\begin{proof}
Let $S' = S \cap N(X)$, and let $D \subseteq X$ be an inclusion-wise minimal subset of $X$ so that $S' \subseteq N(D)$.
We claim that $|D| \le k (w+1)$.

Suppose otherwise.
Because $D$ is inclusion-wise minimal, each $v \in D$ has a ``private neighbor'' $s_v \in S' \cap N(v)$, meaning that $v$ is the only vertex in $D$ that is adjacent to $s_v$.
Because $D \subseteq X$ and $\tw(G[X]) \le w$, there exists a subset $D' \subseteq D$ so that $D'$ is an independent set and $|D'| \ge |D|/(w+1) > k$.
We claim that the matching $M$ obtained by matching each $v \in D'$ to $s_v$ is an induced matching.
Let $v s_v, u s_u \in M$.
Because $D'$ is an independent set, there is no edge between $u$ and $v$.
Because $S'$ is an independent set, there is no edge between $s_u$ and $s_v$.
Furthermore, because $s_u$ is a private neighbor of $u$, there is no edge between $v$ and $s_u$, and because $s_v$ is a private neighbor of $v$, there is no edge between $u$ and $s_v$.
Therefore, $M$ is an induced matching of size $>k$, whose every edge is incident to $S' \subseteq B$, which contradicts that $\mu(B) \le k$.
\end{proof}

Next we introduce the first part of our definition of a ``signature'' of a set $X \subseteq V(G)$ relative to a set $B \subseteq V(G)$ with $\mu(B) \le k$.
Let us remind that we use $\isfam(B)$ for  the family of  sets $I \cap B$  for every maximal independent set $I$ of $G$.

\begin{definition}[\textbf{Basic $B$-signature}]\label{def:b-signature}
Let $G$ be a graph $B \subseteq V(G)$ a set with $\mu(B) \le k$, and $X \subseteq V(G)$.
A \emph{basic $B$-signature} of $X$ is a triple $(C, S, D)$, such that
\begin{enumerate}
\item $C \subseteq X$ and every edge of $G[X]$ incident to $B$ is also incident to $C$,\label{def:basicsignature:item1}
\item $S \in \isfam(B)$ is a set such that $(X \cap B) \setminus C \subseteq S$, and
\item $D \subseteq X \setminus C$ is a set such that $(S \setminus C) \cap N(X \setminus C) \subseteq N(D)$.\label{def:basicsignature:item3}
\end{enumerate}
\end{definition}

Thus the set $C$ is a vertex cover of the edges incident to the set $B\cap X$. For every vertex of $X$ that is in  $S \setminus C$ all its neighbors in $G[X]$ must be in $C$ as otherwise $C$ would not be a vertex cover of the edges incident to $X \cap B$. Thus $C$ could be used to ``control'' such vertices. 
The vertex set $(X \cap B) \setminus C$ is an independent set and set $S$ is an independent set in $B$ containing $(X \cap B) \setminus C$.
The intuition of the set $D$ in \Cref{def:basicsignature:item3} is that  
 those vertices that are in $S \setminus C$ but have neighbors in $X \setminus C$ should be ``kicked out'' from $X$ by specifying this small set $D$.

By \Cref{lem:isfam}, all sets $S\in \isfam(B)$ can be enumerated in time  $n^{\OO(k)}$. The following lemma, which is built on  \Cref{lem:vertexcover,lem:throwaway}, yields that basic $B$-signatures for all potential solutions of treewidth at most $w$ could be enumerated  in time $n^{\OO(wk)}$.

\begin{lemma}
\label{lem:basicbsignexists}
For every set $B \subseteq V(G)$ with $\mu(B) \le k$ and $X \subseteq V(G)$ with $\tw(G[X]) \le w$, the set $X$ has a basic $B$-signature $(C, S, D)$ with $|C|,|D| \le  \OO(kw)$.
\end{lemma}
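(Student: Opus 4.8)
The plan is to build the three components $C$, $S$, $D$ of a basic $B$-signature of $X$ one at a time, invoking \Cref{lem:vertexcover}, \Cref{lem:isfam}, and \Cref{lem:throwaway} in turn, and to check that each step satisfies the corresponding item of \Cref{def:b-signature} while keeping $|C|$ and $|D|$ of order $\OO(kw)$.

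First I would apply \Cref{lem:vertexcover} to the pair $(B,X)$ to get a set $C \subseteq X$ with $|C| = \OO(kw)$ such that every edge of $G[X]$ incident to $B$ is incident to $C$; this is exactly \Cref{def:basicsignature:item1}. The key consequence of this choice, which I will use twice, is that $(X \cap B) \setminus C$ is an independent set of $G$: an edge with both endpoints in $(X \cap B)\setminus C$ would be an edge of $G[X]$ incident to $B$ that avoids $C$, contradicting the choice of $C$. To obtain $S$, I then extend the independent set $(X\cap B)\setminus C$ greedily to a maximal independent set $I$ of $G$ and set $S := I \cap B$. By definition of $\isfam(B)$ (\Cref{lem:isfam}) we have $S \in \isfam(B)$, and $(X\cap B)\setminus C \subseteq I \cap B = S$ since $(X\cap B)\setminus C \subseteq B$, giving item~(2) of \Cref{def:b-signature}.

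For $D$, set $W := (S \setminus C) \cap N(X \setminus C)$ --- the vertices of $S$ that must still be ``kicked out'' of $X$. The main point is the observation that $W \subseteq B \setminus X$: trivially $W \subseteq S \subseteq B$, and if some $v \in W$ were in $X$, then $v \in (X\cap B)\setminus C \subseteq I$ while $v$ has a neighbor $u \in X \setminus C$, so $uv$ is an edge of $G[X]$ incident to $B$ and avoiding $C$ --- again contradicting the choice of $C$. Since $W \subseteq S \subseteq I$, the set $W$ is independent, and $W \subseteq B \setminus (X\setminus C)$. I can therefore apply \Cref{lem:throwaway} with $X \setminus C$ playing the role of ``$X$'' (it induces a subgraph of $G[X]$, hence has treewidth at most $w$) and $W$ playing the role of ``$S$'', obtaining $D \subseteq X \setminus C$ with $|D| = \OO(kw)$ such that every vertex of $W$ adjacent to $X \setminus C$ is adjacent to $D$. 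By the definition of $W$ every vertex of $W$ is adjacent to $X \setminus C$, so $W \subseteq N(D)$, which is item~(3) of \Cref{def:b-signature}. Together with $C$ and the bounds $|C|,|D| = \OO(kw)$, this completes the proof.

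The only delicate point --- and the one I expect to be the crux --- is matching the quantifiers in \Cref{def:basicsignature:item3}: $D$ must come from $X \setminus C$, not merely from $X$, whereas a priori $S \setminus C$ need not be disjoint from $X$. This is resolved by the observation above that any vertex of $S\setminus C$ with a neighbor in $X \setminus C$ is in fact forced to lie outside $X$, which is precisely what allows feeding $X\setminus C$ and $W$ into \Cref{lem:throwaway} directly rather than reproving it; everything else is a bookkeeping check that the hypotheses of the three cited lemmas are met.
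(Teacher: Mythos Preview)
Your proof is correct and follows essentially the same approach as the paper: pick $C$ via \Cref{lem:vertexcover}, extend $(X\cap B)\setminus C$ to a maximal independent set to obtain $S\in\isfam(B)$, and then invoke \Cref{lem:throwaway} (with $X\setminus C$ in place of $X$) to get $D$. Your version is in fact slightly more careful than the paper's in that you explicitly restrict to $W=(S\setminus C)\cap N(X\setminus C)$ and verify $W\subseteq B\setminus X$ before applying \Cref{lem:throwaway}, whereas the paper applies the lemma directly with $S\setminus C$, which need not literally satisfy the hypothesis $S\subseteq B\setminus X$ of that lemma (though the relevant subset $S'$ in its proof does).
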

\begin{proof}
We first pick $C$ by applying \Cref{lem:vertexcover}.
Then, because $(X \cap B) \setminus C$ is an independent set, there exists a maximal independent set $S$ of $G[B]$ such that $(X \cap B) \setminus C \subseteq S$.
Moreover, $S \in \isfam(B)$, so we can pick it.
Finally, we obtain the set $D$ by applying \Cref{lem:throwaway} with the sets $B$, $S \setminus C$, and $X \setminus C$.
\end{proof}

Now the basic $B$-signature tells already a lot about the set $X \cap B$.
In particular, since $S\supseteq (X\cap B)\setminus C$, we have that
\begin{itemize}
\item[(i)] $C \cap B \subseteq B \cap X \subseteq S \cup (C \cap B)$.
\end{itemize}

This leaves the vertices in $B \cap (S \setminus C)$ undetermined by $S$ and $C$.
However, the set $D$ tells more about them.
Because $D \subseteq X \setminus C$, but by \Cref{def:basicsignature:item1} every edge of $G[X]$ incident to $B$ is also incident to $C$, we have that 
\begin{itemize}
\item[(ii)] no vertex in $(S \setminus C) \cap N(D)$ is in $X$.
\end{itemize}

However, basic $B$-signatures do not provide the following information about vertices of $S \setminus (N(D) \cup C)$:
\begin{itemize}
\item[(iii)] Vertices in $S \setminus (N(D) \cup C)$ may or may not be in $X$.
\end{itemize}

The vertices of type (iii) we call \emph{dangling}.
\begin{definition}[\textbf{Dangling vertex}]
For a basic $B$-signature $(C,S,D)$, vertex $v \in S \setminus (N(D) \cup C)$,  is a   \emph{dangling vertex} of this signature.
\end{definition}


\begin{lemma}
\label{lem:dangnbs}
For each dangling vertex $v$, it holds that $N(v) \cap X \subseteq C$.
\end{lemma}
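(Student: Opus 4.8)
The plan is to unwind the definitions. Let $v \in S \setminus (N(D) \cup C)$ be a dangling vertex, and take any $u \in N(v) \cap X$; the goal is to show $u \in C$. Since $v \in S \subseteq B$ and $u \in X$, the edge $uv$ is an edge of $G[X]$ incident to $B$. By \Cref{def:basicsignature:item1} of the basic $B$-signature, every edge of $G[X]$ incident to $B$ is incident to $C$, so at least one of $u, v$ lies in $C$. But $v \notin C$, because $v \in S \setminus (N(D) \cup C)$. Hence $u \in C$, which is what we wanted.

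So the whole argument is a one-liner: dangling vertices are, by construction, exactly the vertices of $S$ avoiding both $C$ and $N(D)$, and the defining property of $C$ (it covers all edges of $G[X]$ touching $B$) forces every $X$-neighbor of such a vertex into $C$. I do not anticipate any obstacle here — the lemma is essentially the formalization of observation (ii)/(iii) already spelled out in the text just before it. The only thing to be careful about is that $u \neq v$ (so that the edge $uv$ is genuinely an edge, not a loop), which is immediate since $v \notin X \cap C$-side issues do not arise: $u \in N(v)$ already means $u \neq v$ by the definition of neighborhood.

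If one wants to be slightly more pedantic, one can phrase it as: suppose for contradiction that some $u \in N(v) \cap X$ satisfies $u \notin C$. Then the edge $uv \in E(G[X])$ is incident to $B$ (via $v$) but incident to neither endpoint in $C$, contradicting property~\ref{def:basicsignature:item1} of \Cref{def:b-signature}. Therefore $N(v) \cap X \subseteq C$.
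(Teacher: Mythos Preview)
Your argument has a genuine gap. You write ``Since $v \in S \subseteq B$ and $u \in X$, the edge $uv$ is an edge of $G[X]$ incident to $B$.'' But for $uv$ to be an edge of $G[X]$ you need \emph{both} endpoints in $X$, and nothing guarantees that $v \in X$. Indeed, the whole point of dangling vertices (as the paper spells out in observation (iii) just before the lemma) is that they may or may not belong to $X$; the lemma is later applied precisely to dangling vertices outside $X$ (e.g., to $u \in U \setminus X$ in \Cref{lem:dangvertstruct1}). So appealing to \Cref{def:basicsignature:item1}, which only covers edges of $G[X]$, does not suffice when $v \notin X$.

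The paper's proof instead uses \Cref{def:basicsignature:item3}: since $v \in S \setminus C$ but $v \notin N(D)$, the inclusion $(S \setminus C) \cap N(X \setminus C) \subseteq N(D)$ forces $v \notin N(X \setminus C)$, i.e., $N(v) \cap X \subseteq C$. This argument makes no assumption on whether $v$ itself lies in $X$, and is the correct route.
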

\begin{proof}
Because $v \in S \setminus C$, but $v \notin N(D)$, $v$ is not adjacent to $X \setminus C$ by \Cref{def:basicsignature:item3} of \Cref{def:b-signature}. Therefore, all neighbors of $v$ that are in $X$ must also  be in $C$.
\end{proof}

\subsection{Understanding dangling vertices}
To encode $B \cap X$ completely in the signature, it remains to encode dangling vertices.
We will show that the dangling vertices intersect the optimal solution $X$ in a specific way.
For the following definitions, let us treat $X$ as the optimal solution, $B$ as a set with $\mu(B) \le k$, and $(C,S,D)$ as a basic $B$-signature of $X$.

We introduce some definitions to classify the dangling vertices.
We denote $\ell = \max(|C|, w)$.
The \emph{neighborhood tree decomposition} $\Tc_v$ of a dangling vertex $v$ is the rooted tree decomposition of the graph $G[(N(v) \cap C) \cup \{v\}]$ that has two bags, with the root bag containing $N(v) \cap C$ and the leaf bag containing $(N(v) \cap C) \cup \{v\}$.
The width of $\Tc_v$ is at most $|C| \le \ell$.

Let us then fix a tree decomposition automaton $\autom = (\sigma^0, \sigma^1, \sigma^2, Q, F)$ of width $\ell$, so that $\autom$ accepts a pair $(G, \Tc)$, where $\Tc$ is a tree decomposition of width $\le \ell$, if and only if $G$ satisfies $\Phi$, and furthermore, so that $|Q| \le f(\ell, |\Phi|)$ for a computable function $f$.
Such an automaton exists by \Cref{lem:cmsotreeautomaton}, and here we fix an arbitrary one.

Now, we define that the $\autom$-state of a dangling vertex $v$ is the state $q \in Q$ of $\autom$ on the neighborhood tree decomposition $\Tc_v$ of $v$.


Then, the \emph{type} of a dangling vertex $v$ is the triple $(\sign_v, N(v) \cap C, q_v)$, where $\sign_v \in \{<, \ge\}$ depending on whether $\we(v) < 0$ or $\we(v) \ge 0$, and $q_v \in Q$ is the $\autom$-state of $v$.
There are at most $2 \cdot 2^{|C|} \cdot |Q|$ different types of dangling vertices.

Let us prove our first structure lemma about dangling vertices.
It states that from each type, the set of dangling vertices in the optimal solution must form a suffix of the ordering of the dangling vertices by $\we$ and the vertex-total-order $\le_V$.

\begin{lemma}
\label{lem:dangvertstruct1}
Let $X$ be the optimal solution for $(G, \we, \Phi, w)$, $B \subseteq V(G)$ a set such that  $\mu(B) \le k$, and $(C,S,D)$ be a basic $B$-signature of $X$.
Let $U$ be a set of dangling vertices of $(C,S,D)$ of the same type. Then for every $v \in U \cap X$ and $u \in U \setminus X$, it holds that either (1) $\we(v) > \we(u)$, or (2) $\we(v) = \we(u)$ and $u \le_V v$.
\end{lemma}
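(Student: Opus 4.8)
The plan is to argue by contradiction via an exchange argument, exploiting the two facts that (a) all vertices in $U$ have the same type, hence the same neighborhood $N(v) \cap C$ into $X$ and the same $\autom$-state, and (b) $X$ is the \emph{optimal} solution, i.e.\ lexicographically largest among maximum-weight solutions. Suppose the claim fails, so there are $v \in U \cap X$ and $u \in U \setminus X$ with either $\we(u) > \we(v)$, or $\we(u) = \we(v)$ and $v <_V u$. Set $X' = (X \setminus \{v\}) \cup \{u\}$. The heart of the argument is to show $X'$ is again a solution, and then to compare it with $X$ to contradict optimality.

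To show $X'$ is a solution I would first record, using \Cref{lem:dangnbs}, that $N(v) \cap X \subseteq C$ and $N(u) \cap X \subseteq C$; moreover, since $v$ and $u$ have the same type, $N(v) \cap C = N(u) \cap C =: C_0$. Therefore, in $G[X']$, the vertex $u$ sees exactly $C_0$ as neighbors among $X' \setminus \{u\}$ (note $v \notin N(u)$ since $v \notin C_0$ unless $u$--$v$ edge, but $v \in S$ and $S$ is independent, so $uv \notin E(G)$), which is precisely the neighborhood $v$ had in $G[X]$. Hence the map swapping $u$ and $v$ is a graph isomorphism $G[X] \cong G[X']$. Consequently $\tw(G[X']) = \tw(G[X]) \le w$, and $G[X']$ satisfies $\Phi$ because $\Phi$ is isomorphism-invariant; so $X'$ is a solution. (One can alternatively phrase the $\Phi$-preservation through the $\autom$-state machinery, but plain isomorphism-invariance of $\CMSO_2$ is cleaner here and is all we need.)

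Now I compare $\we(X')$ and the lexicographic order against $X$. We have $\we(X') - \we(X) = \we(u) - \we(v)$. In the case $\we(u) > \we(v)$ this is positive, contradicting that $X$ has maximum weight. In the case $\we(u) = \we(v)$, the weights are equal, $\we(X') = \we(X)$, and $X'$ is obtained from $X$ by removing $v$ and adding $u$ with $v <_V u$; by the stated property of the lexicographic order (property (2) in the definition: if $X' = (X \setminus \{u'\}) \cup \{v'\}$ with $u' \in X$, $v' \notin X$, and $u' <_V v'$, then $X'$ is lexicographically larger), $X'$ is lexicographically larger than $X$, contradicting the optimality of $X$. Either way we reach a contradiction, so the claim holds.

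The main obstacle to watch is the isomorphism claim: I must make sure $u$ and $v$ are non-adjacent and that neither is adjacent to any vertex of $X'$ outside $C_0$. Non-adjacency of $u,v$ follows from both lying in the independent set $S$. That $v$'s only $X$-neighbors lie in $C_0 \subseteq C$ and $u$'s only $X$-neighbors (after the swap, $X' \setminus \{u\} = (X\setminus\{v\})$, and $u$'s neighbors in $X$ were $\subseteq C$, specifically $= C_0$) coincide is exactly what the equality of types buys us, together with \Cref{lem:dangnbs}. I would also note explicitly that $C_0 \subseteq X$ is untouched by the swap (since $v, u \notin C_0$), so the isomorphism fixes $C_0$ pointwise and swaps $u \leftrightarrow v$. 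With those observations in place the rest is the routine exchange bookkeeping sketched above; the type definition was crafted precisely so that the $\autom$-state / $\CMSO_2$ behavior and the sign of the weight are already aligned, so no separate argument about $\Phi$ or about weight signs is needed beyond what is written.
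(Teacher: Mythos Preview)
Your proposal is correct and follows essentially the same exchange argument as the paper: assume the conclusion fails, swap $v$ for $u$ to obtain $X' = (X \setminus \{v\}) \cup \{u\}$, use \Cref{lem:dangnbs} together with equality of types to conclude $G[X'] \cong G[X]$, and derive a contradiction with optimality. If anything, you are more careful than the paper in explicitly verifying that $u$ and $v$ are non-adjacent (both lie in the independent set $S$) and that neither lies in $C_0$, so the swap is a genuine isomorphism fixing $C_0$ pointwise.
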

\begin{proof}
Suppose that there is $v \in U \cap X$ and $u \in U \setminus X$ so that either (1) $\we(v) < \we(u)$ or (2) $\we(v) = \we(u)$ and $v \le_V u$.
Because $u$ and $v$ are of the same type, we have that $N(v) \cap C = N(u) \cap C$.  By \Cref{lem:dangnbs}, we have that $N(v) \cap X = N(u) \cap X$, and therefore the graphs $G[X]$ and $G[X \cup \{u\} \setminus \{v\}]$ are isomorphic.
It follows that $X \cup \{u\} \setminus \{v\}$ is a solution, and we observe that it would contradict that $X$ is the optimal solution.
\end{proof}

Now we know that for each type of dangling vertex, a suffix of the vertex ordering, according to both $\we$ and $\le_v$, must be selected into the set~$X$.
From this observation, one could already prove \Cref{lem:bagsignatures} with an additional factor of $n^{2^{\OO(kw)}}$.
However, in what follows, we optimize the encoding of the dangling vertices to eliminate any dependence on~$n$.

We now state our second structural lemma concerning dangling vertices.
It asserts that, for each type, one must select either only a few or almost all of the dangling vertices of that type.

\begin{restatable}{lemma}{lemdangvertstructtwo}
\label{lem:dangvertstruct2}
Let $X$ be the optimal solution for $(G, \we, \Phi, w)$, $B \subseteq V(G)$ a set with $\mu(B) \le k$, and $(C,S,D)$ a basic $B$-signature of $X$.
If $U$ is a set of dangling vertices of $(C,S,D)$ of the same type,  then either (1) $|U \cap X| \le \max(w+1, |Q|)$, or (2) $|U \setminus X| \le |Q|$.
\end{restatable}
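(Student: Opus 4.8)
The plan is to argue by contradiction: suppose that both $|U \cap X| > \max(w+1, |Q|)$ and $|U \setminus X| > |Q|$, and derive a contradiction with the optimality of $X$ by exhibiting a strictly better solution. The key point is that, by Lemma \ref{lem:dangnbs}, all vertices of $U$ have the same neighborhood $N(v) \cap X \subseteq C$ into $X$ (they are of the same type, so $N(v)\cap C$ is constant on $U$), and moreover by definition of ``type'' they all share the same $\autom$-state $q$ on their neighborhood tree decomposition, and the same sign of weight. So, intuitively, each vertex of $U$ contributes ``the same thing'' to $G[X]$, and $\autom$ cannot distinguish solutions that differ only in \emph{how many} vertices of $U$ are taken, beyond the first few.

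First I would set up the tree decomposition of $G[X]$ that makes this precise. Let $Y = U \cap X$ and $m = |Y|$. Build a tree decomposition $\Tc'$ of $G[X]$ whose ``spine'' handles $X \setminus Y$ together with $C$ (this has width bounded in terms of $w$ and $|C| \le \ell$, using $\tw(G[X\setminus Y]) \le w$), and which then, hanging off a node whose bag is exactly $N(v)\cap C$ for $v\in U$ (recall all these sets coincide), attaches the $m$ leaf bags $(N(v)\cap C) \cup \{v\}$, one per vertex $v \in Y$, arranged as a path of introduce/forget-like nodes. Crucially the automaton $\autom$, run on this decomposition, processes the $m$ dangling leaves \emph{one after another}, and after processing each one the running state lies in $Q$. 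Since $m > |Q|$, by pigeonhole two prefixes of this chain of $j_1 < j_2$ dangling vertices yield the same $\autom$-state; hence removing the block of $j_2 - j_1$ dangling vertices between them (replacing $X$ by $X$ minus those $j_2 - j_1$ vertices of $Y$) does not change the final state of $\autom$ at the root, so the result still satisfies $\Phi$ — and it still has treewidth $\le w$ since it is an induced subgraph of $G[X]$. Symmetrically, if $|U \setminus X| > |Q|$, we can \emph{insert} a block of dangling vertices from $U \setminus X$ into $X$ without changing the root state: here we use that each such vertex, added with its neighborhood $N(v)\cap C \subseteq X$, looks to $\autom$ exactly like the dangling vertices already present, and we again pigeonhole on the chain of insertions to find a repeated state, so the insertion of the block between the two repetition points is state-neutral; treewidth stays $\le w$ because we can extend $\Tc'$ with these extra leaf bags of width $\le |C| \le \ell$ hanging off the $N(v)\cap C$ node — wait, that only shows treewidth $\le \ell$, not $\le w$, so I would instead argue treewidth directly: the new graph is $G[X']$ where $X' = X \cup (\text{block})$, and since the added vertices have all their $X'$-neighbors in $C \subseteq X$ and are pairwise nonadjacent (they lie in $S$, an independent set), one shows $\tw(G[X']) \le \tw(G[X])$ by inserting each new vertex into a single new node adjacent to a node of $\Tc'$ containing $N(v) \cap C$ — but again this needs a bag of size $|N(v)\cap C|+1$. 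The clean fix: use that $|U\cap X| > w+1$ too, so at least one vertex $v_0 \in U \cap X$ is already in $X$ with $N(v_0)\cap X = N(v)\cap X$; since $G[X]$ has a tree decomposition of width $\le w$, there is a bag $\beta$ of it containing $v_0$, hence $|N(v_0)\cap X| \le w$ (as $\beta \supseteq N(v_0)\cap X$ would need... actually $N(v_0) \cap X \subseteq \bigcup$ of at most two bags), so in fact $|N(v)\cap C| \le |N(v)\cap X|$ is small, at most $2w$ or so, and each new dangling vertex can be inserted into a fresh node of size $\le 2w+1 \le O(w)$. Combining the removal and insertion arguments with the suffix structure from Lemma \ref{lem:dangvertstruct1}, the modified solution $X'$ has the same (or, after also correcting weights, at least as good) value and a contradiction follows; to handle weights, I would choose which direction to move (remove or insert) according to $\sign_v$: if $\we(v) \ge 0$ for $v\in U$, taking \emph{more} dangling vertices never decreases $\we$, so case (2) $|U\setminus X| > |Q|$ gives a weakly-better, lexicographically-larger solution (insert the highest $\le_V$ vertices, consistent with Lemma \ref{lem:dangvertstruct1}); if $\we(v) < 0$, then case (1) $|U \cap X| > |Q|$ gives a strictly-better solution by removing a state-neutral block.

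The main obstacle, and the step I would spend the most care on, is making the ``chain of dangling leaves'' argument rigorous: I need to actually construct, from an arbitrary width-$w$ tree decomposition of $G[X]$ (resp. $G[X']$), a \emph{width-$\ell$} tree decomposition in which all the dangling vertices of the fixed type $U$ appear as a contiguous chain of simple introduce/forget nodes hanging below a single node whose bag is the common set $N(v)\cap C$, so that $\autom$'s run genuinely passes through a sequence of $|U \cap X|$ states in $Q$ and pigeonhole applies. This requires: (a) taking a width-$w$ tree decomposition $\Tc_0$ of $G[X \setminus U]$, (b) augmenting every bag of $\Tc_0$ with $C$ — raising the width to at most $w + |C| \le 2\ell$, which is still $\le \ell$ only after redefining $\ell$ appropriately, so I'd set $\ell = \max(|C| + w + 1, \dots)$ up front or re-run \Cref{lem:cmsotreeautomaton} at width $O(\ell)$ — noting that all the ``understanding dangling vertices'' machinery already fixed $\autom$ at width $\ell = \max(|C|,w)$, so really I should have defined the automaton at width $2\ell+1$ there; assuming that is done, (c) checking that $\Tc_0$-with-$C$-added is a valid tree decomposition of $G[X\setminus U] \cup (\text{edges within } C)$ and that attaching the dangling chain below a node containing $N(v)\cap C$ yields a valid tree decomposition of all of $G[X]$ — here I use Lemma \ref{lem:dangnbs} ($N(v)\cap X \subseteq C$) to see the only edges incident to a dangling vertex go to $C$, which is in every bag of the spine, and that distinct dangling vertices are nonadjacent since $S$ is independent. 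Once this decomposition is in hand, the run of $\autom$ restricted to the dangling chain is literally a sequence of states, pigeonhole gives a repetition, cutting out or splicing in the corresponding block of dangling vertices changes neither the root state nor membership in ``treewidth $\le w$'' (as we only ever pass to induced subgraphs of $G[X]$, or add independent vertices with $\le O(w)$ neighbors), and the weight/lex comparison with Lemma \ref{lem:dangvertstruct1} closes the contradiction.
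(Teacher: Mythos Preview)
Your overall strategy matches the paper's: argue by contradiction, build a tree decomposition in which the dangling vertices of type $U$ hang as a chain off a single attachment node, use pigeonhole/periodicity of $\autom$-states along that chain to find a state-neutral block, and then add or remove that block according to $\sign_v$. The pigeonhole argument and the case split on the sign are essentially right.

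The genuine gap is your treatment of treewidth after \emph{inserting} vertices, and relatedly the size of the attachment bag. You try to bound $|N(v_0)\cap X|$ for $v_0\in U\cap X$ by arguing from bags of a width-$w$ decomposition (``$\beta \supseteq N(v_0)\cap X$\ldots actually $N(v_0)\cap X \subseteq \bigcup$ of at most two bags\ldots at most $2w$ or so''). This is false: in a graph of treewidth $w$ a single vertex can have arbitrarily many neighbours (the centre of a star has treewidth $1$), and those neighbours need not lie in any bounded number of bags. Even if a $2w$ bound held, inserting a new leaf bag of size $2w+1$ only gives $\tw(G[X'])\le 2w$, so $X'$ need not be a solution. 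Similarly, padding every bag of the spine with all of $C$ destroys the width-$w$ bound and forces you to run $\autom$ at width $\Theta(\ell)$ rather than $\ell$; more importantly, it does nothing for the treewidth of the \emph{solution}.

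The paper fixes exactly this. Since all $v\in U$ share the same neighbourhood $N(U)\cap X$ into $X$ (Lemma~\ref{lem:dangnbs} plus ``same type''), the graph $G[X]$ contains the complete bipartite graph between $U\cap X$ and $N(U)\cap X$; hence $|U\cap X|>w$ forces $|N(U)\cap X|\le w$ (Lemma~\ref{lem:tdsmallnbdang}). Then, by the Helly property of subtrees, $N(U)\cap X$ lies in a single bag of any width-$w$ tree decomposition of $G[X]$, so one can arrange a width-$w$ decomposition of $G[X\setminus U]$ with a leaf bag exactly $N(U)\cap X$ (Lemma~\ref{lem:tdcompbipartite}). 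Attaching the chain of bags $\{v\}\cup(N(U)\cap X)$, each of size $\le w+1$, now gives width-$w$ decompositions of $G[(X\setminus U)\cup Y]$ for \emph{any} $Y\subseteq U$ (Lemma~\ref{lem:tdsetY}); in particular $\tw(G[X'])\le w$ after insertion as well as deletion, with no padding by $C$ anywhere. With this in hand, your periodicity/pigeonhole step (this is Lemma~\ref{lem:dangvertaddordele}) and your sign-based case split finish the proof exactly as you outlined.
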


We prove \Cref{lem:dangvertstruct2} through a sequence of auxiliary lemmas.
The key idea is to exploit the “finite-stateness” of the automaton~$\autom$.
We begin with a couple of lemmas about tree decompositions.

\begin{lemma}
\label{lem:tdsmallnbdang}
If $|U \cap X| > w$, then $|N(U) \cap X| \le w$.
\end{lemma}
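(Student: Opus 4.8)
\textbf{Proof plan for \Cref{lem:tdsmallnbdang}.}
The plan is to exhibit a large induced matching incident to $B$ in case both $|U \cap X|$ and $|N(U) \cap X|$ are larger than $w$, contradicting $\mu(B) \le k$; actually, since we only want $|N(U) \cap X| \le w$, we will argue directly. First, recall from \Cref{lem:dangnbs} that every dangling vertex $v \in U$ satisfies $N(v) \cap X \subseteq C$, and moreover, since all vertices of $U$ have the same type, they all have the \emph{same} neighborhood $N(v) \cap C$ into $C$, call it $C_U \subseteq C$. Hence every vertex of $U \cap X$ has its entire $X$-neighborhood equal to $C_U$, so $N(U) \cap X = C_U$ whenever $U \cap X \neq \emptyset$. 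Thus the claim reduces to showing $|C_U| \le w$ under the hypothesis $|U \cap X| > w$.

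Next I would look at the bipartite graph between $U \cap X$ and $C_U$ inside $G[X]$: it is a complete bipartite graph $K_{|U \cap X|, |C_U|}$ (every dangling vertex of $U$ in $X$ is adjacent to all of $C_U$ and nothing else in $X$), and it is a subgraph of $G[X]$, which has treewidth at most $w$. A complete bipartite graph $K_{a,b}$ with $a, b \ge w+1$ has treewidth at least $w+1$ (indeed $\tw(K_{a,b}) = \min(a,b)$), so if $|U \cap X| \ge w+1$ then we must have $|C_U| \le w$. This gives $|N(U) \cap X| = |C_U| \le w$ exactly as desired. The one routine point to verify is the treewidth lower bound for complete bipartite graphs, which is standard (e.g.\ $K_{a,b}$ with $a \le b$ has a bramble of order $a+1$, or one can cite that $\min(a,b)$ is the treewidth); alternatively one can use \Cref{lem:twsparseness}: if $|C_U| \ge w+1$ and $|U\cap X| \ge w+1$ then the $K_{|U\cap X|,|C_U|}$ subgraph on $N$ vertices has at least $(w+1)^2 \ge N \cdot$ (something forcing too many edges) — but the clean route is the bramble/treewidth fact.

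The main obstacle, such as it is, is making sure the identification $N(U) \cap X = C_U$ is justified — this is where the hypothesis $|U \cap X| > w$ (in particular $U \cap X \neq \emptyset$) is actually used, since if $U \cap X = \emptyset$ then $N(U) \cap X$ could legitimately be a different, possibly larger set. Once that bookkeeping is in place, the argument is just the observation that $G[X]$ cannot contain $K_{w+1,w+1}$ as a subgraph because its treewidth would exceed $w$. I would write it as: suppose $|U \cap X| > w$; pick any $v \in U \cap X$, then $N(U) \cap X = N(v) \cap X$ by \Cref{lem:dangnbs} and equality of types, and the subgraph of $G[X]$ induced on $(U \cap X) \cup (N(v) \cap X)$ contains $K_{|U\cap X|, |N(v)\cap X|}$ as a subgraph; since $|U \cap X| \ge w+1$ and $\tw(G[X]) \le w$, we get $|N(v) \cap X| \le w$, hence $|N(U) \cap X| \le w$.
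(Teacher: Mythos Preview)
Your proposal is correct and follows essentially the same approach as the paper: both arguments observe that all vertices of $U$ share the identical $X$-neighborhood $C_U = N(v)\cap C$ (by \Cref{lem:dangnbs} together with equality of types), so $(U\cap X)\cup C_U$ induces a complete bipartite subgraph of $G[X]$, and then invoke $\tw(K_{a,b})=\min(a,b)$ to bound $|C_U|\le w$. The paper compresses this into two sentences; your version just spells out the identification $N(U)\cap X = C_U$ more explicitly (and your worry that this needs $U\cap X\neq\emptyset$ is in fact unnecessary, since \Cref{lem:dangnbs} applies to all dangling vertices, not only those in $X$).
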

\begin{proof}
If $|U \cap X| > w$ and $|N(U) \cap X| > w$, then $G[N[U] \cap X]$ would contain a complete bipartite graph with each side having more than $w$ as a subgraph.
However, such a complete bipartite graph has treewidth more than $w$, while the treewidth of $G[X]$ is at most  $ w$.
\end{proof}

\begin{lemma}
\label{lem:tdcompbipartite}
If $|U \cap X| > w+1$, then $G[X \setminus U]$ has a binary tree decomposition $\Tc = (T,\bag)$ of width at most $w$ that contains a leaf node $t$ with $\bag(t) = N(U) \cap X$.

\end{lemma}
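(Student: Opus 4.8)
The plan is to start from an optimal-width tree decomposition of $G[X]$ and surgically remove the vertices of $U$ while grafting on a dedicated leaf bag equal to $N(U) \cap X$. First I would fix an arbitrary binary tree decomposition $\Tc_0 = (T_0, \bag_0)$ of $G[X]$ of width at most $w$; such a decomposition exists since $\tw(G[X]) \le w$ (and binarizing a tree decomposition does not increase its width). Restricting every bag to $X \setminus U$, i.e.\ setting $\bag_0'(t) = \bag_0(t) \setminus U$, immediately yields a tree decomposition of $G[X \setminus U]$ of width at most $w$: the vertex-, edge-, and connectedness-conditions are all inherited, since deleting a vertex from every bag of a tree decomposition of $G$ gives a tree decomposition of $G$ minus that vertex, and we are just doing this for all vertices of $U$ at once.

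The remaining task is to attach a leaf node whose bag is exactly $N(U) \cap X$. The key point is that $N(U) \cap X$ is ``contained in a single bag'' of $\Tc_0$ in the following sense: by Lemma \ref{lem:tdsmallnbdang}, since $|U \cap X| > w+1 > w$ we have $|N(U) \cap X| \le w$, so $N(U) \cap X$ has at most $w$ vertices; moreover I claim the subtrees $\{t : v \in \bag_0(t)\}$ over $v \in N(U) \cap X$ have a common node. Indeed, pick any $u \in U \cap X$ (nonempty since $|U \cap X| > w+1 \ge 1$); every $v \in N(U) \cap X$ that is a neighbor of this particular $u$ shares a bag with $u$, but in general different vertices of $N(U) \cap X$ may attach to different vertices of $U$. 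To handle this cleanly I would instead use the standard fact that in any tree decomposition, the bags containing a fixed vertex $u$ form a subtree $T_u$, and for the vertices in $N(U) \cap X$ — which all lie in the neighborhood of the connected-in-$G[X]$-or-not set $U$ — one argues via the Helly property of subtrees of a tree together with $\tw(G[N[U] \cap X]) \le w$ (using Lemma \ref{lem:tdsmallnbdang} again) that $N(U) \cap X$ is a clique-like set forced into a common bag after a bounded contraction; concretely, contract $U$ to a single vertex in $G[X]$ and observe the resulting graph still has treewidth $\le w$ and contains $N(U)\cap X$ inside the closed neighborhood of that contracted vertex, hence all of $N(U)\cap X$ plus the contracted vertex sit in a common bag, so in particular $N(U)\cap X$ sits in a common bag $\bag_0(t^\star)$ of $\Tc_0$, and therefore $N(U)\cap X \subseteq \bag_0'(t^\star)$.

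Once a node $t^\star$ with $N(U) \cap X \subseteq \bag_0'(t^\star)$ is identified, I finish by subdividing the edge from $t^\star$ to one of its children (or adding a child if $t^\star$ is a leaf), inserting a new node $t$ with $\bag(t) = N(U) \cap X$ adjacent only to (a copy of) $t^\star$, and then re-binarizing locally if $t^\star$ acquires a third neighbor. Since $N(U) \cap X \subseteq \bag_0'(t^\star)$, the connectedness-condition is preserved for every vertex of $N(U) \cap X$, and no new edges need to be covered, so $\Tc = (T,\bag)$ is a valid binary tree decomposition of $G[X \setminus U]$ of width at most $w$ with $t$ a leaf and $\bag(t) = N(U) \cap X$, as required. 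The main obstacle is the ``common bag'' claim for $N(U) \cap X$: one must be careful that $U$ need not be connected in $G[X]$, so the cleanest route is the contraction argument (contract all of $U$ to one vertex, note treewidth does not go up, and invoke that the closed neighborhood of a single vertex of a graph of treewidth $\le w$ lies in one bag of some width-$w$ decomposition — or more carefully, that one can choose $\Tc_0$ from the start so that this holds), rather than trying to directly locate a shared node of many subtrees.
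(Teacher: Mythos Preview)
The proposal has a real gap in establishing that $N(U)\cap X$ is contained in a single bag of the initial decomposition $\Tc_0$.

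The crucial fact you are not using is that all vertices of $U$ have the \emph{same} neighbourhood in $X$: by \Cref{lem:dangnbs} we have $N(v)\cap X\subseteq C$ for every dangling vertex $v$, and vertices of $U$ share a type, hence share $N(v)\cap C$; thus $N(v)\cap X=N(U)\cap X$ for every $v\in U$. So your worry that ``different vertices of $N(U)\cap X$ may attach to different vertices of $U$'' is unfounded---in fact $U\cap X$ and $N(U)\cap X$ span a complete bipartite subgraph of $G[X]$. The paper exploits exactly this: if some $x_1,x_2\in N(U)\cap X$ had disjoint subtrees $R_1,R_2$ in $\Tc_0$, then each of the $|U\cap X|>w+1$ vertices $v\in U\cap X$, being adjacent to both $x_1$ and $x_2$, would have its subtree $R_v$ meet both $R_1$ and $R_2$, forcing all the $R_v$'s through a common node on the $R_1$--$R_2$ path and hence a bag of size $>w+1$, a contradiction. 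The Helly property then yields a common bag for all of $N(U)\cap X$.

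Your contraction route, as written, does not close the gap. Since $U$ is an independent set, ``contracting $U$ to a single vertex'' is not a minor operation, so the treewidth bound does not transfer automatically; the twin structure that would let you instead \emph{delete} all but one vertex of $U\cap X$ is precisely the missing fact above. Even granting a minor $G'$ with a vertex $u^\star$ of degree $\le w$, the assertion that $N[u^\star]$ lies in one bag of some width-$w$ decomposition of $G'$ is not standard and you give no argument for it; worse, you then jump from a bag of a decomposition of $G'$ back to a bag of the original $\Tc_0$, which is a non sequitur. Once the common-bag claim is established correctly (as above), the remainder of your construction---restricting bags to $X\setminus U$, attaching a leaf with bag $N(U)\cap X$, and re-binarising---is fine and matches the paper.
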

\begin{proof}
Let $\Tc = (T,\bag)$ be an arbitrary binary tree decomposition of $G[X]$ of width at most $w$.

\begin{claim}
There is a node $t \in V(T)$ such that $N(U) \cap X \subseteq \bag(t)$.
\end{claim}
\begin{claimproof}
Recall that by \Cref{lem:dangnbs}, $N(v) \cap X = N(U) \cap X$ for all $v \in U$.
Let $\Tc = (T,\bag)$ be a tree decomposition of $G[X]$ of width at most $w$.
Suppose that there are $x_1, x_2 \in N(U) \cap X$ such that $\Tc$ does not have a bag containing both $x_1$ and $x_2$.
Let $R_1 \subseteq V(T)$ be the set of nodes $t$ of $T$ with $x_1 \in \bag(t)$ and $R_2 \subseteq V(T)$ the set of nodes $t$ with $x_2 \in \bag(t)$.
The sets $R_1$ and $R_2$ are disjoint.
However, for each $v \in U \cap X$, the set $R_v$ of nodes $t$ of $T$ with $v \in \bag(t)$ must overlap with both $R_1$ and $R_2$.
Because each of these sets are connected, it follows that there exists a bag of $\Tc$ containing all of $U \cap X$, but this is impossible since $|U \cap X| > w+1$.
It follows that for each pair $x_1, x_2 \in N(U) \cap X$ there is a bag containing both $x_1$ and $x_2$.
Because of the Helly property of connected subtrees, this implies that there is a bag containing $N(U) \cap X$.
\end{claimproof}

Now, by duplicating the node $t$ we can create a node $t'$ with $\bag(t') = \bag(t)$ with at most one child, and then we can add a new child to it with the bag of the child being exactly $N(U) \cap X$.
Finally, we delete all vertices of $U$ from the bags.
\end{proof}

\begin{lemma}
\label{lem:tdsetY}
Let $|U \cap X| > w$.  Then for any non-empty subset $Y \subseteq U$, enumerated as $Y = \{v_1, v_2, \ldots, v_\ell\}$ the graph $G[Y \cup (N(U) \cap X)]$ has a binary tree decomposition $\Tc = (T,\bag)$ of width at most $w$ such that
\begin{enumerate}
\item $T$ consists of a path $t_1, t_2, \ldots, t_{\ell}, t_{\ell+1}$ from the root $r$ to a leaf $t_{\ell+1}$ and for each $i \in [\ell]$ a two-edge path $t_i, a_i, b_i$, where $b_i$ is a leaf,
\item $\bag(t) = \bag(t_i) = N(U) \cap X$ for all $i \in [\ell+1]$,
\item $\bag(a_i) = N(U) \cap X$ for all $i \in [\ell]$, and
\item $\bag(b_i) = N[v_i] \cap X = \{v_i\} \cup (N(U) \cap X)$ for all $i \in [\ell]$.
\end{enumerate}
\end{lemma}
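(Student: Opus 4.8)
The plan is to prove \Cref{lem:tdsetY} by a direct construction: I would take exactly the tree $T$ and bag-function $\bag$ prescribed in items 1--4 of the statement and verify that it is a tree decomposition of $G[Y \cup (N(U)\cap X)]$ of width at most $w$. Before doing so I would record three facts. First, since the elements of $U$ are dangling vertices of the signature $(C,S,D)$ we have $U \subseteq S$, and $S$ is an independent set of $G$ (it lies inside a maximal independent set of $G$ by definition of $\isfam(B)$); hence $U$, and therefore every $Y \subseteq U$, is an independent set of $G$. Second, combining \Cref{lem:dangnbs} with the fact that all vertices of $U$ have the same type --- hence the same neighborhood into $C$ --- every $v \in U$ satisfies $N(v)\cap X = N(U)\cap X$; I abbreviate $N_X := N(U)\cap X$. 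Third, since by hypothesis $|U\cap X| > w$, \Cref{lem:tdsmallnbdang} gives $|N_X| \le w$.

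With these in hand I would build $\Tc = (T,\bag)$ precisely as described: the root path $t_1,\dots,t_{\ell+1}$, the pendant two-edge paths $t_i$--$a_i$--$b_i$ for $i \in [\ell]$, and bags $\bag(t_i)=\bag(a_i)=N_X$ and $\bag(b_i)=\{v_i\}\cup N_X$. The tree is binary because each $t_i$ with $i\le\ell$ has children $t_{i+1}$ and $a_i$, each $a_i$ has the single child $b_i$, and $t_{\ell+1},b_1,\dots,b_\ell$ are leaves; every bag has size at most $|N_X|+1 \le w+1$, so the width is at most $w$; and by the second fact $\bag(b_i)=\{v_i\}\cup N_X$ equals $N[v_i]\cap X$ whenever $v_i \in X$. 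It then remains to check the three axioms. The vertex-condition holds since $\bigcup_t \bag(t) = N_X \cup \{v_1,\dots,v_\ell\} = N_X \cup Y$. For the connectedness-condition, each vertex of $N_X$ lies in every bag (so its bags form the whole tree), and each $v_i$ lies only in $\bag(b_i)$ (note $v_i \notin N_X$, as $N_X \subseteq C$ while dangling vertices avoid $C$). For the edge-condition, the only edges of $G[Y\cup N_X]$ are edges inside $N_X$ --- covered by $\bag(t_1)=N_X$ --- and edges from some $v_i \in Y$ to a vertex of $N_X$ --- covered by $\bag(b_i)=\{v_i\}\cup N_X$ --- since there are no $v_iv_j$ edges by the independence of $U$.

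There is no genuine obstacle in this lemma; it is essentially bookkeeping. The only two points that cannot be skipped are the appeal to \Cref{lem:tdsmallnbdang} --- which is exactly where the hypothesis $|U\cap X| > w$ is used, and what drives the width of the prescribed decomposition down from the a priori bound $|C|$ to $w$ --- and the independence of $U$ inherited from $S$, which is what guarantees that the prescribed bags actually cover every edge of $G[Y \cup (N(U)\cap X)]$.
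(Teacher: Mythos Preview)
Your proposal is correct and follows exactly the same approach as the paper: invoke \Cref{lem:tdsmallnbdang} for the width bound, use that all $v\in U$ share $N(v)\cap X=N(U)\cap X$ (via \Cref{lem:dangnbs} and the common type), and use that $U\subseteq S$ is independent to verify the edge-condition. Your write-up is in fact more detailed than the paper's two-line proof; the only extra observation you make---that the equality $N[v_i]\cap X=\{v_i\}\cup(N(U)\cap X)$ in item~4 literally requires $v_i\in X$---is a harmless notational quibble that does not affect the construction.
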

\begin{proof}
By \Cref{lem:tdsmallnbdang}, we have that $|N(U) \cap X| \le w$.
For each $v \in U$,  $N(v) \cap X = N(U) \cap X$. Therefore,  because $U$ is an independent set, the tree decomposition described in the statement is indeed a tree decomposition of $G[Y \cup (N(U) \cap X)]$. 
\end{proof}

Suppose now that $|U \cap X| > w$, so the tree decomposition described in \Cref{lem:tdsetY} exists for all subsets $Y \subseteq U$.
For a subset $Y \subseteq U$, let $\Tc_Y = (T_Y, \bag_Y)$ be the tree decomposition of $G[Y \cup (N(U) \cap X)]$ given by \Cref{lem:tdsetY}, where the ordering $\{v_1, \ldots, v_{\ell}\}$ of $Y$ is obtained from the total order $\le_V$.
Also, let $q_Y \in Q$ be the state of $\autom$ on $\Tc_Y$.
We then use the ``finite-stateness'' of $\autom$.

\begin{lemma}
\label{lem:dangvertaddordele}
If $|Y| \le |U|-|Q|$, then there exists $Y' \supsetneq Y$ with $q_{Y'} = q_{Y}$.
Also, if $|Y| \ge |Q|$, then there exists $Y' \subsetneq Y$ with $q_{Y'} = q_{Y}$.
\end{lemma}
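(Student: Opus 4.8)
The plan is to use a pigeonhole argument on a nested chain of subsets of $U$, exactly the way finite-stateness is usually exploited for tree automata. First I would set up, for the first claim, a chain of subsets. Start with $Y_0 = Y$ and extend it one vertex at a time: since $|Y| \le |U| - |Q|$, there are at least $|Q|$ vertices of $U$ outside $Y$, so we can pick distinct vertices $u_1, u_2, \ldots, u_{|Q|} \in U \setminus Y$ and form the chain $Y = Y_0 \subsetneq Y_1 \subsetneq \cdots \subsetneq Y_{|Q|}$ with $Y_i = Y \cup \{u_1, \ldots, u_i\}$. Each $Y_i$ is a nonempty subset of $U$ (it contains $Y$, which we may assume nonempty; if $Y = \emptyset$ the statement is about extending the empty set, and the same argument works starting the chain at $Y_1$), so $\Tc_{Y_i}$ and the state $q_{Y_i}$ are defined by \Cref{lem:tdsetY} and the automaton $\autom$. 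This gives $|Q|+1$ states $q_{Y_0}, q_{Y_1}, \ldots, q_{Y_{|Q|}}$ taking values in the set $Q$ of size $|Q|$, so by pigeonhole two of them coincide, say $q_{Y_i} = q_{Y_j}$ with $i < j$; then $Y' := Y_j \supsetneq Y_i \supseteq Y$ works — wait, I need $q_{Y'} = q_Y$, i.e.\ $q_{Y_j} = q_{Y_0}$, so I must be more careful: I should apply pigeonhole to get $q_{Y_i} = q_{Y_j}$, and then argue that the automaton run on the common part is unaffected, so that \emph{appending the block $\{u_{i+1}, \ldots, u_j\}$ again} (or rather, observing the run structure) lets us transfer this equality.

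The cleaner way — and the step I expect to be the main obstacle to state crisply — is the following. The tree decomposition $\Tc_{Y}$ from \Cref{lem:tdsetY} has a very rigid shape: a root-to-leaf path $t_1, \ldots, t_{|Y|+1}$ all with the same bag $N(U) \cap X$, with a little two-edge ``gadget'' $t_i, a_i, b_i$ hanging off $t_i$ that introduces the single vertex $v_i$. Crucially, the automaton processes this bottom-up: the state at node $t_i$ depends only on the state at $t_{i+1}$, the (fixed) bag $N(U)\cap X$, and the gadget contribution of $v_i$, which is itself determined only by $N[v_i]\cap X = \{v_i\} \cup (N(U)\cap X)$. So we get a transition function $g : Q \times U \to Q$, depending only on the fixed set $N(U)\cap X$ and on $v_i$, such that the state ``just below'' $t_i$ equals $g(\text{state just below } t_{i+1}, v_i)$, and $q_Y$ is obtained by composing these $g(\cdot, v_i)$ in the order dictated by $\le_V$ and then applying the $\sigma^1$/$\sigma^0$ steps at the very top and bottom. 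The point is that $q_Y$ is a function of the multiset (really the $\le_V$-ordered sequence) of vertices in $Y$ via composition of the maps $g(\cdot, v)$. Since each $g(\cdot, v)$ is a map $Q \to Q$, and composing such maps, after at most $|Q|$ compositions the image stabilizes into a set that is a subset of any later image — this is the standard fact that an eventually-periodic orbit of a function on a finite set reaches its cycle within $|Q|$ steps, or the semigroup fact that in the transformation monoid on $Q$ any product of $\ge |Q|$ generators has rank $\le$ the rank it eventually reaches.

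For the first claim I would argue: order $U \setminus Y = \{u_1 <_V u_2 <_V \cdots\}$ and consider the sequence of partial states obtained by starting from the ``bottom state'' of the path in $\Tc_Y$ and successively applying $g(\cdot, u_1), g(\cdot, u_2), \ldots$; among the first $|Q|+1$ of these states two are equal, say after $i$ and after $j$ applications with $i<j$, and then inserting the vertices $u_{i+1}, \ldots, u_j$ into $Y$ (they can be placed appropriately in the $\le_V$-order; since the composition of the $g(\cdot,v)$'s is applied in $\le_V$-order this requires a tiny bit of care, but one can instead append them so that they are processed right at the point where the two equal states occur) produces a set $Y' \supsetneq Y$ with an unchanged state $q_{Y'} = q_Y$, because we have inserted a loop $q \mapsto \cdots \mapsto q$ into the run and everything above is unchanged. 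For the second claim, when $|Y| \ge |Q|$, order $Y = \{v_1 <_V \cdots <_V v_m\}$ with $m \ge |Q|$; among the $m+1 \ge |Q|+1$ intermediate states reached while processing $v_1, \ldots, v_m$ in order, two coincide, say after processing $v_1, \ldots, v_i$ and after $v_1, \ldots, v_j$ with $i < j$; deleting $v_{i+1}, \ldots, v_j$ from $Y$ gives $Y' \subsetneq Y$ (still nonempty unless $Y' = \emptyset$, in which case $N(U)\cap X$-only decomposition is used and the argument still matches a state), and the run above is unchanged, so $q_{Y'} = q_Y$. The main obstacle is purely expository: making precise that $q_Y$ factors through an ordered composition of ``single-vertex transition maps'' $g(\cdot, v)$ that depend only on $N(U)\cap X$ and $v$ (using that $N(v)\cap X = N(U)\cap X$ for every $v \in U$ by \Cref{lem:dangnbs}, and the rigid shape of $\Tc_Y$ from \Cref{lem:tdsetY}), and then invoking the elementary pigeonhole on $|Q|+1$ states along the path.
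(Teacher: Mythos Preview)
Your deletion argument (second claim) is fine: pigeonholing on the intermediate states along the root-to-leaf path of $\Tc_Y$ and excising a contiguous block of gadgets yields $Y' \subsetneq Y$ with the same final state, because the remaining vertices are still processed in the same $\le_V$-order and the run above the excised block is unchanged.

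The insertion argument (first claim), however, has a real gap. Your chain $Y = Y_0 \subsetneq \cdots \subsetneq Y_{|Q|}$ and pigeonhole only give $q_{Y_i} = q_{Y_j}$ for some $i<j$, not $q_{Y_0} = q_{Y'}$. Your proposed fix---``append them so that they are processed right at the point where the two equal states occur''---does not work: the position at which each inserted vertex appears in the bottom-up run of $\Tc_{Y'}$ is dictated by its $\le_V$-rank, not by you, so the new vertices interleave with those of $Y$ in a way you cannot control, and the intermediate states you pigeonholed on need not be intermediate states of any actual run on any $\Tc_{Y'}$.

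The paper closes this gap by one observation you are missing. The vertices of $U$ all have the \emph{same type}, and the type records not only $N(v)\cap C$ but also the $\autom$-state $q_v$ of the two-node neighbourhood decomposition $\Tc_v$. The gadget subtree rooted at $a_i$ in $\Tc_Y$ is exactly $\Tc_{v_i}$, so the state at every $a_i$ is the common value $q$; hence the transition at each $t_i$ is the \emph{same} map $h\colon Q\to Q$, $h(s)=\sigma^2(q,s,N(U)\cap X,N(U)\cap X,N(U)\cap X)$, independent of which $v_i$ sits in that gadget. Thus your $g(\cdot,v)$ does not actually depend on $v$, $q_Y$ is a function of $|Y|$ alone, and both claims reduce to the eventual periodicity of the orbit $q_0,q_1,q_2,\ldots$ under $h$: one adds or removes any $c_2$ vertices, where $c_2$ is the cycle length, and the $\le_V$-ordering issue disappears. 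You already have all the surrounding pieces (you cite \Cref{lem:dangnbs} and the rigid shape from \Cref{lem:tdsetY}); what is missing is that the gadget state is pinned down by the type, not merely by the neighbourhood $N(v)\cap X$.
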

\begin{proof}
Let us start by making observations on how $q_Y$ is computed.
Consider a run $\run_{\autom} \colon V(T_Y) \to Q$ of $\autom$ on $\Tc_Y$.
Each subtree of two nodes rooted at a vertex $a_i$ is isomorphic to the neighborhood tree decomposition $\Tc_v$ for the vertex $v \in U$ located in that subtree.
Therefore, if the type of the dangling vertices in $U$ is $(\sign, N, q)$, then $\run_{\autom}(a_i) = q$ for all $i \in [\ell]$.
Therefore, \[\run_{\autom}(t_i) = \transit^2(q, \run_{\autom}(t_{i+1}), N(U) \cap X, N(U) \cap X, N(U) \cap X).\]

It follows that $q_Y$ depends only on $|Y|$, so let us denote, for $m \ge 0$, the value $q_Y$ with $|Y| = m$ by $q_m$.
Furthermore, as the process of applying the function $\transit^2$ iteratively like above can be seen as a directed graph with vertex set $Q$ and each having only one out-edge, there must be integers $1 \le c_1, c_2 \le |Q|$ so that for all $m \ge c_1$, it holds that $q_m = q_{m-c_2}$.

It follows that if $|Y| \le |U|-|Q|$, we can add an arbitrary set of $c_2$ vertices from $U \setminus Y$ to $Y$ to create a strict superset $Y'$ with $q_{Y'} = q_{Y}$, and if $|Y| \ge |Q|$, we can remove an arbitrary set of $c_2$ vertices from $Y$ to create a strict subset $Y'$ with $q_{Y'} = q_{Y}$.
\end{proof}

We can now finish the proof of \Cref{lem:dangvertstruct2}, which we re-state here for convenience.

\lemdangvertstructtwo*
\begin{proof}
For the sake of contradiction, suppose that $|U \cap X| > \max(w+1, |Q|)$ and $|U \setminus X| > |Q|$.
Let $\Tc_1 = (T_1,\bag_1)$ be a binary tree decomposition of $G[X \setminus U]$ of width at most $w$ that has a leaf node $t$ with $\bag(t) = N(U) \cap X$, which exists by \Cref{lem:tdcompbipartite}.
Let also $\Tc_{U \cap X}$ be the tree decomposition of $G[(U \cap X) \cup (N(U) \cap X)]$ given by \Cref{lem:tdsetY}, and denote by $q_{U \cap X}$ the state of $\autom$ on $\Tc_{U \cap X}$.
We observe that by adding $\Tc_{U \cap X}$ to $\Tc_1$ as a child of $t$, we obtain a tree decomposition $\Tc$ of $G[X]$.
Moreover, the state of $\autom$ on $\Tc$ depends only on $\Tc_1$ and $q_{U \cap X}$.
We then consider two cases.

First, suppose that the signature of each vertex in $U$ is $(\ge, N(U) \cap C, q)$.
By \Cref{lem:dangvertaddordele}, there exists a set $Y$ with $U \cap X \subsetneq Y \subseteq U$ so that $q_{Y} = q_{U \cap X}$.
By adding $\Tc_Y$ to $\Tc_1$ as a child of $t$ we obtain a tree decomposition $\Tc'$ of $G[X \cup Y]$, and the state of $\autom$ on $\Tc'$ is the same as the state of $\autom$ on $\Tc$.
Therefore, $X \cup Y$ is a solution, but $\we(X \cup Y) \ge \we(X)$ and $|X \cup Y| > |X|$, so this contradicts that $X$ is the optimal solution.

Then, suppose that the signature of each vertex in $U$ is $(<, N(U) \cap C, q)$.
By \Cref{lem:dangvertaddordele}, there exists a set $Y \subsetneq U \cap X$ so that $q_{Y} = q_{U \cap X}$.
By adding $\Tc_Y$ to $\Tc_1$ as a child of $t$ we obtain a tree decomposition $\Tc'$ of $G[X \cup Y]$, and the state of $\autom$ on $\Tc'$ is the same as the state of $\autom$ on $\Tc$.
Therefore, $(X \setminus U) \cup Y$ is a  solution. However, since $\we((X \setminus U) \cup Y) > \we(X)$, this contradicts the assumption that $X$ is the optimal solution.
\end{proof}

\subsection{Proof of \Cref{lem:bagsignatures}}
We then put together the results of \Cref{subsec:basicsigns} and \Cref{lem:dangvertstruct1,lem:dangvertstruct2} to finish the proof of \Cref{lem:bagsignatures}, which we re-state here for convenience.

\lembagsignatures*
\begin{proof}
It suffices to give an algorithm that, given a set $B \subseteq V(G)$ with $\mu(B) \le k$, computes a family $\fami$ of subsets of $B$, so that for the optimal solution $X$ (if one exists), it is guaranteed that $X \cap B \in \fami$.
This algorithm can then be applied for each bag of $\Tc$, causing an overhead of a factor of $|V(T)|$ in the running time.

We first describe the algorithm and then show its correctness.

We first apply \Cref{lem:isfam} to compute a set $\isfam(B)$ of independent subsets of $B$, so that for each maximal independent set $I$ of $G$, $I \cap B \in \isfam(B)$.
We have that $|\isfam(B)| \le n^{\OO(k)}$ and the running time of the enumeration algorithm is $n^{\OO(k)}$.
We iterate over all $I_B \in \isfam(B)$ and all pairs of sets $C,D \subseteq V(G)$ with $|C|,|D| \le \OO(kw)$ (where the constant in the $\OO$-notation comes from \Cref{lem:basicbsignexists}), and consider the basic $B$-signature $(C,S,D)$.

We fix a tree decomposition automaton $\autom = (\transit^0, \transit^1, \transit^2, Q, F)$ of width $\ell = \max(|C|, w)$, so that $\autom$ accepts a pair $(G,\Tc)$ if and only if $G$ satisfies $\Phi$, and $|Q|, \evaltime(\autom) \le f(\ell, |\Phi|)$ for a computable function.
We compute the set of dangling vertices $U = S \setminus (N(D) \cup C)$, and for each $v \in U$, we compute its type  (with respect to $\autom$).
We partition $U$ into at most $2 \cdot 2^{|C|} \cdot |Q|$ many sets $U_1, \ldots, U_m$ based on their types, and sort each set $U_i$ based on the weight-function $\we$ and the vertex-order $\le_V$.
For each $U_i$, let $\mathcal{U}_i$ be the family of subsets of $U_i$ that consists of the suffixes of $U_i$ of length at most $\max(|Q|, w+1)$ or at least $|U_i| - |Q|$.
We have that $|\mathcal{U}_i| \le 2 \cdot |Q|$.

There are at most   $ (2 \cdot |Q|)^{m} \le (2 \cdot |Q|)^{2 \cdot 2^{|C|} \cdot |Q|}$ ways to select  $U'_i \in \mathcal{U}_i$ for each $i \le m$; we iterate over all selections. 
For each selection, we add set $S \cup (C \cap B) \cup \bigcup_i U'_i$ to the family $\fami$.


The constructed family $\fami$  has at most $g(k, w, |\Phi|) \cdot n^{\OO(kw)}$ sets in it, and we can construct it in time $g(k, w, |\Phi|) \cdot n^{\OO(kw)}$, where $g$ is a computable function.
It remains to prove that $X \cap B \in \fami$.

By \Cref{lem:basicbsignexists}, there exists a basic $B$-signature $(C,S,D)$ of $X$ so that $|C|,|D|= \OO(kw)$ and $S \in \isfam(B)$.  In particular, we have  that $B \cap X \subseteq S \cup (C \cap B)$ and no vertex from  $B \cap N(D) \setminus C$ is in $X$.
We claim that the iteration that considers the $B$-signature $(C,S,D)$ adds $B \cap X$ to $\fami$.
Let $U = S \setminus (N(D) \cup C)$ be the dangling vertices, and let $U_1, \ldots, U_m$ be their partition based on their types.
By \Cref{lem:dangvertstruct1,lem:dangvertstruct2}, when each $U_i$ is sorted according to $\we$ and the vertex-order $\le_V$, the set $X \cap U_i$ consists of either a suffix of length at most $ \max(w+1, |Q|)$, or a suffix of length at least $ |U_i| - |Q|$.
Therefore, $X \cap U_i \in \mathcal{U}_i$, and thus $\fami$ contains $X \cap B$.
\end{proof}

\section{Inner tree decomposition}
\label{sec:innertree}
In this section, we show that if $\Tc$ is a tree decomposition of $G$ with $\mu(\Tc) \le k$, and $X \subseteq V(G)$ with $\tw(G[X]) \le w$, then we can construct a tree decomposition of $G[X]$ of width $\OO(kw^2)$, that ``follows'' the structure of $\Tc$ in a certain precisely defined way.
This is the main ingredient of our dynamic programming algorithm.

First, we will make sure that the tree decomposition $\Tc$ of $G$ itself is very nicely structured.
For this, we introduce several definitions of \emph{nice tree decomposition} (see e.g.~\cite{DBLP:books/sp/Kloks94,DBLP:books/sp/CyganFKLMPPS15,fomin2019kernelization} for nice tree decompositions in the literature).
We define that a rooted tree decomposition $\Tc = (T,\bag)$ is \emph{nice} if each node $t \in V(T)$ either
\begin{enumerate}
\item is a leaf and has $\bag(t) = \emptyset$ (initial-node),
\item has one child $c$, which has $\bag(c) = \bag(t) \setminus v$ for $v \in \bag(t)$ ($v$-introduce-node),
\item has one child $c$, which has $\bag(c) = (\bag(t) \cup \{v\})$ for $v \notin \bag(t)$ ($v$-forget-node),
\item has two children $c_1$, $c_2$, which satisfy $\bag(t) = \bag(c_1) = \bag(c_2)$ (join-node), or 
\item has one child $c$, which has $\bag(c) = \bag(t)$ (neutral-node).
\end{enumerate}

Furthermore, we say that a rooted tree decomposition $\Tc$ is \emph{supernice} if it is nice, and additionally
\begin{enumerate}[resume]
\item for each $v \in V(G)$, the node $\topnode_{\Tc}(v)$ is a neutral-node ($v$-top-node), and
\item for the root $r$ it holds that $\bag(r) = \emptyset$.
\end{enumerate}
In the context of supernice tree decompositions, top-nodes will be called only top-nodes, not neutral-nodes.

Furthermore, we say that $\Tc$ is $\ell$-supernice, if it is supernice, and for each node $t$ that is either an introduce-node, a forget-node, or a join-node, there is a path consisting of $\ell+1$ nodes, starting from the parent of $t$ and ending at its ancestor, such that each node on that path is a neutral-node.
We then observe that by applying standard ideas, any tree decomposition can be turned into $\ell$-supernice.

\begin{lemma}
\label{lem:tdintosupernice}
There is an algorithm that, given a tree decomposition $(T,\bag)$ of a graph $G$ and an integer $\ell$, in time $(|V(T)|+|V(G)|+\ell)^{\OO(1)}$ returns an $\ell$-supernice tree decomposition $(T',\bag')$ of $G$, such that 
\begin{itemize}
\item for each $t' \in V(T')$, there exists $t \in V(T)$ with $\bag'(t') \subseteq \bag(t)$,
\item for each $t \in V(T)$, there exists $t' \in V(T')$ with $\bag'(t') = \bag(t)$, and
\item $|V(T')| \le (|V(T)|+|V(G)|+\ell)^{\OO(1)}$.
\end{itemize}
\end{lemma}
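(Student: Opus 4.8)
The plan is to build $(T',\bag')$ from $(T,\bag)$ by a short pipeline of standard tree-decomposition surgeries (cf.~\cite{DBLP:books/sp/Kloks94,DBLP:books/sp/CyganFKLMPPS15}), maintaining throughout two invariants: (a) every bag of the current decomposition is a subset of some bag of $\Tc$, and (b) every bag of $\Tc$ still occurs verbatim as some bag of the current decomposition. These invariants are precisely the first two bullets of the statement, so the real work is to arrange that, after the whole pipeline, all the structural requirements of ``$\ell$-supernice'' hold at once.

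First I would root $T$ arbitrarily and, by the textbook construction, turn $\Tc$ into a nice binary tree decomposition: make $T$ binary by replacing every node of out-degree $d\ge 3$ with a path of $d-1$ copies of it, each with two children; then replace every tree edge between a parent $t$ and a child $c$ by a path that first forgets the vertices of $\bag(c)\setminus\bag(t)$ one at a time and then introduces those of $\bag(t)\setminus\bag(c)$ one at a time (every intermediate bag is then a subset of $\bag(c)$ or of $\bag(t)$, while the endpoints, in particular the copies of the original bags and the join bags, are kept); finally, extend each leaf downward by a chain of introduce-nodes to an empty bag, and the root upward by a chain of forget-nodes to an empty bag. This produces a nice, binary, rooted tree decomposition with empty root and leaf bags in which every original edge has become a path of length at most twice the largest bag size, i.e.\ at most $2|V(G)|$; invariants (a),(b) are evident. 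Next I would install the top-nodes: for each $v\in V(G)$, letting $t_v=\topnode_{\Tc}(v)$ in the current decomposition, insert a new node $t_v'$ directly below $t_v$ with $\bag(t_v')=\bag(t_v)$ and let $t_v'$ inherit all children of $t_v$; then $t_v$ has the single child $t_v'$ with the same bag, hence is a neutral-node, and it is still the node closest to the root containing $v$. Since in a nice tree decomposition the bag of a non-root node loses at most one vertex when passing to its parent (and the root bag is empty), each node is the top-node of at most one vertex, so these $\le|V(G)|$ insertions happen at pairwise distinct nodes, do not interfere, and leave the decomposition supernice. Finally, for $\ell$-superniceness I would insert, between every introduce-, forget-, or join-node $t$ and its current parent, a path of $\ell+1$ neutral-nodes, each with bag $\bag(t)$, and declare the topmost node inserted above the old root to be the new root. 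Only neutral-nodes are added, so no new interesting node appears; the top-nodes from the previous step are untouched and stay neutral (padding is inserted strictly above or strictly below them, and they themselves are never padded); and the size grows by a factor at most $\ell+2$.

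To conclude I would check that the result is still a tree decomposition of $G$: the vertex-, edge-, and connectedness-conditions are preserved by each surgery in turn, since every inserted node is attached next to a node whose bag already contains all the relevant vertices, and no original or join bag is ever deleted. Putting the size bounds together, $|V(T')|=\OO\bigl((\ell+2)\,|V(G)|\,|V(T)|\bigr)\le(|V(T)|+|V(G)|+\ell)^{\OO(1)}$, and every step runs in time linear in the current size.

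The main obstacle is not any single step but the global consistency; the one real tension arises near the root, where the root must have an empty bag yet every interesting node needs $\ell+1$ neutral ancestors---two demands that cannot be met with the forget-chain above the old root left contiguous. The ``pad every interesting node against its parent'' rule of the last step resolves this by interleaving the forget-chain with neutral-nodes, and the remaining subtlety is simply to verify, as above, that this last round of padding does not destroy the top-nodes installed in the middle step.
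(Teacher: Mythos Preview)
Your proposal is correct and follows essentially the same approach as the paper: a pipeline of standard tree-decomposition surgeries (binarize, make nice with empty leaves and root, pad with neutral copies). The only cosmetic difference is that the paper achieves the neutral top-nodes and the $\ell{+}1$ neutral ancestor paths in a single stroke---by replacing every internal node of each maximal degree-$2$ path by $\ell{+}3$ identical copies---whereas you install top-nodes explicitly and then pad each introduce/forget/join node separately; both routes work and give the same bounds.
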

\begin{proof}
We edit the given three decomposition in several phases.
First, we make sure that if $t$ is a node with degree $\ge 3$, then $\bag(t) = \bag(s)$ for all neighbors $s$ of $t$.
We do this by subdividing each edge of $T$ adjacent to $t$, and let the bags of the subdivision nodes be $\bag(t)$.
This increases $|V(T)|$ by a factor of at most $3$, and can be implemented in $\OO(|V(T)| \cdot |V(G)|)$ time.
Then, we make the maximum degree of $T$ be $3$ by replacing nodes of degree $>3$ by binary trees.
This increases $|V(T)|$ by a factor of at most $3$, and can be implemented in $\OO(|V(T)| \cdot |V(G)|)$ time.
Then, for each leaf $t$ of $T$, we add a new node with empty bag adjacent to $t$, such that in the resulting tree decomposition, every leaf has an empty bag.
Then, we consider each path $P$ of $T$, whose endpoints have degree $\neq 2$, and internal nodes have degree $2$.
If there are two consecutive nodes $s,t$ on such a path with $|\bag(s) \setminus \bag(t)| + |\bag(t) \setminus \bag(s)| \ge 2$, we insert nodes whose bags are subsets of $\bag(s)$ and $\bag(t)$ between them to ensure that $|\bag(s) \setminus \bag(t)| + |\bag(t) \setminus \bag(s)| \le 1$ for any two consecutive nodes $s,t$.
This increases $|V(T)|$ by a factor of $\OO(|V(G)|)$ and can be implemented in $\OO(|V(T)| \cdot |V(G)|^2)$ time.
After that, for every internal node $t$ of such a path, we replace the node by a path of length $\ell+3$ whose all bags are $\bag(t)$.
This increases $|V(T)|$ by a factor of $\OO(\ell)$, and can be implemented in time $\OO(|V(T)| \cdot |V(G)| \cdot \ell)$.

We root the obtained tree decomposition at an arbitrary leaf node, and observe that the resulting rooted tree decomposition is $\ell$-supernice.
Furthermore, it satisfies the other required properties.
\end{proof}

We can then state the main lemma of this section, which now formally states what it means for a tree decomposition of $G[X]$ to ``follow'' a tree decomposition $\Tc$ of $G$.

\begin{lemma}
\label{lem:virtdecomp}
For any two integers $k$ and $w$, there exists an integer $\ell \le \OO(k w^2)$ such that the following holds:
Let $G$ be a graph, $\Tc = (T,\bag)$ an $\ell$-supernice tree decomposition of $G$ with $\mu(\Tc) \le k$, and $X \subseteq V(G)$ such that $\tw(G[X]) \le w$.
There exists a bipartition $(X_1, X_2)$ of $X$ and a mapping $\vbag \colon V(T) \to 2^{V(G)}$ such that
\begin{enumerate}
\item $\hat{\Tc} = (T,\vbag)$ is a tree decomposition of $G[X]$ of width at most $\ell$,\label{lem:virtdecomp:prop1}
\item $\bag(t) \cap X_1 \subseteq \vbag(t) \cap X_1 \subseteq \subtree_{\Tc}(t) \cap X_1$ for all $t \in V(T)$,\label{lem:virtdecomp:prop2}
\item for $v \in X_2$, the only node $t \in V(T)$ with $v \in \vbag(t)$ is $t = \topnode_{\Tc}(v)$,\label{lem:virtdecomp:prop3}
\item and furthermore, each $t \in V(T)$ satisfies:\label{lem:virtdecomp:prop4}
\begin{enumerate}
\item if $t$ is a $v$-introduce-node with child $c$ in $\Tc$, then either $\vbag(t) = \vbag(c)$ or $\vbag(t) = \vbag(c) \cup \{v\}$,
\item if $t$ is a $v$-forget-node with child $c$ in $\Tc$, then either $\vbag(t) = \vbag(c)$ and $v \notin X_2$, or $\vbag(t) = \vbag(c) \setminus \{v\}$ and $v \in X_2$,
\item if $t$ is a join-node with children $c_1,c_2$ in $\Tc$, then $\vbag(t) = \vbag(c_1) \cup \vbag(c_2)$ and $\vbag(c_1) \cap \vbag(c_2) = \vbag(t) \cap \bag(t)$,
\item if $t$ is a neutral-node with child $c$ in $\Tc$, then either $\vbag(t) = \vbag(c)$, or $\vbag(t) = \vbag(c) \setminus \{v\}$ for some $v \in X_1$,
\item if $t$ is a $v$-top-node with child $c$ in $\Tc$, then either $\vbag(t) = \vbag(c)$ and $v \notin X_2$, or $\vbag(t) = \vbag(c) \cup \{v\}$ and $v \in X_2$, and
\item if $t$ is the root of $T$, then $\vbag(t) = \emptyset$.
\end{enumerate}
\end{enumerate}
\end{lemma}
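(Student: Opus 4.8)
The plan is to pin down the bipartition $(X_1,X_2)$ and the map $\vbag$ in essentially the only way the statement permits, verify the structural conditions (routine once the construction is fixed), and reserve the real effort for the width estimate. For the bipartition, call $v\in X$ \emph{high} if $\deg_{G[X]}(v)>2(w+1)$ and \emph{low} otherwise, let $X_H,X_L$ be the sets of high and low vertices, and put $X_1=X_H\cup\{v\in X_L:N(v)\cap X_L\neq\emptyset\}$ and $X_2=X\setminus X_1=\{v\in X_L:N(v)\cap X_L=\emptyset\}$. Then $X_2$ is independent (adjacent vertices of $X_2$ would each witness the other's membership in $X_1$), and every $v\in X_2$ satisfies $N(v)\cap X\subseteq X_1$ and $|N(v)\cap X|\le 2(w+1)$.

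For the inner decomposition, start from the fact that $t\mapsto\bag(t)\cap X_1$ is already a tree decomposition of $G[X_1]$. Place each $v\in X_2$ into the single bag $\topnode_{\Tc}(v)$; since $\Tc$ is nice, $v\mapsto\topnode_{\Tc}(v)$ is injective (distinct top-nodes are separated by the forget-nodes immediately above them), so every bag meets $X_2$ in at most one vertex. To restore the edge condition between $X_1$ and $X_2$, for every edge $uv$ with $u\in X_1$, $v\in X_2$, $u\notin\bag(\topnode_{\Tc}(v))$, also insert $u$ into all bags on the $T$-path from $\topnode_{\Tc}(u)$ to $\topnode_{\Tc}(v)$; a Helly-type argument on the subtrees of $\Tc$ shows $\topnode_{\Tc}(u)$ is then a proper descendant of $\topnode_{\Tc}(v)$ and that this whole path lies in $\{t:v\in\bag(t)\}$, so $u$ still occupies a connected subtree. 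This ``raw'' $\vbag$ is a tree decomposition of $G[X]$ obeying Properties~\ref{lem:virtdecomp:prop2} and~\ref{lem:virtdecomp:prop3}, and a case check shows it obeys Property~\ref{lem:virtdecomp:prop4} except its sub-item~(b): at a $v$-forget-node the raw rule may discard, besides $v$, a few extension-vertices, which sub-items~(a)/(b) disallow. This is precisely where $\ell$-supernice-ness enters: above every forget-node there are $\ell+1$ consecutive neutral-nodes, so we keep those extension-vertices (and $v$ itself when $v\in X_1$) alive through the forget-node and discard them one per node along that buffer, each discard being a legal neutral-node move (sub-item~(d)). The vertices to be discarded above a $v$-forget-node are $v$ alone if $v\in X_1$, or a subset of $N(v)\cap X_1$ of size $\le 2(w+1)$ if $v\in X_2$, so a buffer of length $\ell+1\ge 2(w+1)$ suffices; moreover a node lying on such a buffer is itself a neutral-node and hence has a unique child, so it lies on at most one buffer and the modifications do not stack. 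This establishes all four properties except the width bound.

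For the width bound, observe that the smoothing only shifts the discarding of $X_1$-vertices slightly upwards, so for every $t$ there is a descendant $s$ of $t$ (namely $s=t$ unless $t$ is a forget-node or a buffer node, in which case $s$ is the top-node just below the corresponding forget-node) with $\vbag(t)\subseteq(\bag(s)\cap X_1)\cup(N(\bag(s)\cap X_2)\cap X)\cup\{x_s\}$, where $x_s$ is the unique $X_2$-vertex topped at $s$ (if any) and the inclusion uses that every extension-vertex of a raw bag has, by Helly, an $X_2$-neighbour in that bag. It therefore suffices to bound, for an arbitrary node $s$, each of $|X_H\cap\bag(s)|$, $|X_L\cap X_1\cap\bag(s)|$, and $|N(\bag(s)\cap X_2)\cap X|$ by $\OO(kw^2)$, and all three follow one template: build a matching $M$ in $G[X]$ every edge of which is incident to $\bag(s)$ and with $|M|\ge(\text{the quantity})/\OO(w)$; apply \Cref{lem:inducematching} to get an induced matching $M'\subseteq M$ with $|M'|\ge|M|/(w+1)$, still incident to $\bag(s)$; then the quantity is at most $\OO(w^2)\cdot|M'|\le\OO(w^2)\cdot\mu(\bag(s))\le\OO(kw^2)$. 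The matchings: for $X_H\cap\bag(s)$, first extract by \Cref{lem:twsparseness} an independent subset of size $\ge|X_H\cap\bag(s)|/(w+1)$ and run Hall's theorem between it and its $G[X]$-neighbourhood — a Hall violation would produce a subgraph of $G[X]$ of average degree exceeding $2w$, impossible for treewidth $\le w$ by \Cref{lem:twsparseness} — yielding a saturating matching incident to $\bag(s)$. For $X_L\cap X_1\cap\bag(s)$, each such vertex has a neighbour in $X_L$ and $X_L$ has maximum degree $\le 2(w+1)$ in $G[X]$, so a greedy bounded-degree argument gives a matching of size $\ge|X_L\cap X_1\cap\bag(s)|/(2(w+1)+1)$ incident to the set. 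For $N(\bag(s)\cap X_2)\cap X$, route each of its vertices to a chosen neighbour in $\bag(s)\cap X_2$ — of $G[X]$-degree $\le 2(w+1)$ — and retain one preimage per target, giving a matching of size $\ge|N(\bag(s)\cap X_2)\cap X|/(2(w+1))$ incident to $\bag(s)$. Hence $|\vbag(t)|\le\OO(kw^2)$, and we take $\ell=\OO(kw^2)$, chosen large enough to also exceed $2(w+1)$.

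The conceptual crux is manufacturing these three matchings incident to $\bag(s)$ inside a graph of treewidth $\le w$ — in particular the high-degree case via Hall's theorem against the treewidth bound, and the $N(\bag(s)\cap X_2)$ case — since that is exactly where both $\tw(G[X])\le w$ and $\mu(\Tc)\le k$ are spent; once any matching incident to $\bag(s)$ is in hand, \Cref{lem:inducematching} finishes. The remaining difficulty is organisational rather than conceptual: verifying that the buffer-smoothing makes all six sub-items of Property~\ref{lem:virtdecomp:prop4} hold together with connectedness and the width bound. I expect both to go through as sketched, with the bookkeeping of the construction being the most laborious part.
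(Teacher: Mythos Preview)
Your proposal is correct and follows essentially the same approach as the paper: the same bipartition $X_1=X_H\cup X_{LL}$, $X_2=X_L\setminus X_{LL}$, the same placement of $X_2$-vertices only at their top-nodes, the same upward extension of $X_1$-vertices to cover $X_1$--$X_2$ edges, the same three matching arguments (Hall against the treewidth bound for $X_H$, greedy bounded-degree for $X_{LL}$, neighbour-counting for extension vertices) combined with \Cref{lem:inducematching}, and the same buffer-smoothing to repair Property~4(b). The only cosmetic difference is that you extend a vertex $u\in X_1$ along the explicit path to $\topnode_{\Tc}(v)$ for each $X_2$-neighbour $v$, whereas the paper defines $\mathsf{outH}(t)$ as the $X_H$-vertices in $\subtree_{\Tc}(t)$ with an $X_L$-neighbour in $\bag(t)$; both yield valid raw decompositions with the same width bound.
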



This section is devoted primarily to the proof of \Cref{lem:virtdecomp}. Although we establish \Cref{lem:virtdecomp} via a specific construction, our later applications of the lemma do  not require that $(X_1,X_2)$ or $(T,\vbag)$ actually follow this construction; we need only that they satisfy the properties stated in \Cref{lem:virtdecomp:prop1,lem:virtdecomp:prop2,lem:virtdecomp:prop3,lem:virtdecomp:prop4}.
The main reason why the statement of \Cref{lem:virtdecomp} describes $\hat{\Tc}$ in such detail is because it allows to reduce the number of cases we must handle later when presenting the dynamic-programming procedure.

We observe that in addition to the explicit conditions listed in \Cref{lem:virtdecomp}, they imply the following.

\begin{lemma}
\label{lem:virtdecomp-furtherconstraints}
Any bipartition $(X_1,X_2)$ and a tree decomposition $\hat{\Tc} = (T, \vbag)$ satisfying the requirements of \Cref{lem:virtdecomp} also satisfies
\begin{enumerate}
\item $X_2$ is an independent set, and
\item for all nodes $x \in V(T)$, $N(X_2 \cap \bag(x)) \cap X_1 \cap \subtree_{\Tc}(x) \subseteq \vbag(x)$.\label{lem:virtdecomp-furtherconstraints:item2}
\end{enumerate}
\end{lemma}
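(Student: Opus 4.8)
\textbf{Proof plan for \Cref{lem:virtdecomp-furtherconstraints}.}
The plan is to derive both claimed properties purely from the stated conditions of \Cref{lem:virtdecomp}, without reference to any particular construction. For the first property, that $X_2$ is an independent set: I would argue by contradiction. Suppose $u, v \in X_2$ with $uv \in E(G)$. Since $\hat{\Tc} = (T,\vbag)$ is a tree decomposition of $G[X]$ by \Cref{lem:virtdecomp:prop1}, the edge-condition forces some node $t \in V(T)$ with $\{u,v\} \subseteq \vbag(t)$. But \Cref{lem:virtdecomp:prop3} says the only node whose $\vbag$ contains $u$ is $\topnode_{\Tc}(u)$, and likewise for $v$; so we would need $\topnode_{\Tc}(u) = \topnode_{\Tc}(v) = t$ and $\{u,v\} \subseteq \vbag(t)$. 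I then need to rule this out: since $uv \in E(G)$, the connectedness-condition and edge-condition of the original tree decomposition $\Tc$ together with $\topnode_{\Tc}(u) = \topnode_{\Tc}(v)$ would be the sticking point; more simply, if $\topnode_{\Tc}(u) = t = \topnode_{\Tc}(v)$, then $t$ is the $u$-top-node and also the $v$-top-node, but a top-node is a neutral-node with a single child, and the definition of $\topnode_{\Tc}$ applied to each of $u,v$ is a per-vertex notion, so there is no a priori contradiction from that alone --- instead I would use that in a supernice tree decomposition each $v$-top-node introduces exactly one vertex whose top it is (being a neutral node $\bag$ does not change at $t$, so $u$ and $v$ both already lie in $\bag(c)$ for the child $c$, contradicting that $t$, not $c$ or lower, is the top of both; more carefully, if $uv\in E(G)$ then by the edge-condition of $\Tc$ some bag contains both, and by connectedness the topmost such node $t_0$ is an ancestor or equal to both $\topnode_\Tc(u)$ and $\topnode_\Tc(v)$, and if the two tops coincide with $t_0$ we still get a consistent picture --- so the cleanest route is: we cannot have $u,v$ simultaneously in a single $\vbag(t)$ with $t$ being the top-node of both, because \Cref{lem:virtdecomp:prop4}(e) says a $v$-top-node adds at most the single vertex $v$ to $\vbag(c)$, so if $t = \topnode_\Tc(u)$ then $u \notin \vbag(c)$ hence $u$ was \emph{added} at $t$, but if also $t = \topnode_\Tc(v)$ then $v$ was added at $t$ too, and property (e) permits adding only one vertex). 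That contradiction establishes independence of $X_2$.

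For the second property, fix $x \in V(T)$ and take $v \in X_2 \cap \bag(x)$ and $u \in N(v) \cap X_1 \cap \subtree_{\Tc}(x)$; I must show $u \in \vbag(x)$. Since $uv \in E(G[X])$, the edge-condition for $\hat{\Tc}$ gives a node $t$ with $\{u,v\} \subseteq \vbag(t)$, and by \Cref{lem:virtdecomp:prop3} necessarily $t = \topnode_{\Tc}(v)$. Because $v \in \bag(x)$, the node $\topnode_{\Tc}(v)$ is an ancestor of $x$ in $\Tc$ (it is the topmost node whose $\bag$ contains $v$, and the nodes containing $v$ form a connected subtree reaching up to it, with $x$ among them). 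So $u \in \vbag(\topnode_{\Tc}(v))$ with $\topnode_{\Tc}(v)$ an ancestor of $x$. Now I invoke \Cref{lem:virtdecomp:prop2}: $u \in \vbag(\topnode_\Tc(v)) \cap X_1 \subseteq \subtree_{\Tc}(\topnode_\Tc(v)) \cap X_1$, which is automatic, but more usefully, I want to move $u$ \emph{down} from $\topnode_\Tc(v)$ to $x$. For this I use that $u \in \subtree_\Tc(x)$ together with the connectedness-condition of $\hat\Tc$: the set of nodes of $T$ whose $\vbag$ contains $u$ is a connected subtree, and it includes $\topnode_\Tc(v)$; if it does not include $x$, then since $\topnode_\Tc(v)$ is an ancestor of $x$, the subtree of $u$ must lie entirely in the part of $T$ above or off to the side, i.e.\ $u \notin \vbag(x')$ for any $x'$ in $\subtree_\Tc(x)$. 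But $u \in X_1 \cap \subtree_\Tc(x)$, so by the vertex-condition of $\hat\Tc$ (it is a tree decomposition of $G[X]$ and $u \in X$) $u$ lies in \emph{some} $\vbag$, and by \Cref{lem:virtdecomp:prop2} applied at $\topnode_\Tc(u)$ we have $\bag(\topnode_\Tc(u)) \cap X_1 \subseteq \vbag(\topnode_\Tc(u))$ so $u \in \vbag(\topnode_\Tc(u))$; since $u \in \subtree_\Tc(x)$, the node $\topnode_\Tc(u)$ is in $\subtree_\Tc(x)$, giving a node of $\subtree_\Tc(x)$ whose $\vbag$ contains $u$, contradicting the previous sentence. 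Hence $x$ lies on the path in $T$ between $\topnode_\Tc(v)$ and $\topnode_\Tc(u)$, both of whose $\vbag$ contain $u$, so by connectedness $u \in \vbag(x)$, as required.

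\textbf{Expected main obstacle.} The delicate point is the independence argument: I must pin down precisely why two vertices of $X_2$ cannot end up adjacent, and this hinges on reading off from \Cref{lem:virtdecomp:prop3} and \Cref{lem:virtdecomp:prop4}(e) that each $v$-top-node can ``activate'' at most one $X_2$-vertex in $\vbag$, so two adjacent $X_2$-vertices would need the edge-condition satisfied at a single shared top-node, which is impossible. Getting the chain of implications --- edge-condition of $\hat\Tc$, then \Cref{lem:virtdecomp:prop3} forcing the witnessing node to be a common top-node, then \Cref{lem:virtdecomp:prop4}(e) forbidding two simultaneous additions --- stated cleanly is the crux; the second property is then a routine connectedness/\Cref{lem:virtdecomp:prop2} chase once the location of the edge-witness is identified as $\topnode_\Tc(v)$.
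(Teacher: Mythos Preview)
Your approach is essentially the paper's: for part~1 use that each $v\in X_2$ appears in exactly one $\vbag$-bag (namely at $\topnode_\Tc(v)$), and for part~2 sandwich $x$ between a node above it and a node below it that both contain the $X_1$-vertex in their $\vbag$, then invoke connectedness. Two remarks.

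\textbf{Part 1 can be much shorter.} The paper simply observes that the nodes $\topnode_\Tc(v)$ are \emph{distinct for distinct $v$}, which follows immediately from niceness: for any non-root node $t$ with parent $p$ one has $|\bag(t)\setminus\bag(p)|\le 1$, so $|\topv_\Tc(t)|\le 1$. With that, two distinct vertices of $X_2$ never share a $\vbag$-bag, so the edge-condition of $\hat\Tc$ fails for any putative edge inside $X_2$. Your route through \Cref{lem:virtdecomp:prop4}(e) also works, but it is doing more than needed and the exposition is wobbly before you land on the right argument.

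\textbf{Part 2 has a small slip.} You claim that since $u\in\subtree_\Tc(x)$, the node $\topnode_\Tc(u)$ lies in the subtree of $T$ rooted at $x$. That is not guaranteed: $u$ could lie in $\bag(x)$ and also in bags above $x$, in which case $\topnode_\Tc(u)$ is a strict ancestor of $x$. The fix is exactly what the paper does: from $u\in\subtree_\Tc(x)$ pick \emph{any} node $y$ in the subtree rooted at $x$ with $u\in\bag(y)$, apply \Cref{lem:virtdecomp:prop2} at $y$ to get $u\in\vbag(y)$, and then conclude $u\in\vbag(x)$ by connectedness of $\hat\Tc$ since $x$ lies on the $T$-path from $y$ to $\topnode_\Tc(v)$. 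Your final summary sentence (``$x$ lies on the path between $\topnode_\Tc(v)$ and $\topnode_\Tc(u)$'') inherits the same issue; replace $\topnode_\Tc(u)$ by $y$.
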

\begin{proof}
Set $X_2$ is independent because each vertex $v \in X_2$ occurs only in the bag of the node $\top_{\Tc}(v)$, and these nodes are distinct for distinct $v$.

For \Cref{lem:virtdecomp-furtherconstraints:item2}, suppose this does not hold for a node $x \in V(T)$, and let $v$ be a vertex such that $v \in N(X_2 \cap \bag(x)) \cap X_1 \cap \subtree_{\Tc}(x)$ but $v \notin \vbag(x)$.
Furthermore, let $u \in X_2 \cap \bag(x)$ such that $uv \in E(G)$.
As $v \in \subtree_{\Tc}(x)$ and $u \in \bag(x)$, $x$ must be on a path from a node $y$ with $v \in \bag(y)$ to the node $z = \topnode_{\Tc}(u)$.
Because $z$ is the only node such that $u \in \vbag(z)$, we have that $v \in \vbag(z)$.
However, also $v \in \vbag(y)$, so the connectedness-condition of tree decompositions implies $v \in \vbag(x)$.
\end{proof}

\subsection*{Proof of \Cref{lem:virtdecomp}}
The rest of this section is dedicated to the proof of \Cref{lem:virtdecomp}.

We let $X_H \subseteq X$ be the vertices in $X$ that have degree $> 2 (w+1)$ in $G[X]$, and $X_L = X \setminus X_H$.
Then, we let $X_{LL} \subseteq X_L$ be the vertices in $X_L$ that have at least one neighbor in $X_L$.
We construct the bipartition $(X_1,X_2)$ by letting
\[(X_1,X_2) = (X_H \cup X_{LL}, X_L \setminus X_{LL}).\]

We then construct the tree decomposition $\hat{\Tc} = (T,\vbag)$.
We first construct a tree decomposition $(T,\vbag')$ that satisfies the properties of \Cref{lem:virtdecomp:prop1,lem:virtdecomp:prop2,lem:virtdecomp:prop3}, and then use the fact that $\Tc$ is $\ell$-supernice to modify $(T,\vbag')$ such that it satisfies \Cref{lem:virtdecomp:prop4}.

\paragraph{Constructing $(T, \vbag')$.}
Let us now construct $(T,\vbag')$.
Intuitively, the idea is to take $(T,\bag)$ restricted to $X$, then restrict the vertices in $X_2$ only to their top bags, and then extend vertices in $X_H$ upwards such that they reach the top bags of all of their neighbors in $X_2$.

For $t \in V(T)$, denote by $\mathsf{outH}(t)$ the vertices in $X_H \cap \subtree_{\Tc}(t)$ that have at least one neighbor in $\bag(t) \cap X_L$.
We also denote by $\topv_{\Tc}(t)$ the set of vertices $v$ such that $\topnode_{\Tc}(v) = t$.
Because $\Tc$ is nice, $|\topv_{\Tc}(t)| \le 1$.
We let
\[\vbag'(t) = (\bag(t) \cap X_1) \cup \mathsf{outH}(t) \cup (\topv_{\Tc}(t) \cap X_2).\]

Let us first prove that $(T,\vbag')$ is a tree decomposition of $G[X]$.

\begin{lemma}
\label{lem:virtdecomp-istd}
The pair $(T,\vbag')$ is a tree decomposition of $G[X]$.
\end{lemma}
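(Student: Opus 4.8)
The plan is to verify the three defining conditions of a tree decomposition in turn, using the definition $\vbag'(t) = (\bag(t) \cap X_1) \cup \mathsf{outH}(t) \cup (\topv_{\Tc}(t) \cap X_2)$. First I would check the \emph{vertex-condition}: every $v \in X$ must appear in some bag $\vbag'(t)$. If $v \in X_2$, then $v \in \topv_{\Tc}(\topnode_{\Tc}(v)) \cap X_2 \subseteq \vbag'(\topnode_{\Tc}(v))$. If $v \in X_1$, then $v \in \bag(t) \cap X_1 \subseteq \vbag'(t)$ for any node $t$ with $v \in \bag(t)$, and such a node exists because $\Tc$ is a tree decomposition of $G$.

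Next I would verify the \emph{edge-condition}: for each edge $uv \in E(G[X])$ we need a node whose $\vbag'$-bag contains both $u$ and $v$. I would split into cases according to the membership of $u,v$ in $X_1$ vs.\ $X_2$. Since $X_2$ is (by construction) a subset of $X_L \setminus X_{LL}$, whose vertices have no neighbor in $X_L$, no edge of $G[X]$ has both endpoints in $X_2$; and similarly, a vertex of $X_2$ has no neighbor in $X_L$ at all, so if one endpoint is in $X_2$ the other must lie in $X_H \subseteq X_1$. Case (a), $u,v \in X_1$: pick a node $t$ of $\Tc$ with $\{u,v\} \subseteq \bag(t)$ (exists since $\Tc$ is a tree decomposition of $G$); then $\{u,v\} \subseteq \bag(t) \cap X_1 \subseteq \vbag'(t)$. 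Case (b), say $u \in X_1$, $v \in X_2$, with necessarily $u \in X_H$: let $t = \topnode_{\Tc}(v)$, so $v \in \topv_{\Tc}(t) \cap X_2 \subseteq \vbag'(t)$. It remains to see $u \in \vbag'(t)$. Since $uv \in E(G)$, $u$ and $v$ appear together in some bag of $\Tc$, and as $t$ is the top node of $v$, that common bag lies in the subtree rooted at $t$; hence $u \in X_H \cap \subtree_{\Tc}(t)$. Moreover $v \in \bag(t)$ and $v \in X_L$, and $uv$ is an edge, so $u$ has a neighbor (namely $v$) in $\bag(t) \cap X_L$, i.e.\ $u \in \mathsf{outH}(t) \subseteq \vbag'(t)$. (If, per chance, $u$ already lies in $\bag(t)$, the same conclusion holds directly since then $u \in \bag(t) \cap X_1$.)

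Finally I would verify the \emph{connectedness-condition}: for each $v \in X$, the set $\{t : v \in \vbag'(t)\}$ is a connected subtree of $T$. For $v \in X_2$ this is immediate, as the only such node is $\topnode_{\Tc}(v)$. For $v \in X_1$, I would show the node-set $\{t : v \in \vbag'(t)\}$ equals the union of $\{t : v \in \bag(t)\}$ (a connected subtree, since $\Tc$ is a tree decomposition) together with, when $v \in X_H$, the set of nodes $t$ with $v \in \mathsf{outH}(t)$; so it suffices to show the union of these two sets is connected. The key structural observation is that if $v \in \mathsf{outH}(t)$ then $v \in X_H \cap \subtree_{\Tc}(t)$, so the subtree $R_v := \{t : v \in \bag(t)\}$ is entirely contained in the subtree rooted at $t$, and in particular $t$ is an ancestor of every node in $R_v$; thus every node of $\{t : v \in \mathsf{outH}(t)\}$ lies on the path from the root towards $R_v$. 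To get connectedness it then remains to check the ``no gaps'' property: if $v \in \mathsf{outH}(t)$ and $t'$ is a node strictly between $t$ and $R_v$ on the tree (i.e.\ $t'$ is a descendant of $t$ and an ancestor of all of $R_v$, or lies within $R_v$), then $v \in \vbag'(t')$. If $t' \in R_v$ this is clear. Otherwise $v \notin \bag(t')$ but $v$ still has a neighbor in $\bag(t') \cap X_L$: indeed, the neighbor $u \in \bag(t) \cap X_L$ of $v$ that witnessed $v \in \mathsf{outH}(t)$ — I would argue that $u$ (or some neighbor of $v$ in $X_L$) is also present in $\bag(t')$, because $v \in \subtree_{\Tc}(t')$ forces any bag of $\Tc$ containing both $u$ and $v$ to lie below $t'$, while $u \in \bag(t)$ with $t$ an ancestor of $t'$, so by connectedness of $R_u$ in $\Tc$ we get $u \in \bag(t')$. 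Hence $v \in \mathsf{outH}(t') \subseteq \vbag'(t')$.

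\textbf{Main obstacle.} The routine parts are the vertex- and edge-conditions; the delicate part is the connectedness-condition for vertices in $X_H$, specifically the ``no gaps'' argument above — one must argue carefully that a vertex $v \in X_H$ extended upward because of a low-degree neighbor $u$ remains in every intermediate bag, which hinges on tracking where $u$ (and hence the common $\Tc$-bag of $u$ and $v$) sits relative to the path being traversed, and on the fact that $u \in X_L$ also persists along that path by connectedness in $\Tc$. I expect this interplay between $\mathsf{outH}$ and the positions of the low-degree witnesses to be the crux of the proof.
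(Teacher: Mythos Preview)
Your proposal is correct and follows essentially the same approach as the paper. The vertex- and edge-conditions are handled identically; for connectedness of a vertex $v \in X_H$, the paper phrases the ``no gaps'' argument slightly more compactly as ``if $v \in \vbag'(t) \setminus \bag(t)$ then $v \in \vbag'(c)$ for a child $c$ of $t$'', but the content---tracking the low-degree witness $u \in \bag(t) \cap X_L$ down via connectedness of $\Tc$---is the same as yours.
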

\begin{proof}
The vertex-condition is satisfied because each $v \in X$ occurs at least in $\vbag'(\topnode_{\Tc}(v))$.

For edges between two vertices in $X_1$, the edge-condition follows from that $\vbag'(t) \cap X_1 \supseteq \bag(t) \cap X_1$.
As vertices in $X_2$ are adjacent in $X$ only to vertices in $X_H$, it remains to consider edges between $X_2$ and $X_H$.
Let $v \in X_2$ and $u \in X_H$ with $uv \in E(G)$.
Because every node whose bag contains $v$ is a descendant of $\topnode_{\Tc}(v)$, it must be that $u \in \subtree_{\Tc}(\topnode_{\Tc}(v))$.
Therefore, $u \in X_H \cap \subtree_{\Tc}(\topnode_{\Tc}(v))$, and $u$ has a neighbor in $\bag(\topnode_{\Tc}(v)) \cap X_L$, so $u \in \mathsf{outH}(\topnode_{\Tc}(v))$, so $\{u,v\} \subseteq \vbag'(\topnode_{\Tc}(v))$.

The connectedness condition is trivial for vertices in $X_{LL}$, as they occupy the same bags in both $(T,\bag)$ and $(T,\vbag')$.
It is also trivial for vertices in $X_2$, as they occur in exactly one bag.
It remains to consider vertices in $X_H$.
For a vertex $v \in X_H$, the set of nodes whose bags contain $v$ in $(T,\vbag')$ is a superset of the set of nodes whose bags contain $v$ in $\Tc$.
Therefore, it suffices to prove that if $v \in \vbag'(t) \setminus \bag(t)$, then $v \in \vbag'(c)$ for a child $c$ of $t$.

If $v \in \vbag'(t) \setminus \bag(t)$, then $v \in \mathsf{outH}(t)$, so $v \in \subtree_{\Tc}(t)$, so it follows that there is exactly one child $c$ of $t$ with $v \in \subtree_{\Tc}(c)$.
Let $u$ be the vertex in $\bag(t) \cap X_L$ that is adjacent to $v$.
Because of the connectedness condition of $\Tc$, it must be that $u \in \bag(c)$, and therefore $v \in \mathsf{outH}(c)$, so $v \in \vbag'(c)$.
\end{proof}

We then bound the width of $(T,\vbag')$.

\begin{lemma}
\label{lem:virtdecomp-width}
For each $t \in V(T)$, $|\vbag'(t)| \le \OO(k w^2)$.
\end{lemma}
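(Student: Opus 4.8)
The plan is to estimate $|\vbag'(t)|$ by bounding each of its three defining parts separately, following the decomposition
$\vbag'(t) = (\bag(t) \cap X_1) \cup \mathsf{outH}(t) \cup (\topv_{\Tc}(t) \cap X_2)$,
exactly as in the proof sketch of \Cref{lem:overview:innerdecomp}. The last part is trivial: since $\Tc$ is nice, $|\topv_{\Tc}(t)| \le 1$, so $|\topv_{\Tc}(t) \cap X_2| \le 1$. For the first part, I would further split $\bag(t) \cap X_1$ into $\bag(t) \cap X_H$ and $\bag(t) \cap X_{LL}$ and handle each with a matching-into-induced-matching argument. For the second part, $\mathsf{outH}(t)$, I would bound $|\bag(t) \cap X_L|$ first (those are the vertices of $X_L$ in the bag that are the ``anchors'' forcing $X_H$-vertices to be pulled up) and then use that each such anchor has degree at most $2(w+1)$ in $G[X]$.

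In more detail: to bound $|\bag(t) \cap X_H|$, use \Cref{lem:twsparseness} on $G[X]$ (treewidth $\le w$) to extract from $\bag(t) \cap X_H$ an independent subset $Y$ with $|Y| \ge |\bag(t) \cap X_H|/(w+1)$. Every vertex of $Y$ has degree $> 2(w+1)$ in $G[X]$; by Hall's theorem applied to the bipartite ``incidence'' structure between $Y$ and $N(Y) \cap X$, either there is a perfect matching saturating $Y$ into $N(Y)\cap X$, or some subset $Y' \subseteq Y$ has $|N(Y') \cap X| < |Y'|$, which combined with the degree bound gives a subgraph of $G[X]$ with more than $|V| \cdot (w+1)$ edges — contradicting the edge bound of \Cref{lem:twsparseness}. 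So we get a matching $M$ of size $|Y|$ with all edges incident to $\bag(t)$; by \Cref{lem:inducematching} it contains an induced matching of size $\ge |M|/(w+1) = |Y|/(w+1)$, all incident to $\bag(t)$, so $\mu(\bag(t)) \ge |Y|/(w+1) \ge |\bag(t)\cap X_H|/(w+1)^2$, giving $|\bag(t)\cap X_H| \le k(w+1)^2$. For $|\bag(t)\cap X_{LL}|$: each such vertex has a neighbor in $X_L$, and all of $X_L$ has degree $\le 2(w+1)$ in $G[X]$, so a greedy argument produces a matching $M$ in $G[X_L]$ with every edge incident to $\bag(t)\cap X_{LL}$ and $|M| \ge |\bag(t)\cap X_{LL}|/(2(w+1)+1)$; again \Cref{lem:inducematching} and $\mu(\bag(t)) \le k$ give $|\bag(t)\cap X_{LL}| \le k(2w+3)(w+1)$. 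The same $X_L$-degree bound handles $\bag(t)\cap X_L$ en route: a matching between $\bag(t)\cap X_L$ (well, the subset adjacent into $X_L$, which is $\bag(t)\cap X_{LL}$) already covered; more directly one shows $|\bag(t)\cap X_L|$ need not be bounded but $|\mathsf{outH}(t)|$ is, via the next step.

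For $\mathsf{outH}(t)$: every $v \in \mathsf{outH}(t)$ has, by definition, a neighbor in $\bag(t) \cap X_L$, so $\mathsf{outH}(t) \subseteq N(\bag(t) \cap X_L) \cap X$. It therefore suffices to bound $|N(\bag(t)\cap X_L)\cap X|$. I would first bound $|\bag(t)\cap X_L|$ itself: actually we only need the neighborhood, and each vertex of $\bag(t)\cap X_L$ has degree $\le 2(w+1)$ in $G[X]$, so $|N(\bag(t)\cap X_L)\cap X| \le 2(w+1)\cdot|\bag(t)\cap X_L|$ — but this still needs $|\bag(t)\cap X_L|$ bounded, which is false in general. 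The fix, matching the sketch, is to bound $|N(\bag(t)\cap X_2)\cap X_1|$ instead, noting that in fact the relevant anchors pulling up $X_H$-vertices are the low-degree vertices, and to argue a matching between $\bag(t)$ (restricted appropriately) and its neighbors in $X$, each edge incident to $\bag(t)$, of size $\ge |\mathsf{outH}(t)|/(2(w+1))$ after using the degree bound on the $X_L$-anchors together with the observation that distinct elements of $\mathsf{outH}(t)$ with a common anchor are bounded in number by that anchor's degree; then \Cref{lem:inducematching} with $\mu(\bag(t)) \le k$ gives $|\mathsf{outH}(t)| \le k\cdot 2(w+1)\cdot(w+1)$. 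Summing the three bounds, $|\vbag'(t)| \le k(w+1)^2 + k(2w+3)(w+1) + 2k(w+1)^2 + 1 = \OO(kw^2)$.

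The main obstacle is the second part, $\mathsf{outH}(t)$: unlike $\bag(t)\cap X_1$, this set is not contained in $\bag(t)$, so one cannot directly feed it to \Cref{lem:inducematching} with $\mu(\bag(t))$; one must carefully produce a matching whose edges are each incident to $\bag(t)$ (specifically to the low-degree anchor vertices inside $\bag(t)$) while keeping its size proportional to $|\mathsf{outH}(t)|$, using that each anchor has bounded degree and hence is responsible for boundedly many elements of $\mathsf{outH}(t)$. The other two parts are routine Hall/degeneracy arguments.
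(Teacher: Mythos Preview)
Your plan is correct and follows essentially the same decomposition and arguments as the paper: bound $\bag(t)\cap X_H$ via Hall's theorem plus \Cref{lem:inducematching}, bound $\bag(t)\cap X_{LL}$ via a greedy matching in $G[X_L]$ plus \Cref{lem:inducematching}, and bound $\mathsf{outH}(t)$ via a matching between $\mathsf{outH}(t)$ and the low-degree anchors in $\bag(t)\cap X_L$ plus \Cref{lem:inducematching}. The only cosmetic difference is that your discussion of $\mathsf{outH}(t)$ takes a detour through the (correctly abandoned) attempt to bound $|\bag(t)\cap X_L|$ before landing on the direct argument the paper uses: since each anchor in $\bag(t)\cap X_L$ has degree at most $2(w{+}1)$ in $G[X]$, a greedy choice yields a matching of size at least $|\mathsf{outH}(t)|/(2(w{+}1))$ whose every edge is incident to $\bag(t)$.
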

\begin{proof}
The set $\vbag'(t)$ is the union of the four sets $\bag(t) \cap X_H$, $\bag(t) \cap X_{LL}$, $\mathsf{outH}(t)$, and $\topv_{\Tc}(t) \cap X_2$.
Obviously, $|\topv_{\Tc}(t) \cap X_2| \le 1$.
Let us bound the sizes of the three other sets one by one.

\begin{claim}
\label{lem:virtdecomp-width:claim1}
$|\bag(t) \cap X_H| \le k \cdot (w+1)^2$.
\end{claim}
\begin{claimproof}
Suppose otherwise.
As $\tw(G[\bag(t) \cap X_H]) \le w$, there is a subset $X_H' \subseteq \bag(t) \cap X_H$ that is an independent set and has size $|X_H'| > k \cdot (w+1)$.
Consider the bipartite graph $B$ formed by the edges between $X_H'$ and $N(X_H') \cap X$, and note that $\tw(B) \le \tw(G[X]) \le w$.
Because $X_H'$ is an independent set, and each of its vertices has degree $> 2 (w+1)$ in $G[X]$, each of its vertices has also degree $> 2 (w+1)$ in the bipartite graph.
Therefore, for any subset $Y \subseteq X_H'$, it must hold that $Y$ has at least $|Y|$ neighbors in $B$, as otherwise we would obtain a subgraph of $B$ with $< 2 \cdot |Y|$ vertices but $> 2 (w+1) \cdot |Y|$ edges, implying by \Cref{lem:twsparseness} that $B$ would have treewidth $>w$.
Now, Hall's Theorem implies that $B$ contains a matching $M$ of size $|M| = |X_H'| > k \cdot (w+1)$ saturating $X_H'$.
By \Cref{lem:inducematching}, then $G$ contains an induced matching of size $|M|/(w+1) > k$ whose all edges are incident to $\bag(t)$, contradicting that $\mu(\Tc) \le k$.
\end{claimproof}

\begin{claim}
\label{lem:virtdecomp-width:claim2}
$|\bag(t) \cap X_{LL}| \le k \cdot (w+1) \cdot (2 (w+1)+1)$.
\end{claim}
\begin{claimproof}
Suppose otherwise.
As each vertex in $X_{LL}$ has a neighbor in $X_L$, but each vertex in $X_L$ has degree $\le 2 (w+1)$ in $G[X]$, there is a matching $M$ of size $|M| \ge |\bag(t) \cap X_{LL}| / (2 (w+1) + 1) > k \cdot (w+1)$ in $G[X_L]$ whose all edges are incident to $\bag(t) \cap X_{LL}$.
By \Cref{lem:inducematching}, then $G$ contains an induced matching of size $|M|/(w+1) > k$ whose all edges are incident to $\bag(t) \cap X_{LL}$, contradicting that $\mu(\Tc) \le k$.
\end{claimproof}

\begin{claim}
\label{lem:virtdecomp-width:claim3}
$|\mathsf{outH}(t)| \le k \cdot (w+1) \cdot 2 (w+1)$.
\end{claim}
\begin{claimproof}
Suppose otherwise.
Because each vertex in $\mathsf{outH}(t)$ has a neighbor in $\bag(t) \cap X_L$ and each vertex in $\bag(t) \cap X_L$ has degree $\le 2 (w+1)$ in $G[X]$, there is a matching $M$ of size $|M| \ge |\mathsf{outH}(t)|/(2 (w+1)) > k \cdot (w+1)$ between $\mathsf{outH}(t)$ and $\bag(t) \cap X_L$.
By \Cref{lem:inducematching}, then $G$ contains an induced matching of size $|M|/(w+1) > k$ whose all edges are incident to $\bag(t) \cap X_L$, contradicting that $\mu(\Tc) \le k$.
\end{claimproof}
\Cref{lem:virtdecomp-width:claim1,lem:virtdecomp-width:claim2,lem:virtdecomp-width:claim3} complete the proof.
\end{proof}

\Cref{lem:virtdecomp-istd,lem:virtdecomp-width} establish that $(T,\vbag')$ satisfies \Cref{lem:virtdecomp:prop1}, while \Cref{lem:virtdecomp:prop2,lem:virtdecomp:prop3} easily follow from the construction of $(T,\vbag')$.

\paragraph{Modifying $(T,\vbag')$ into $(T,\vbag)$.}
We then modify $(T,\vbag')$ into $(T,\vbag)$.
The only reason why $(T,\vbag')$ does not already satisfy \Cref{lem:virtdecomp:prop4} is that many vertices may be forgotten at once, i.e., for a forget-node $t$ with a child $c$ it may hold that $|\bag(c) \setminus \bag(t)| \ge 2$.
The remedy for this will be that these vertices will instead be forgotten one-by-one in the $\ell+1$ neutral-nodes above $t$, which exists because $\Tc$ is $\ell$-supernice.

We then make the above discussion more formal.
Let us start by showing what happens at neutral-nodes and forget-nodes in $(T,\vbag')$.
(Recall that a top-node is not a neutral-node here.)

\begin{lemma}
\label{lem:innerdecompneutralnode}
If $t$ is a neutral-node with child $c$ in $\Tc$, then $\vbag'(t) = \vbag'(c)$.
\end{lemma}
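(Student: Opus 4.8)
The plan is to unfold the definitions of $\vbag'$ at $t$ and at its child $c$, and observe that a neutral-node changes nothing relevant: $\bag(t) = \bag(c)$, and $\subtree_{\Tc}(t) = \subtree_{\Tc}(c)$ since $c$ is the only child of $t$. So the three pieces that build up $\vbag'$ should all coincide. Recall $\vbag'(s) = (\bag(s) \cap X_1) \cup \mathsf{outH}(s) \cup (\topv_{\Tc}(s) \cap X_2)$.

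\begin{proof}
Let $t$ be a neutral-node with child $c$, so $\bag(t) = \bag(c)$ and, since $c$ is the only child of $t$, also $\subtree_{\Tc}(t) = \subtree_{\Tc}(c)$. We compare the three sets whose union defines $\vbag'$. First, $\bag(t) \cap X_1 = \bag(c) \cap X_1$ since $\bag(t) = \bag(c)$. Second, since $t$ is a neutral-node and not a top-node, we have $\topnode_{\Tc}(v) \ne t$ for every $v \in V(G)$, hence $\topv_{\Tc}(t) = \emptyset$; in particular $\topv_{\Tc}(t) \cap X_2 = \emptyset$. (Note that $\topv_{\Tc}(c) \cap X_2$ may also be empty or a single vertex; we handle this via $\mathsf{outH}$ below—more precisely, we do not need to: we only need $\vbag'(t) \subseteq \vbag'(c)$ and $\vbag'(c) \subseteq \vbag'(t)$, and the $\topv_{\Tc}(c)$ term is accounted for separately.) Actually, the cleanest route is: $\topv_{\Tc}(c) \cap X_2 \subseteq \vbag'(c)$ trivially, and we must show this set is contained in $\vbag'(t)$; but any $v \in \topv_{\Tc}(c)$ has $\topnode_{\Tc}(v) = c \ne t$, so $v \notin \vbag'(t)$ unless $v \in \bag(t) \cap X_1$ or $v \in \mathsf{outH}(t)$—and $v \in X_2$ rules out the first, so we would need $v \in \mathsf{outH}(t)$, which fails since $\mathsf{outH}(t) \subseteq X_H \subseteq X_1$. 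This suggests the correct argument is instead that $\topv_{\Tc}(c) \cap X_2 = \emptyset$ as well when $c$ is a child of a neutral-node; but that need not hold.

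The resolution is that one should observe directly that a neutral-node in a \emph{supernice} tree decomposition has the property that the vertex set appearing in $\vbag'$ is determined entirely by $\bag(\cdot)$ and $\subtree_{\Tc}(\cdot)$, both of which are equal at $t$ and $c$, \emph{provided} the $\topv$ term vanishes at both. Since $\Tc$ is supernice, every node $s$ with $\topv_{\Tc}(s) \ne \emptyset$ is a top-node, hence not a neutral-node; thus $\topv_{\Tc}(t) \cap X_2 = \emptyset$. For $c$: if $c$ were a top-node, then its parent $t$ would be a top-node's parent; but there is no constraint forbidding a neutral-node from being the parent of a top-node. So if $\topv_{\Tc}(c) = \{v\}$ with $v \in X_2$, then $v \in \vbag'(c)$ but $v \notin \vbag'(t)$, contradicting the lemma. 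This means the lemma as literally stated relies on the implicit fact that a top-node is never the child of a neutral-node—equivalently, that the parent of a top-node is a forget-node. Indeed: if $c = \topnode_{\Tc}(v)$ is a top-node with parent $t$, then $v \in \bag(c)$ but $v \notin \bag(t)$ (since $c$ is the topmost node containing $v$), so $t$ is a $v$-forget-node, not a neutral-node. Thus, when $t$ is a neutral-node, its child $c$ is not a top-node, so $\topv_{\Tc}(c) \cap X_2 = \emptyset$ too. Now $\mathsf{outH}(t) = \mathsf{outH}(c)$ follows from $\subtree_{\Tc}(t) = \subtree_{\Tc}(c)$ and $\bag(t) = \bag(c)$ (the definition of $\mathsf{outH}$ refers only to these). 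Combining the three equalities, $\vbag'(t) = \vbag'(c)$.
\end{proof}

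**The main obstacle** I anticipate is exactly the subtlety above: one must notice that the child of a neutral-node in a supernice tree decomposition cannot itself be a top-node (because the parent of a top-node must forget the top vertex, hence is a forget-node), so that the $\topv$ term vanishes at both $t$ and $c$; with that observed, the rest is a routine unfolding of the definition of $\vbag'$ using $\bag(t) = \bag(c)$ and $\subtree_{\Tc}(t) = \subtree_{\Tc}(c)$.
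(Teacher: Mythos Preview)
Your argument is correct and follows essentially the same approach as the paper: both rely on the fact that $\bag(t)=\bag(c)$ and $\subtree_{\Tc}(t)=\subtree_{\Tc}(c)$, which makes the three constituents of $\vbag'$ coincide. You are more explicit than the paper about why the $\topv_{\Tc}$ term vanishes at both $t$ and $c$ (the paper's one-line proof glosses over this), though your observation could be streamlined: since $\bag(c)=\bag(t)$, every $v\in\bag(c)$ also lies in the bag of $c$'s parent, so $\topv_{\Tc}(c)=\emptyset$ directly, without needing the detour through ``the parent of a top-node is a forget-node.''
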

\begin{proof}
The construction of $\vbag'(t)$ (after choosing $X$) depends only on $\bag(t)$ and $\subtree(t)$, which are the same for $t$ and $c$.
\end{proof}

\begin{lemma}
If $t$ is a $v$-forget-node with child $c$ in $\Tc$, then $\vbag'(t) \subseteq \vbag'(c)$.
\end{lemma}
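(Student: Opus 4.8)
The plan is to observe that at a forget-node the third summand in the definition of $\vbag'$ vanishes, and then to check that each of the two remaining summands is monotone along the tree-edge from $c$ up to $t$. Recall $\vbag'(t) = (\bag(t) \cap X_1) \cup \mathsf{outH}(t) \cup (\topv_{\Tc}(t) \cap X_2)$, and likewise for $c$, so it suffices to place each piece of $\vbag'(t)$ inside $\vbag'(c)$.

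First I would show $\topv_{\Tc}(t) = \emptyset$. Since $\Tc$ is supernice, for every $u \in V(G)$ the node $\topnode_{\Tc}(u)$ is a $u$-top-node, hence structurally a neutral-node (its bag equals that of its unique child). The five node-types of a nice tree decomposition are mutually exclusive: a $v$-forget-node $t$ has $\bag(c) = \bag(t) \cup \{v\}$ with $v \notin \bag(t)$, so $|\bag(c)| = |\bag(t)| + 1 \ne |\bag(t)|$, which rules out $t$ being a neutral-node (top-node). Therefore $t \ne \topnode_{\Tc}(u)$ for all $u \in V(G)$, so $\topv_{\Tc}(t) = \emptyset$, and $\vbag'(t) = (\bag(t) \cap X_1) \cup \mathsf{outH}(t)$.

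Next I would record the two elementary containments $\bag(t) \subseteq \bag(c)$ (immediate from $\bag(c) = \bag(t) \cup \{v\}$) and $\subtree_{\Tc}(t) = \subtree_{\Tc}(c)$ (as $c$ is the unique child of $t$ and $\bag(t) \subseteq \bag(c) \subseteq \subtree_{\Tc}(c)$). From the first, $\bag(t) \cap X_1 \subseteq \bag(c) \cap X_1 \subseteq \vbag'(c)$. For the $\mathsf{outH}$-term: given $u \in \mathsf{outH}(t)$, we have $u \in X_H \cap \subtree_{\Tc}(t) = X_H \cap \subtree_{\Tc}(c)$ and $u$ has a neighbor in $\bag(t) \cap X_L \subseteq \bag(c) \cap X_L$, so by definition $u \in \mathsf{outH}(c) \subseteq \vbag'(c)$. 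Combining the two inclusions gives $\vbag'(t) \subseteq \vbag'(c)$.

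The only step requiring genuine care — and the sole place where one could go wrong — is the claim $\topv_{\Tc}(t) = \emptyset$, which hinges on reading the definition of ``supernice'' correctly: that $\topnode_{\Tc}(u)$ is always a top-node, which in the structural classification is a neutral-node and hence never a forget-node. Once that is settled, the remainder is routine monotonicity of the two building blocks of $\vbag'$ under $\bag(t) \subseteq \bag(c)$ and $\subtree_{\Tc}(t) = \subtree_{\Tc}(c)$.
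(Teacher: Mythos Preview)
Your proof is correct and follows essentially the same route as the paper: establish $\topv_{\Tc}(t)=\emptyset$, then use $\bag(t)\subseteq\bag(c)$ and $\subtree_{\Tc}(t)=\subtree_{\Tc}(c)$ to get the two monotonicity inclusions $\bag(t)\cap X_1\subseteq\bag(c)\cap X_1$ and $\mathsf{outH}(t)\subseteq\mathsf{outH}(c)$. You are in fact more careful than the paper, which simply asserts $\topv_{\Tc}(t)=\emptyset$ without spelling out that a forget-node cannot be a top-node in a supernice decomposition.
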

\begin{proof}
Suppose first that $v \in X$.
As $\bag(t) \subseteq \bag(c)$ and $\topv_{\Tc}(t) = \emptyset$, we have that $(\bag(t) \cap X_1) \cup (\topv_{\Tc}(t) \cap X_1) \subseteq (\bag(c) \cap X_1) \cup (\topv_{\Tc}(c) \cap X_1)$.
Furthermore, because $\subtree_{\Tc}(t) = \subtree_{\Tc}(c)$ and $\bag(t) \subseteq \bag(c)$, it holds that $\mathsf{outH}(t) \subseteq \mathsf{outH}(c)$.
\end{proof}

We then obtain $(T,\vbag)$ from $(T,\vbag')$ as follows.
Let $t$ be a $v$-forget-node of $\Tc$ with child $c$, and let $F = \vbag'(c) \setminus \vbag'(t)$.
Let $t, a_1, a_2, \ldots, a_{\ell+1}$ be the path from $t$ to its ancestor $a_{\ell+1}$, such that each $a_1, \ldots, a_{\ell+1}$ is a neutral-node.
Let $F_1, \ldots, F_{\ell+1}$ be a sequence of subsets of $F \setminus \{v\}$, such that $|F_1| \le 1$, $F_i \subseteq F_{i+1}$, $|F_{i+1}| \le |F_i|+1$, and $F_{|F \setminus \{v\}|} = F \setminus \{v\}$.
Such a sequence exists because $|F| \le \ell+1$.
We observe that $v$ is the only vertex of $F$ that can be in $X_2$, the others must be in $X_1$.

If $v \in X_2$, which implies that $v \in \vbag'(c)$ but $v \notin \vbag'(t)$, we let $\vbag(t) = \vbag'(c) \setminus \{v\}$, and $\vbag(a_i) = \vbag'(c) \setminus (F_i \cup \{v\})$.
Note that $\vbag(a_{\ell+1}) = \vbag'(t) = \vbag'(a_{\ell+1})$.

Otherwise, if $v \notin X_2$, we let $\vbag(t) = \vbag'(c)$, $\vbag(a_i) = \vbag'(c) \setminus F_i$, and $\vbag(a_{\ell+1}) = \vbag'(t)$.
Note that in particular, if $v \in F$, then $\vbag(a_{\ell}) = \vbag'(t) \cup \{v\}$.

For the other nodes $t$, for which the construction of $\vbag(t)$ was not described above, we let $\vbag(t) = \vbag'(t)$.
We observe that $(T,\vbag)$ is a tree decomposition of $G[X]$, because it is obtained from $(T,\vbag')$ by just ``lenghtening'' the subtrees corresponding to some vertices upwards.
Also, the width of $(T,\vbag)$ is no larger than the width of $(T,\vbag')$, because each bag of $(T,\vbag)$ is a subset of a bag of $(T,\vbag')$.

Clearly, the construction of $(T,\vbag)$ satisfies the requirement of \Cref{lem:virtdecomp:prop4} for forget-nodes.
For unmodified neutral-nodes it satisfies it by \Cref{lem:innerdecompneutralnode}.
For the neutral-nodes modified by the above construction, we recall first that the only vertex of $F$ that can be in $X_2$ is $v$, so the construction forgets only vertices from $X_1$ in neutral-nodes.

It remains to prove that the construction satisfies the requirements for introduce-nodes, join-nodes, top-nodes, and for the root.

\begin{lemma}
If $t$ is a $v$-introduce-node with child $c$ in $\Tc$, then either $\vbag(t) = \vbag(c)$ or $\vbag(t) = \vbag(c) \cup \{v\}$.
\end{lemma}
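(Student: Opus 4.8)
The plan is to push the whole statement down to the intermediate map $\vbag'$ and then unfold its definition. First I would note that, since $t$ is a $v$-introduce-node, the modification that turns $(T,\vbag')$ into $(T,\vbag)$ does not touch $t$: that modification only alters forget-nodes and the neutral-node ancestors $a_1,\dots,a_{\ell+1}$ attached to forget-nodes, and an introduce-node is neither. Hence $\vbag(t)=\vbag'(t)$. For the child $c$ I would argue $\vbag(c)=\vbag'(c)$ as well: since $\Tc$ is $\ell$-supernice, the parent of every forget-node and of every $a_i$ with $i\le\ell$ is a neutral-node, whereas the parent of $c$ is the introduce-node $t$; and the only remaining modified nodes are the top nodes $a_{\ell+1}$, whose bags the construction sets equal to $\vbag'$. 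So it suffices to prove $\vbag'(c)\subseteq\vbag'(t)\subseteq\vbag'(c)\cup\{v\}$ together with $v\notin\vbag'(c)$: as the two outer sets then differ by exactly the element $v$, $\vbag'(t)$ must equal one of them, which gives the claim.

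For the sandwich I would first record the relevant structure of $\Tc$ at an introduce-node. From $\bag(c)=\bag(t)\setminus\{v\}$ and the connectedness condition, no node of the subtree rooted at $c$ contains $v$, so $\subtree_\Tc(t)=\subtree_\Tc(c)\cup\{v\}$ with $v\notin\subtree_\Tc(c)$. Also $\topv_\Tc(t)=\topv_\Tc(c)=\emptyset$: no vertex $z$ can have $\topnode_\Tc(z)=c$ because $\bag(c)\subseteq\bag(t)$, and none can have $\topnode_\Tc(z)=t$ because in a supernice decomposition the top node of any vertex is a neutral-node, while $t$, being an introduce-node, is not. Substituting into the definition of $\vbag'$ yields $\vbag'(t)=(\bag(t)\cap X_1)\cup\mathsf{outH}(t)$ and $\vbag'(c)=(\bag(c)\cap X_1)\cup\mathsf{outH}(c)$. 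The $X_1$-parts differ only by a possible $v$, and $\mathsf{outH}(c)\subseteq\mathsf{outH}(t)$ is immediate from $\bag(c)\cap X_L\subseteq\bag(t)\cap X_L$ and $\subtree_\Tc(c)\subseteq\subtree_\Tc(t)$. The only nonroutine point is to show $\mathsf{outH}(t)\setminus\mathsf{outH}(c)\subseteq\{v\}\cup(\bag(c)\cap X_1)$: if a vertex $z\in X_H$ newly enters $\mathsf{outH}(t)$ and $z\ne v$, then its only new potential $X_L$-neighbor in $\bag(t)$ is $v$, so $zv$ is an edge; since all occurrences of $v$ in $\Tc$ lie on $t$ or above it, the edge-condition forces some bag on or above $t$ to contain $z$, and as $z\in\subtree_\Tc(c)$ its (connected) set of occurrences must pass through $c$, whence $z\in\bag(c)\cap X_H\subseteq\bag(c)\cap X_1\subseteq\vbag'(c)$. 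Assembling these inclusions gives $\vbag'(c)\subseteq\vbag'(t)\subseteq\vbag'(c)\cup\{v\}$, and $v\notin\vbag'(c)$ because $v\notin\bag(c)$, $v\notin\subtree_\Tc(c)$, and $\topv_\Tc(c)=\emptyset$.

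I expect the main obstacle to be exactly that last bookkeeping step for $\mathsf{outH}$: ruling out that introducing $v$ drags a high-degree vertex into the bag of $t$ that was not already in the bag of $c$. The connectedness argument above settles it, crucially using that $v$ is introduced at $t$ so that its subtree of occurrences is confined to $t$ and its ancestors. Everything else — the reduction to $\vbag'$ via $\ell$-superniceness, and the computation of $\topv_\Tc(t)$, $\topv_\Tc(c)$, and $\subtree_\Tc(t)$ — is a straightforward unfolding of definitions, and once the sandwich is established the statement follows because $\vbag'(t)$ is squeezed between $\vbag'(c)$ and $\vbag'(c)\cup\{v\}$, two sets differing in the single element $v\notin\vbag'(c)$, while $\vbag(t)=\vbag'(t)$ and $\vbag(c)=\vbag'(c)$.
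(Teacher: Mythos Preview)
Your proof is correct and follows essentially the same route as the paper: reduce from $\vbag$ to $\vbag'$ via $\ell$-superniceness, then compare the three summands $(\bag(\cdot)\cap X_1)$, $\mathsf{outH}(\cdot)$, and $\topv_{\Tc}(\cdot)\cap X_2$ at $t$ and $c$. You are in fact more careful than the paper at the $\mathsf{outH}$ step: the paper asserts $\mathsf{outH}(t)\subseteq\mathsf{outH}(c)\cup\{v\}$ and its argument does not treat the case where the witnessing $X_L$-neighbor is $v$ itself, whereas your weaker inclusion $\mathsf{outH}(t)\setminus\mathsf{outH}(c)\subseteq\{v\}\cup(\bag(c)\cap X_1)$, obtained via the edge- and connectedness-conditions, is exactly what is needed for the sandwich $\vbag'(c)\subseteq\vbag'(t)\subseteq\vbag'(c)\cup\{v\}$.
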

\begin{proof}
We observe that $\vbag(t) = \vbag'(t)$ and $\vbag(c) = \vbag'(c)$, so it suffices to argue the conclusion for $\vbag'$.

As $\bag(t) = \bag(c) \cup \{v\}$, we have that $\bag(t) \cap X_1 \subseteq (\bag(c) \cap X_1) \cup \{v\}$.
Because $t$ is an introduce node, also $\topv_{\Tc}(t) = \topv_{\Tc}(c) = \emptyset$.
It remains to argue that $\mathsf{outH}(t) \subseteq \mathsf{outH}(c) \cup \{v\}$.
Suppose there is $u \in \mathsf{outH}(t) \setminus \{v\}$ that is not in $\mathsf{outH}(c)$.
There exists $w \in \bag(t) \cap X_L$ that is a neighbor of $u$.
Now, $w \in \bag(t) \cap X_L$, so $u \in \mathsf{outH}(c)$.
\end{proof}

\begin{lemma}
If $t$ is a join-node with children $c_1,c_2$ in $\Tc$, then $\vbag(t) = \vbag(c_1) \cup \vbag(c_2)$ and $\vbag(c_1) \cap \vbag(c_2) = \vbag(t) \cap \bag(t)$.
\end{lemma}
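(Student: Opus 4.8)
The plan is to reduce the statement about $\vbag$ to the same statement about the intermediate mapping $\vbag'$, and then to check the latter directly from the explicit formula $\vbag'(s) = (\bag(s) \cap X_1) \cup \mathsf{outH}(s) \cup (\topv_{\Tc}(s) \cap X_2)$. For the reduction, recall that $\vbag$ and $\vbag'$ agree on every node except the $v$-forget-nodes and the neutral-nodes $a_1,\dots,a_{\ell+1}$ lying immediately above them. Since $t$ is a join-node it is neither of those, so $\vbag(t) = \vbag'(t)$. For the children $c_1, c_2$ I would invoke $\ell$-supernicety: the parent of $c_i$ is the join-node $t$, which is not a neutral-node, so $c_i$ cannot be an introduce-, forget-, or join-node (each of those would force its parent to begin a length-$(\ell+1)$ path of neutral-nodes), and $c_i$ cannot be a top-node either, because $\bag(c_i) = \bag(t)$ contradicts that the parent of $\topnode_{\Tc}(v)$ omits $v$. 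Hence each $c_i$ is a neutral-node or an initial-node; an initial-node is never modified, and if $c_i$ is a rewritten neutral-node it must equal some $a_j$, whose parent is the neutral-node $a_{j+1}$ unless $j=\ell+1$, so in fact $c_i = a_{\ell+1}$, and there the construction sets $\vbag(a_{\ell+1}) = \vbag'(a_{\ell+1})$ (using \Cref{lem:innerdecompneutralnode} along the neutral path). Thus $\vbag(c_i) = \vbag'(c_i)$, and it suffices to prove the two identities for $\vbag'$.

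For the computation with $\vbag'$: since $\bag(t) = \bag(c_1) = \bag(c_2)$ and none of $t, c_1, c_2$ is a top-node, the term $\topv_{\Tc}(\cdot) \cap X_2$ vanishes at all three nodes, so $\vbag'(s) = (\bag(t) \cap X_1) \cup \mathsf{outH}(s)$ for $s \in \{t, c_1, c_2\}$. From $\bag(t) = \bag(c_1) \subseteq \subtree_{\Tc}(c_1)$ we get $\subtree_{\Tc}(t) = \subtree_{\Tc}(c_1) \cup \subtree_{\Tc}(c_2)$, and since the set $\bag(t) \cap X_L$ appearing in the definition of $\mathsf{outH}$ is identical at all three nodes, this yields $\mathsf{outH}(t) = \mathsf{outH}(c_1) \cup \mathsf{outH}(c_2)$, hence $\vbag'(t) = \vbag'(c_1) \cup \vbag'(c_2)$. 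For the intersection, the distributive law gives $\vbag'(c_1) \cap \vbag'(c_2) = (\bag(t) \cap X_1) \cup (\mathsf{outH}(c_1) \cap \mathsf{outH}(c_2))$; a vertex in $\mathsf{outH}(c_1) \cap \mathsf{outH}(c_2)$ lies in $\subtree_{\Tc}(c_1) \cap \subtree_{\Tc}(c_2) \subseteq \bag(t)$ (the standard separation property for tree decompositions, using $\bag(c_1) = \bag(c_2) = \bag(t)$), and since $\mathsf{outH}(\cdot) \subseteq X_H \subseteq X_1$ this set already lies inside $\bag(t) \cap X_1$, so $\vbag'(c_1) \cap \vbag'(c_2) = \bag(t) \cap X_1$. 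Similarly $\vbag'(t) \cap \bag(t) = \bag(t) \cap X_1$, because $\mathsf{outH}(t) \cap \bag(t) \subseteq X_1 \cap \bag(t)$. Combining with the reduction above proves both claimed identities for $\vbag$.

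The only point requiring real care — the main obstacle — is the reduction step: verifying that none of $t$, $c_1$, $c_2$ had its bag altered when passing from $\vbag'$ to $\vbag$. This is exactly where the $\ell$-supernice hypothesis is essential (it rules out $c_i$ being an introduce/forget/join-node whose parent is $t$), together with the observation that a child of a join-node can only be a rewritten neutral-node as the topmost node $a_{\ell+1}$ of a lengthened forget-path, a node on which $\vbag$ and $\vbag'$ coincide by \Cref{lem:innerdecompneutralnode}. Once that is settled, the rest is routine set algebra on the explicit formula for $\vbag'$.
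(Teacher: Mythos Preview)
Your proof is correct and follows essentially the same approach as the paper: reduce from $\vbag$ to $\vbag'$ at the three nodes in question, then compute directly with the explicit formula for $\vbag'$ using $\bag(t)=\bag(c_1)=\bag(c_2)$, $\subtree_{\Tc}(t)=\subtree_{\Tc}(c_1)\cup\subtree_{\Tc}(c_2)$, and $\subtree_{\Tc}(c_1)\cap\subtree_{\Tc}(c_2)\subseteq\bag(t)$. Your treatment of the reduction step---arguing via $\ell$-supernicety that each $c_i$ is a neutral- or initial-node, and that even if $c_i$ coincides with a rewritten $a_j$ it must be $a_{\ell+1}$ where $\vbag$ and $\vbag'$ agree---is in fact more carefully spelled out than the paper's, which simply asserts the equality of $\vbag$ and $\vbag'$ at $t,c_1,c_2$.
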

\begin{proof}
We observe that $\vbag(t) = \vbag(c_1) = \vbag(c_2) = \vbag'(t) = \vbag'(c_1) = \vbag'(c_2)$, so it suffices to argue the conclusion for $\vbag'$.

Because $\Tc$ is $\ell$-supernice, both $c_1$ and $c_2$ are neutral-nodes, so $\topv_{\Tc}(t) = \topv_{\Tc}(c_1) = \topv_{\Tc}(c_2) = \emptyset$.
Because $\bag(t) = \bag(c_1) = \bag(c_2)$, we also have that $\bag(t) \cap X_1 = \bag(c_1) \cap X_1 = \bag(c_2) \cap X_1$.
Furthermore, because $\bag(t) = \bag(c_1) = \bag(c_2)$ and $\subtree_{\Tc}(t) = \subtree_{\Tc}(c_1) \cup \subtree_{\Tc}(c_2)$, we have that $\mathsf{outH}(t) = \mathsf{outH}(c_1) \cup \mathsf{outH}(c_2)$.
It follows that $\vbag'(t) = \vbag'(c_1) \cup \vbag'(c_2)$.
To establish that $\vbag(c_1) \cap \vbag(c_2) = \vbag(t) \cap \bag(t)$, we observe that if $v \in \subtree_{\Tc}(c_1) \cap \subtree_{\Tc}(c_2)$, then $v \in \bag(t)$, so $\mathsf{outH}(c_1) \cap \mathsf{outH}(c_2) \subseteq X_1 \cap \bag(t)$.
\end{proof}

\begin{lemma}
If $t$ is a $v$-top-node with child $c$ in $\Tc$, then either $\vbag(t) = \vbag(c)$ and $v \notin X_2$, or $\vbag(t) = \vbag(c) \cup \{v\}$ and $v \in X_2 \setminus \vbag(c)$.
\end{lemma}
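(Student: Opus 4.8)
\end{lemma}

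\begin{proof}[Proof sketch]
The plan is to push everything back to the previously analysed decomposition $(T,\vbag')$: I will argue that $\vbag(t)=\vbag'(t)$ and $\vbag(c)=\vbag'(c)$, and then read off the two alternatives directly from the defining formula $\vbag'(s)=(\bag(s)\cap X_1)\cup\mathsf{outH}(s)\cup(\topv_{\Tc}(s)\cap X_2)$.

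\emph{First}, I would check that neither $t$ nor $c$ is touched by the forget-node surgery turning $(T,\vbag')$ into $(T,\vbag)$, which only alters the bags of forget-nodes and of certain neutral-nodes lying on the length-$(\ell+1)$ path directly above a forget-node. A $v$-top-node is neither of these (top-nodes are, by convention, not neutral-nodes, and a neutral path cannot pass through a top-node), so $\vbag(t)=\vbag'(t)$. For $c$: since $\Tc$ is $\ell$-supernice and the parent $t$ of $c$ is a top-node and hence not a neutral-node, $c$ cannot be an introduce-, forget-, or join-node (any of those would force its parent to be a neutral-node); $c$ is not an initial-node, as $v\in\bag(c)=\bag(t)\neq\emptyset$; and $c$ is not a top-node, for if $c=\topnode_{\Tc}(u)$ then $u\in\bag(c)=\bag(t)$ would put $u$ in the bag of the parent $t$ of $c$, contradicting $\topnode_{\Tc}(u)=c$. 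Hence $c$ is a neutral-node. If $c$ happens to be one of the neutral-nodes on the path above some forget-node, then, since its parent $t$ is not a neutral-node, $c$ must be the \emph{topmost} such node, where the construction sets $\vbag(c)$ equal to $\vbag'$ of that forget-node, which in turn equals $\vbag'(c)$ by \Cref{lem:innerdecompneutralnode} applied along the intervening chain of neutral-nodes; otherwise $c$ is left unmodified. Either way $\vbag(c)=\vbag'(c)$.

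\emph{Second}, I would evaluate $\vbag'(t)$ and $\vbag'(c)$. Since $t$ is structurally a neutral-node, $\bag(t)=\bag(c)$, hence $\subtree_{\Tc}(t)=\subtree_{\Tc}(c)$, and therefore $\bag(t)\cap X_1=\bag(c)\cap X_1$ and $\mathsf{outH}(t)=\mathsf{outH}(c)$. Moreover $\topv_{\Tc}(t)=\{v\}$ (it contains $v$ because $t=\topnode_{\Tc}(v)$, and it has size at most one since $\Tc$ is nice), while $\topv_{\Tc}(c)=\emptyset$ (any vertex of $\bag(c)=\bag(t)$ already appears in the bag of the parent $t$). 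Substituting into the formula for $\vbag'$ gives $\vbag'(t)=\vbag'(c)\cup(\{v\}\cap X_2)$; I also record that $\vbag'(c)\subseteq X_1$, being the union of $\bag(c)\cap X_1$ and $\mathsf{outH}(c)\subseteq X_H\subseteq X_1$.

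\emph{Finally}, I would split on whether $v\in X_2$. If $v\notin X_2$, then $\vbag(t)=\vbag'(t)=\vbag'(c)=\vbag(c)$, the first alternative. If $v\in X_2$, then $\vbag(t)=\vbag'(c)\cup\{v\}=\vbag(c)\cup\{v\}$, and $v\notin\vbag(c)$ because $\vbag(c)=\vbag'(c)\subseteq X_1$ is disjoint from $X_2\ni v$; this is the second alternative. The only part calling for care is the first step---pinning down that the child of a top-node is an unmodified neutral-node, or at worst the harmless topmost neutral-node above a forget-node---which is precisely where the $\ell$-supernice hypothesis is used; the rest is a routine unfolding of the definition of $\vbag'$.
\end{proof}
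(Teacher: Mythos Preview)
Your proof is correct and follows essentially the same approach as the paper: reduce to $\vbag'$ by checking that neither $t$ nor $c$ is touched by the forget-node surgery, then compare $\vbag'(t)$ and $\vbag'(c)$ via the defining formula using $\bag(t)=\bag(c)$, $\subtree_{\Tc}(t)=\subtree_{\Tc}(c)$, $\topv_{\Tc}(t)=\{v\}$, and $\topv_{\Tc}(c)=\emptyset$. The paper compresses your first step into a bare ``we observe that $\vbag(t)=\vbag'(t)$ and $\vbag(c)=\vbag'(c)$''; your careful case analysis (that $c$ must be a neutral-node, and if modified must be the topmost $a_{\ell+1}$ where $\vbag$ agrees with $\vbag'$) is a legitimate unpacking of that observation, and your explicit note that $\vbag'(c)\subseteq X_1$ cleanly justifies $v\notin\vbag(c)$, which the paper leaves implicit.
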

\begin{proof}
We observe that $\vbag(t) = \vbag'(t)$ and $\vbag(c) = \vbag'(c)$, so it suffices to argue the conclusion for $\vbag'$.

As $\bag(t) = \bag(c)$ and $\subtree_{\Tc}(t) = \subtree_{\Tc}(c)$, we have that $(\bag(t) \cap X_1) \cup \mathsf{outH}(t) = (\bag(c) \cap X_1) \cup \mathsf{outH}(c)$.
The conclusion then follows from the facts that $\topv_{\Tc}(t) = \{v\}$ and $\topv_{\Tc}(c) = \emptyset$.
\end{proof}

\begin{lemma}
If $t$ is the root of $T$, then $\vbag(t) = \emptyset$.
\end{lemma}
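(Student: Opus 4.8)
The plan is to reduce the claim to a direct computation of $\vbag'$ at the root, and then to check that the modification turning $(T,\vbag')$ into $(T,\vbag)$ does not disturb the root bag. First I would recall that since $\Tc$ is supernice, the root $t$ satisfies $\bag(t)=\emptyset$. Plugging this into the definition
$\vbag'(t) = (\bag(t)\cap X_1)\cup \mathsf{outH}(t)\cup(\topv_{\Tc}(t)\cap X_2)$,
all three sets vanish: $\bag(t)\cap X_1=\emptyset$ trivially; $\topv_{\Tc}(t)=\emptyset$ because $\topnode_{\Tc}(v)=t$ would force $v\in\bag(t)$; and $\mathsf{outH}(t)=\emptyset$ because any vertex of $\mathsf{outH}(t)$ must have a neighbour in $\bag(t)\cap X_L=\emptyset$. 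Hence $\vbag'(t)=\emptyset$.

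It then remains to argue that $\vbag(t)=\vbag'(t)$ for the root. The modification from $(T,\vbag')$ to $(T,\vbag)$ changes the bag of a node $s$ only when $s$ is a forget-node, or when $s$ is one of the neutral-nodes $a_1,\dots,a_{\ell+1}$ on the path above some forget-node. Since $\Tc$ is $\ell$-supernice, a forget-node has a parent, so the root is not a forget-node; likewise a node $a_i$ with $i\le\ell$ has $a_{i+1}$ as a proper ancestor, so the root cannot be such a node. The only remaining possibility is $t=a_{\ell+1}$ for some forget-node below it, and there the construction explicitly sets $\vbag(a_{\ell+1})=\vbag'(t)$; moreover $\vbag'(t)=\vbag'(a_{\ell+1})$ because $a_1,\dots,a_{\ell+1}$ are neutral-nodes and $t$ is the child of $a_1$, so repeated application of \Cref{lem:innerdecompneutralnode} gives equality. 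In every case $\vbag(t)=\vbag'(t)=\emptyset$, as desired.

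I do not expect a genuine obstacle here; the only point needing care is the case analysis of whether and how the root could have been touched by the $\vbag'\to\vbag$ modification, and this is entirely controlled by $\ell$-superniceness (forget-nodes and the proper ancestors $a_1,\dots,a_\ell$ all have ancestors above them) together with the explicit definition of $\vbag$ at $a_{\ell+1}$.
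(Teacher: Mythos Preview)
Your argument is correct and follows essentially the same route as the paper: first compute $\vbag'(r)=\emptyset$ directly from $\bag(r)=\emptyset$, then argue that the $\vbag'\to\vbag$ modification cannot touch the root because $\ell$-superniceness forces enough neutral-nodes above any forget-node. Your case analysis (root is not a forget-node, not $a_i$ for $i\le\ell$, and if $a_{\ell+1}$ then $\vbag(a_{\ell+1})=\vbag'(\text{forget-node})=\vbag'(a_{\ell+1})$) is in fact more explicit than the paper's one-line appeal to the $\ell+1$ empty $\vbag'$-bags below the root.

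One presentational issue: in your final case you silently reuse the symbol $t$ for the forget-node (writing ``the construction explicitly sets $\vbag(a_{\ell+1})=\vbag'(t)$'' and ``$t$ is the child of $a_1$''), while earlier in the same paragraph $t$ denotes the root. Rename the forget-node to, say, $s$ throughout that case to avoid the clash; the mathematics is unaffected.
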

\begin{proof}
It follows directly from the definition that $\vbag'(t) = \emptyset$.
We observe that also the edits to forget vertices one by one do not affect to this, because $\Tc$ is $\ell$-supernice and thus below $t$ there is a path of $\ell+1$ nodes $t'$, for each of which it holds that $\vbag'(t') = \emptyset$.
\end{proof}

This finishes the proof of \Cref{lem:virtdecomp}.

\section{Dynamic programming}
\label{sec:dynprog}
In this section, we turn the structural observations of \Cref{sec:innertree} into a dynamic programming algorithm.
In particular, this section is dedicated to the proof of the following lemma.

\begin{restatable}{lemma}{maindplemma}
\label{lem:maindplemma}
There is an algorithm that, given a graph $G$, a weight-function $\we \colon V(G) \to \R$, a tree decomposition $\Tc = (T,\bag)$ of $G$ with $\mu(\Tc) \le k$, a $\CMSO_2$-sentence $\Phi$, an integer $w$, and for each $t \in V(T)$ a family of sets $\fami(t)$, in time $f(k,w,|\Phi|) \cdot n^{\OO(k w^2)} \cdot |V(T)|^{\OO(1)} \cdot \sum_{t \in V(T)} |\fami(t)|^{\OO(1)}$, where $f$ is a computable function, finds a set $X \subseteq V(G)$ such that
\begin{itemize}
\item $\tw(G[X]) \le w$,
\item $G[X]$ satisfies $\Phi$,
\item $X \cap \bag(t) \in \fami(t)$ for all $t \in V(T)$, and
\item among all such sets, $X$ maximizes $\we(X)$,
\end{itemize}
or reports that no such set $X$ exists.
\end{restatable}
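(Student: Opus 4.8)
The plan is a bottom-up dynamic programming over $\Tc$ that simultaneously guesses the set $X$ bag by bag, guesses an inner tree decomposition $\hat{\Tc}=(T,\vbag)$ of $G[X]$ of the shape guaranteed by \Cref{lem:virtdecomp}, and runs a tree decomposition automaton on $\hat{\Tc}$. As preprocessing I would apply \Cref{lem:tdintosupernice} to make $\Tc$ $\ell$-supernice, where $\ell=\OO(kw^2)$ is the constant from \Cref{lem:virtdecomp}; this preserves $\mu(\Tc)\le k$ because every new bag is a subset of an old one, and the families are transported by keeping $\fami(t)$ at the node whose bag equals $\bag(t)$ and, at every other node $t'$, using the restrictions to $\bag(t')$ of the sets in $\fami(t^\star)$ for a fixed original node $t^\star$ with $\bag(t')\subseteq\bag(t^\star)$---this changes neither the set of feasible $X$ nor, up to polynomial factors, $|V(T)|$ or $\sum_t|\fami(t)|$. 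Since the inner decomposition of \Cref{lem:virtdecomp} has width $\ell>w$ we cannot read off $\tw(G[X])\le w$ for free; instead I would use \Cref{lem:cmsotwcheck} to obtain a sentence $\Phi_w$ equivalent to ``treewidth $\le w$'', and apply \Cref{lem:cmsotreeautomaton} to $\Phi\wedge\Phi_w$ and width $\ell$, getting an automaton $\autom=(\sigma^0,\sigma^1,\sigma^2,Q,F)$ with $|Q|,\evaltime(\autom)\le f(k,w,|\Phi|)$. The crucial point is that $\autom$ evaluates $\Phi\wedge\Phi_w$ correctly on $(G[X],\hat{\Tc})$ for \emph{every} width-$\le\ell$ tree decomposition $\hat{\Tc}$ of $G[X]$; thus if the dynamic programming ever builds a structurally valid inner decomposition for some $X$ with $\tw(G[X])>w$, it is rejected through $\Phi_w$.

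The table is indexed by a node $t\in V(T)$ together with a \emph{state} $(Y,W,b,q)$, where $Y\in\fami(t)$ is the guess for $X\cap\bag(t)$; $W\subseteq V(G)$ with $|W|\le\ell$ is the guess for $\vbag(t)$, subject to the consistency forced by \Cref{lem:virtdecomp:prop2,lem:virtdecomp:prop3} (in particular $W\subseteq\subtree_\Tc(t)$ and $W\cap\bag(t)$ equals $X_1\cap\bag(t)$ together with possibly the single vertex $\topv_\Tc(t)$ when that vertex is in $X_2$, so $X_1\cap\bag(t)$ and $X_2\cap\bag(t)$ are read off from $W$, $Y$, and $b$); $b$ is one bit recording, when $\topv_\Tc(t)\in Y$, whether that vertex lies in $X_1$ or $X_2$; and $q\in Q$ is the guess for $\run_\autom(t)$. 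For each index the table stores the maximum over \emph{partial solutions} realizing it of $\we(Z^{(t)})$, where a partial solution at $t$ is a set $Z\subseteq\subtree_\Tc(t)$, a bipartition $(Z_1,Z_2)$ of $Z$, and a map $\vbag_Z$ on the subtree $T_t$ of $T$ rooted at $t$, satisfying the restriction to $T_t$ of \Cref{lem:virtdecomp:prop1,lem:virtdecomp:prop2,lem:virtdecomp:prop3,lem:virtdecomp:prop4} (and hence, by \Cref{lem:virtdecomp-furtherconstraints}, $Z_2$ independent and all $Z_1$-neighbours of $Z_2\cap\bag(s)$ in $\subtree_\Tc(s)$ lying in $\vbag_Z(s)$), with $Z\cap\bag(t)=Y$, $\vbag_Z(t)=W$ and the automaton reaching $q$ at $t$, and where $Z^{(t)}:=\bigcup_{s\in T_t}\vbag_Z(s)$. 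Working with $Z^{(t)}$ rather than with $Z$ (equivalently with $X\cap\subtree_\Tc(t)$) is essential: a vertex of $Z_2$ appears in $\hat{\Tc}_Z$ only at its $\Tc$-top node, a vertex of $Z_1$ appears in $\vbag_Z(s)$ whenever it appears in $\bag(s)$, so $Z^{(t)}$ equals $Z$ minus those vertices of $Z_2\cap\bag(t)$ whose $\Tc$-top node lies strictly above $t$, and $\hat{\Tc}_Z$ restricted to $T_t$ is a genuine tree decomposition of $G[Z^{(t)}]$---which is what makes the automaton state at $t$ meaningful. One has $Z^{(t)}\cap\bag(t)=W\cap\bag(t)$, so the weight of the ``current frontier'' is visible in the state.

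The transitions follow the five supernice node types plus top-nodes, the branching in each being exactly what \Cref{lem:virtdecomp:prop4} permits: a $v$-introduce-node may add $v$ to $Y$ and, iff $v\in X_1$, to $W$ (then adding $\we(v)$, since $v$ enters $Z^{(\cdot)}$ here); a $v$-forget-node either does nothing or, if $v\in X_2$, removes $v$ from $W$; a neutral-node may delete one $X_1$-vertex from $W$; a $v$-top-node, when $v\in X_2$, may add $v$ to $W$ and to $Z^{(\cdot)}$ (adding $\we(v)$); a join-node sets $W=W_{c_1}\cup W_{c_2}$ with $W_{c_1}\cap W_{c_2}=W\cap\bag(t)$ and combines the weights by subtracting $\we(W\cap\bag(t))=\we(Z^{(c_1)}\cap Z^{(c_2)})$; and in every case the new automaton state is obtained by applying the matching transition function of $\autom$ to the guessed bags. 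On top of this we impose the feasibility checks that keep $\hat{\Tc}_Z|_{T_t}$ a genuine tree decomposition of $G[Z^{(t)}]$: at every node $X_2\cap\bag(t)$ must be an independent set of $G$, and a neutral-node transition may drop an $X_1$-vertex from $W$ only if that vertex has no neighbour in $X_2\cap\bag(t)$; together with the consistency constraints (which in particular keep every $X_1$-vertex of a bag inside the inner bag) and the observation that at the lowest node where a vertex occurs all of its $G$-neighbours already appearing have to lie in that bag, these inductively preserve the invariant of \Cref{lem:virtdecomp-furtherconstraints}, so every edge between $X_1$ and $X_2$ eventually lies in a common inner bag. At the root, whose bag and inner bag are empty, $Z^{(\mathrm{root})}=X$, and the algorithm (via standard back-pointers) returns a realizing $X$ of maximum weight with automaton state in $F$, or reports that none exists. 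One direction of correctness is immediate from \Cref{lem:virtdecomp} and \Cref{lem:cmsotreeautomaton}: the optimal feasible $X$ yields a partial solution of weight $\we(X)$ whose root state is accepting (as $\Phi_w$ holds when $\tw(G[X])\le w$); conversely, any root state in $F$ carries a width-$\le\ell$ tree decomposition of $G[X]$, so $\autom$ correctly certifies both $\tw(G[X])\le w$ and $G[X]\models\Phi$, while $X\cap\bag(t)\in\fami(t)$ holds for all $t$ by construction. The running time is the number of indices---$|V(T)|\cdot(\sum_t|\fami(t)|)\cdot n^{\OO(\ell)}\cdot|Q|$---times a polynomial per-transition cost, with the pairing of child states at join-nodes contributing the $\sum_t|\fami(t)|^{\OO(1)}$ factor; substituting $\ell=\OO(kw^2)$ and $|Q|\le f(k,w,|\Phi|)$ gives the claimed bound.

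I expect the main obstacle to be exactly this bookkeeping: pinning down the partial-solution semantics so that the relevant vertex set at $t$ is $Z^{(t)}=\bigcup_{s\in T_t}\vbag_Z(s)$ rather than $X\cap\subtree_\Tc(t)$, and verifying that every requirement of a tree decomposition of $G[X]$---independence of $X_2$, the edge condition between $X_1$ and $X_2$, and connectedness of each vertex's occurrences in $\hat{\Tc}_Z$---can be enforced using only the $\OO(\ell)$-sized information carried by a state, so that no inner decomposition failing to be a tree decomposition of $G[X]$ ever survives to the root and misleads the automaton.
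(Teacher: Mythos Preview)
Your proposal is correct and follows essentially the same approach as the paper: preprocess to an $\ell$-supernice decomposition via \Cref{lem:tdintosupernice}, build an automaton for $\Phi\wedge\Phi_w$ via \Cref{lem:cmsotwcheck,lem:cmsotreeautomaton}, index the table by the bag intersection in $\fami(t)$, the inner bag of bounded size, and an automaton state, and let the transitions mirror the cases of \Cref{lem:virtdecomp:prop4} while enforcing the independence of $X_2$ and the neighbour condition of \Cref{lem:virtdecomp-furtherconstraints}. The only cosmetic differences are that the paper splits your $W$ into $(B_1,B_2)$ rather than using a bit $b$, stores the full pair $(X_1,X_2)\subseteq\subtree_\Tc(t)$ (so $X_2$-vertices are weighted at their introduce-node rather than their top-node), and transports the families to new nodes by intersecting over \emph{all} original supersets rather than one fixed $t^\star$; none of these affect correctness or the running-time bound.
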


We prove \Cref{lem:maindplemma} by first describing the dynamic programming states in \Cref{subsec:dptables}, then the transitions in \Cref{subsec:dptrans}, and then putting things together in \Cref{subsec:dpprooffinal}.

\subsection{Dynamic programming states}
\label{subsec:dptables}
Let us start by describing the dynamic programming states.
For this, we introduce several technical definitions.
We use a parameter $\ell$, which will be set to the parameter $\ell$ of \Cref{lem:virtdecomp}.
We assume that we work with an $\ell$-supernice tree decomposition.

\paragraph{Inner tree decompositions.}
Let $\Tc = (T,\bag)$ be an $\ell$-supernice tree decomposition of $G$, $t \in V(T)$, and $X \subseteq \subtree_{\Tc}(t)$.
Denote by $\Tc_t = (T_t,\bag_t)$ the tree decomposition of $\subtree_{\Tc}(t)$ obtained by letting $T_t$ be the subtree of $T$ rooted at $t$ and $\bag_t$ the restriction of $\bag$ to $V(T_t)$.
Let $(X_1, X_2)$ be a pair of disjoint subsets of $V(G)$.
We say that a tree decomposition of form $(T_t,\vbag_t)$ is a \emph{$(t,B_1,B_2)$-inner tree decomposition} of $(X_1, X_2)$ if
\begin{enumerate}
\item $(T_t, \vbag_t)$ is a tree decomposition of the graph $G[X_1 \cup (X_2 \setminus \bag(t)) \cup B_2]$ of width at most $\ell$,
\item the bipartition $(X_1, (X_2 \setminus \bag(t)) \cup B_2)$ of $X_1 \cup (X_2 \setminus \bag(t)) \cup B_2$ and the tree decomposition $(T_t, \vbag_t)$ satisfy the properties of \Cref{lem:virtdecomp:prop1,lem:virtdecomp:prop2,lem:virtdecomp:prop3,lem:virtdecomp:prop4} of \Cref{lem:virtdecomp} for all nodes of $T_t$,
\item $\vbag_t(t) \cap X_1 = B_1$,
\item $\vbag_t(t) \cap X_2 = B_2$, and
\item for all nodes $x \in V(T_t)$, $N(X_2 \cap \bag(x)) \cap X_1 \cap \subtree_{\Tc}(x) \subseteq \vbag_t(x)$.\label{enum:definnertd:nbitem}
\end{enumerate}

Note that these conditions can be satisfied only when 

\begin{enumerate}
\item $\vbag_t(t) = B_1 \cup B_2$ and $|B_1 \cup B_2| \le \ell+1$,
\item $X_1 \cap \bag(t) \subseteq B_1 \subseteq X_1$,
\item $B_2 \subseteq X_2 \cap \bag(t)$ and $|B_2| \le 1$, and
\item $X_2$ is an independent set.
\end{enumerate}

\paragraph{DP-tuples.}
Let $\Tc = (T,\bag)$ be an $\ell$-supernice tree decomposition of $G$ and $\fami \colon V(T) \to 2^{V(G)}$ assign each node $t \in V(T)$ a set $\fami(t)$ of subsets of $\bag(t)$.
For a node $t \in V(T)$, we say that a set $X \subseteq \subtree_{\Tc}(t)$ is \emph{$t$-admissible} if for all descendants $t'$ of $t$ it holds that $X \cap \bag(t') \in \fami(t')$.
Let also $\autom$ be a tree decomposition automaton of width $\ell$ and state set $Q$.

As the dynamic programming table, we consider tuples of form $(t, X_t, B_1, B_2, q)$, where 
\begin{enumerate}
\item $t \in V(T)$, 
\item $X_t \in \fami(t)$, 
\item $B_1 \subseteq \subtree_{\Tc}(t)$,
\item $B_2 \subseteq X_t \setminus B_1$, $|B_2| \le 1$, $|B_1 \cup B_2| \le \ell+1$, $(B_1 \cup B_2) \cap \bag(t) \subseteq X_t$, and
\item $q \in Q$.
\end{enumerate}

We call such tuples \emph{DP-tuples}.
Note that there are at most $\sum_{t \in V(T)} |\fami(t)| \cdot n^{\OO(\ell)} \cdot |Q|$ DP-tuples, and given $\fami(t)$ for each $t$, they can be enumerated in such running time.

\paragraph{Pairs fitting DP-tuples.}
We say that a pair $(X_1,X_2)$ of disjoint subsets of $V(G)$ \emph{fits} to a DP-tuple $(t, X_t, B_1, B_2, q)$ if
\begin{enumerate}
\item $X_1 \cup X_2 \subseteq \subtree_{\Tc}(t)$,
\item $(X_1 \cup X_2) \cap \bag(t) = X_t$,
\item $X_1 \cup X_2$ is $t$-admissible,
\item $X_1 \cap \bag(t) = B_1 \cap \bag(t)$, (implying $X_2 \cap \bag(t) = X_t \setminus B_1$),
\item $X_2$ is an independent set,\label{enum:fits:isis}
\item $B_1 \subseteq X_1$ and $B_2 \subseteq X_2$,
\item there is a $(t, B_1, B_2)$-inner tree decomposition $(T_t, \vbag_t)$ of $(X_1, (X_2 \setminus X_t) \cup B_2)$, and\label{enum:fits:innerdecomp}
\item the state of $\autom$ on $(T_t,\vbag_t)$ is $q$.\label{enum:fits:innerdecompdp}
\end{enumerate}

We call a DP-tuple $(t, X_t, B_1, B_2, q)$ \emph{invalid} if either
\begin{enumerate}
\item $X_t \setminus B_1$ is not an independent set, or
\item $B_2 \neq \emptyset$ and $t$ is not a $v$-top-node for $\{v\} = B_2$.
\end{enumerate}
Note that no pair $(X_1,X_2)$ fits to an invalid DP-tuple.
If a DP-tuple is not invalid, it is \emph{valid}.

For a node $t \in V(T)$, a \emph{$t$-table} stores for each DP-tuple of form $\tau = (t, \cdot, \cdot, \cdot, \cdot)$ either $\bot$, if no pair $(X_1,X_2)$ fits to $\tau$, or a pair $(X^\tau_1, X^\tau_2)$ that fits to $\tau$ and maximizes $\we(X_1 \cup X_2)$ among all pairs $(X_1,X_2)$ that fit to $\tau$.

\subsection{Dynamic programming transitions}
\label{subsec:dptrans}
We then describe the dynamic programming transitions.
We retain all assumptions and definitions from the previous subsection, in particular, that $\Tc = (T,\bag)$ is an $\ell$-supernice tree decomposition.

\begin{lemma}[Introduce]
\label{lem:dptrans-intro}
Let $t \in V(T)$ be a $v$-introduce-node with child $c$.
Given a $c$-table, a $t$-table can be computed in time $|\fami(c)| \cdot |\fami(t)| \cdot n^{\OO(\ell)} \cdot |Q|^2 \cdot \evaltime(\autom)$.
\end{lemma}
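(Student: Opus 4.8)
Since $t$ is a $v$-introduce-node with child $c$, we have $\bag(t)=\bag(c)\cup\{v\}$ and $\subtree_{\Tc}(t)=\subtree_{\Tc}(c)\cup\{v\}$ with $v\notin\subtree_{\Tc}(c)$, and by the edge- and connectedness-conditions of $\Tc$ every neighbour of $v$ lying in $\subtree_{\Tc}(t)$ in fact lies in $\bag(t)$. Moreover $t$ is not a neutral-node, hence not a top-node, and $c$ is a top-node of no vertex (its parent $t$ has $\bag(t)\supseteq\bag(c)$), so every valid DP-tuple at $t$ or at $c$ has empty $B_2$-component and every invalid DP-tuple is assigned $\bot$. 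The plan is to introduce a local ``compatibility'' relation between DP-tuples at $t$ and at $c$ so that, for a DP-tuple $\tau$ at $t$, every pair fitting $\tau$ is obtained from a pair fitting some compatible DP-tuple at $c$ by a fixed rule that is either the identity or the insertion of $v$ into $X_1$ or into $X_2$; then a single pass over all pairs of DP-tuples, consulting the $c$-table, fills the $t$-table.

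Fix a valid $\tau=(t,X_t,B_1,\emptyset,q)$. The role of $v$ is forced by $\tau$: if $v\notin X_t$ then $v$ lies in neither part of any fitting pair; if $v\in X_t\cap B_1$ then $v\in X_1$; if $v\in X_t\setminus B_1$ then $v\in X_2$. I would show that $\tau$ is compatible with $\tau'=(c,X_c,B_1',\emptyset,q')$ exactly in the cases (i) $v\notin X_t$, $X_c=X_t$, $B_1'=B_1$, $\transit^1(q',B_1,B_1)=q$ (rule: identity; weight unchanged); (ii) $v\in X_t\cap B_1$, $X_c=X_t\setminus\{v\}$, $B_1'=B_1\setminus\{v\}$, $\transit^1(q',B_1\setminus\{v\},B_1)=q$ (rule: adjoin $v$ to $X_1$; weight $+\we(v)$); (iii) $v\in X_t\setminus B_1$ and $N(v)\cap(X_t\setminus B_1)=\emptyset$, $X_c=X_t\setminus\{v\}$, $B_1'=B_1$, $\transit^1(q',B_1,B_1)=q$ (rule: adjoin $v$ to $X_2$; weight $+\we(v)$); and that when $v\in X_t\setminus B_1$ but $N(v)\cap(X_t\setminus B_1)\neq\emptyset$ no pair fits $\tau$. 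The verification would, for a pair $(X_1,X_2)$ fitting $\tau$ with inner tree decomposition $(T_t,\vbag_t)$, delete the root node $t$ and check that the restriction is a $(c,B_1',\emptyset)$-inner decomposition of $(X_1\setminus\{v\},X_2\setminus\{v\})$, and conversely that re-attaching $t$ with $\vbag_t(t)$ equal to, or $\{v\}$ larger than, $\vbag_t(c)$ rebuilds a valid $t$-inner decomposition. The relevant points are: the only permitted change of the root bag is exactly the $v$-introduce-node clause of item~\ref{lem:virtdecomp:prop4} of \Cref{lem:virtdecomp}; by property~\ref{lem:virtdecomp:prop2} of \Cref{lem:virtdecomp} together with $v\notin\subtree_{\Tc}(x)$ for $x\in V(T_c)$, $v$ occurs in at most the single inner bag $\vbag_t(t)$, and does so iff $v\in X_1$, which is what distinguishes $B_1'=B_1\setminus\{v\}$ from $B_1'=B_1$; the neighbourhood condition~\ref{enum:definnertd:nbitem} of the inner-decomposition definition is vacuous at $t$ because the decomposed pair's second part is $X_2\setminus\bag(t)$, which is disjoint from $\bag(t)$; $t$-admissibility restricts to $c$-admissibility since $v\notin\bag(x)$ for $x\in V(T_c)$; and in case~(iii) the single new obligation, that $X_2=(X_2\setminus\{v\})\cup\{v\}$ remains independent, is equivalent to $N(v)\cap(X_t\setminus B_1)=\emptyset$ precisely because every neighbour of $v$ in $\subtree_{\Tc}(t)$ lies in $\bag(t)$. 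The automaton state changes by the single $\transit^1$-step at $t$, which yields the stated transitions; cases~(i) and~(ii) need no further check because $v$'s neighbours that remain in the inner-decomposition graph all lie in $B_1$.

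Granting this, the algorithm is routine: initialize every $t$-table entry to $\bot$; enumerate all pairs $(\tau,\tau')$ of a DP-tuple at $t$ and a DP-tuple at $c$; test compatibility by the rules above (forming $G[B_1]$ and $G[B_1\setminus\{v\}]$ and evaluating $\transit^1$ once, i.e.\ in $\evaltime(\autom)+n^{\OO(1)}$ time); and whenever $\tau$ and $\tau'$ are compatible and the $c$-table stores a pair $(Y_1,Y_2)\neq\bot$ at $\tau'$, apply the rule to obtain a pair fitting $\tau$ and replace the pair stored at $\tau$ if its weight is larger. Correctness is the two-sided inequality on optimal weights: the maps above give $\max\{\we(X_1\cup X_2):(X_1,X_2)\text{ fits }\tau\}=\max_{\tau'\text{ compatible}}\bigl(\we(Y_1\cup Y_2)\,[+\we(v)]\bigr)$ over optimal pairs $(Y_1,Y_2)$ at $\tau'$, which is exactly what the pass computes. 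There are $|\fami(\cdot)|\cdot n^{\OO(\ell)}\cdot|Q|$ DP-tuples at each node, hence $|\fami(c)|\cdot|\fami(t)|\cdot n^{\OO(\ell)}\cdot|Q|^2$ pairs, so the running time is $|\fami(c)|\cdot|\fami(t)|\cdot n^{\OO(\ell)}\cdot|Q|^2\cdot\evaltime(\autom)$ after absorbing the polynomial overhead into $n^{\OO(\ell)}$.

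The main obstacle is the middle step: the case analysis showing that cutting off, or gluing back, the single new node $t$ preserves every clause of the inner-tree-decomposition definition and of the structural properties in \Cref{lem:virtdecomp}, while correctly tracking how the bipartition restricts ($X_1^c=X_1\setminus\{v\}$, $X_2^c=X_2\setminus\{v\}$) and confirming that the branch where $v$ enters $X_2$ is fully governed by a test depending only on $\tau$. Each individual check is short, but there are many of them; the one uniform fact making all of them go through is that an introduced vertex's neighbourhood never escapes its bag.
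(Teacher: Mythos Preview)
Your proposal is correct and follows essentially the same route as the paper: the same three-way split on whether $v\notin X_t$, $v\in X_t\cap B_1$, or $v\in X_t\setminus B_1$, the same one-to-one correspondence between pairs fitting $\tau$ and pairs fitting a child tuple obtained by stripping $v$, and the same $\transit^1$-step to relate states. Two minor remarks: the extra hypothesis $N(v)\cap(X_t\setminus B_1)=\emptyset$ you impose in case~(iii) is redundant, since validity of $\tau$ already forces $X_t\setminus B_1$ to be independent; and your claim that condition~\ref{enum:definnertd:nbitem} is vacuous at $t$ relies on reading the definition literally (second component $(X_2\setminus X_t)\cup B_2$), whereas the paper's own proofs treat that condition with the unrestricted $X_2$ and give a short non-vacuous check---either reading yields a correct Introduce transition, so this is a cosmetic discrepancy rather than a gap.
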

\begin{proof}
We iterate over all valid DP-tuples of the form $\tau = (t, X_t, B_1, \emptyset, q)$.
Consider a pair $(X_1,X_2)$ that fits to $\tau$, and let $\Tc_t = (T_t, \vbag_t)$ be a $(t,B_1,\emptyset)$-inner tree decomposition of $(X_1, X_2 \setminus X_t)$ of width at most $\ell$ such that the state of $\autom$ on $\Tc_t$ is $q$.

\paragraph{Case 1: $v \in X_t$ and $v \in B_1$.}
Suppose that $v \in X_t$ and $v \in B_1$.
By \Cref{lem:virtdecomp}, it must be that $\vbag_t(t) = \vbag_t(c) \cup \{v\}$, and $v \notin \vbag_t(c)$.
Therefore, $(X_1 \setminus \{v\}, X_2)$ fits to a DP-tuple of form $(c, X_t \setminus \{v\}, B_1 \setminus \{v\}, \emptyset, q')$, where $q' \in Q$ such that $\transit^1_{\autom}(q', B_1 \setminus \{v\}, B_1) = q$.
Furthermore, we argue that for any pair $(X'_1, X'_2)$ that fits to such a DP-tuple, the pair $(X'_1 \cup \{v\}, X'_2)$ fits to $\tau$.
For this, the non-trivial thing to argue is that any $(c, B_1 \setminus \{v\}, \emptyset)$-inner tree decomposition of $(X'_1, X'_2)$ can be lifted to a $(t, B_1, \emptyset)$-inner tree decomposition of $(X'_1 \cup \{v\}, X'_2)$ by adding a root bag $B_1$.

Denote by $\Tc' = (T',\vbag')$ a tree decomposition obtained like that.
The fact that $\Tc'$ is a tree decomposition of $G[X'_1 \cup \{v\} \cup (X_2 \setminus X_t)]$ follows from the fact that because $t$ is a $v$-introduce-node, all neighbors of $v$ in $X'_1 \cup X'_2 \subseteq \subtree_{\Tc}(t)$ are in $X_t$, and therefore all of its neighbors in $X'_1 \cup (X'_2 \setminus X_t)$ are in $B_1$.
The facts that $(X'_1 \cup \{v\}, X'_2 \setminus X_t)$ and $\Tc'$ satisfy the properties of \Cref{lem:virtdecomp} are clear, as well as that $B_1 \cap (X'_1 \cup \{v\}) = B_1$ and that $B_1 \cap X_2 = B_2 = \emptyset$.
It remains to argue that for all nodes $x \in V(T')$, $N(X'_2 \cap \bag(x)) \cap (X'_1 \cup \{v\}) \cap \subtree_{\Tc}(x) \subseteq \vbag'(x)$.
The addition of $v$ into $X'_1$ affects this only for the node $x = t$, as that is the only node of $T'$ for which $v \in \subtree_{\Tc}(x)$.
For the node $x = t$ this holds because $X'_2 \cap \bag(t) = X'_2 \cap \bag(c)$ and $(X'_1 \cup \{v\}) \cap \subtree_{\Tc}(t) = (X'_1 \cap \subtree_{\Tc}(c)) \cup \{v\}$, but $\vbag'(t) = \vbag'(c) \cup \{v\}$.

Therefore, we can store for $\tau$ such pair $(X'_1 \cup \{v\}, X'_2)$ maximizing the weight.

\paragraph{Case 2: $v \in X_t$ and $v \notin B_1$.}
Suppose that $v \in X_t$ and $v \notin B_1$.
By \Cref{lem:virtdecomp}, it must be that $\vbag_t(t) = \vbag_t(c)$, and $v \notin \vbag_t(c)$.
Therefore, $(X_1, X_2 \setminus \{v\})$ fits to a DP-tuple of form $(c, X_t \setminus \{v\}, B_1, \emptyset, q')$, where $q' \in Q$ such that $\transit^1_{\autom}(q', B_1, B_1) = q$.
Furthermore, we argue that for any pair $(X'_1, X'_2)$ that fits to such a DP-tuple, the pair $(X'_1, X'_2 \cup \{v\})$ fits to $\tau$.

For this, the non-trivial things to argue are that (1) $X'_2 \cup \{v\}$ is an independent set and (2) any $(c, B_1, \emptyset)$-inner tree decomposition of $(X'_1, X'_2)$ can be lifted to a $(t, B_1, \emptyset)$-inner tree decomposition of $(X'_1, X'_2 \cup \{v\})$ by adding a root bag $B_1$.

For (1), we have that as $X'_2 \subseteq \subtree_{\Tc}(t)$, all neighbors of $v$ in $X'_2$ would be in $\bag(t)$, and therefore that $X'_2 \cup \{v\}$ is an independent set follows from the fact that $\tau$ is valid.
For (2), it suffices to argue that if $\Tc' = (T', \vbag')$ is a tree decomposition of $G[X'_1 \cup (X'_2 \setminus X_t)]$ constructed like that, then for all of its nodes $x \in V(T')$ it holds that $N((X'_2 \cup \{v\}) \cap \bag(x)) \cap X'_1 \cap \subtree_{\Tc}(x) \subseteq \vbag'(x)$.
This follows from the fact that $x = t$ is the only node in $T'$ with $v \in \bag(x)$, and the condition holds for $x = t$ because all neighbors of $v$ in $X'_1$ that are in $\subtree_{\Tc}(t)$ must be in $\bag(t)$, and therefore in $B_1$.

Therefore, we can store for $\tau$ such pair $(X'_1, X'_2 \cup \{v\})$ maximizing the weight.

\paragraph{Case 3: $v \notin X_t$.}
Suppose that $v \notin X_t$.
By \Cref{lem:virtdecomp}, it must be that $\vbag_t(t) = \vbag_t(c)$.
Therefore, $(X_1,X_2)$ fits to a DP-tuple of form $(c, X_t, B_1, \emptyset, q')$, where $q' \in Q$ such that $\transit^1_{\autom}(q', B_1, B_1) = q$.
Furthermore, we observe that any pair $(X_1', X_2')$ that fits to such a DP-tuple also fits to $\tau$, so we can store for $\tau$ the maximum-weight pair that fits to such a DP-tuple.
\end{proof}

\begin{lemma}[Forget]
\label{lem:dptrans-forget}
Let $t \in V(T)$ be a $v$-forget-node with child $c$.
Given a $c$-table, a $t$-table can be computed in time $|\fami(c)| \cdot |\fami(t)| \cdot n^{\OO(\ell)} \cdot |Q|^2 \cdot \evaltime(\autom)$.
\end{lemma}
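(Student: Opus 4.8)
The plan is to follow the template of the proof of \Cref{lem:dptrans-intro}: iterate over all valid DP-tuples of $t$, and for each identify a bounded family of DP-tuples of $c$ whose stored pairs are exactly the candidates for the entry at $t$, keeping the maximum-weight one. Two structural facts set the stage. First, a forget-node is never a top-node, so every valid DP-tuple of $t$ has the form $\tau = (t, X_t, B_1, \emptyset, q)$. Second, since $\Tc$ is supernice and $t$ is a $v$-forget-node, we have $\bag(c) = \bag(t)\cup\{v\}$, $v\notin\bag(t)$, and $\subtree_{\Tc}(t)=\subtree_{\Tc}(c)$; moreover the nodes whose bag contains $v$ form a subtree contained in the subtree rooted at $c$ and having $c$ as its topmost node, so $c=\topnode_{\Tc}(v)$ and hence $c$ is a $v$-top-node.

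First I would fix a valid tuple $\tau=(t,X_t,B_1,\emptyset,q)$ and a pair $(X_1,X_2)$ fitting it, witnessed by a $(t,B_1,\emptyset)$-inner tree decomposition $(T_t,\vbag_t)$. Restricting $(T_t,\vbag_t)$ to the subtree rooted at $c$ yields a witness that $(X_1,X_2)$ fits a DP-tuple $\tau^c=(c,X_c,B_1^c,B_2^c,q')$ of $c$, and I would read off its parameters: $X_c=(X_1\cup X_2)\cap\bag(c)\in\{X_t,X_t\cup\{v\}\}$, $B_1^c=B_1$, that $v\in X_1$ iff $v\in B_1$ (from $B_1\subseteq X_1$ together with \Cref{lem:virtdecomp:prop2} at $c$), that $B_2^c=\{v\}$ when $v\in X_2$ (by the $v$-top-node clause of \Cref{lem:virtdecomp} at $c$, which in that case also forces $v\in\vbag_t(c)$) and $B_2^c=\emptyset$ otherwise, and that $\vbag_t(t)=B_1$ (by the $v$-forget-node clause of \Cref{lem:virtdecomp} at $t$). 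Hence $q=\transit^1(q',\vbag_t(c),\vbag_t(t))=\transit^1(q',B_1\cup B_2^c,B_1)$, and $\tau^c$ has one of exactly three shapes: $(c,X_t\cup\{v\},B_1,\emptyset,q')$ with $v\in B_1$; $(c,X_t,B_1,\emptyset,q')$ with $v\notin B_1$ and $v\notin X_1\cup X_2$; or $(c,X_t\cup\{v\},B_1,\{v\},q')$ with $v\notin B_1$ and $v\in X_2$.

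Conversely, for each DP-tuple $\tau^c$ of $c$ of one of these three shapes I would show that every pair $(X_1',X_2')$ fitting $\tau^c$ also fits $\tau$ with the same weight, by prepending to the inner tree decomposition of $\tau^c$ a new root bag equal to $B_1$ — which equals the old root bag in the first two shapes and the old root bag with $v$ removed in the third; that the extension is a genuine $(t,B_1,\emptyset)$-inner tree decomposition of the right pair requires only re-checking the properties of \Cref{lem:virtdecomp} and the neighborhood condition of inner tree decompositions at the single node $t$, which are immediate since no vertex of $X_1'\cup X_2'$ lies in $\bag(t)\setminus\bag(c)$. The one step needing real care is that we must \emph{not} also feed the DP-tuples $(c,X_t\cup\{v\},B_1,\emptyset,q')$ with $v\notin B_1$ into $\tau$: their fitting pairs have $v\in X_2'$ while $v$ has not been inserted into the inner tree decomposition at its top-node $c$, and since $t$ forgets $v$ no such pair can be completed to a tree decomposition of $G[X_1'\cup X_2']$, so these tuples are dead-ends and must be ignored. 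Setting the $t$-table entry of each $\tau$ to the maximum-weight stored pair over the at most $\OO(|Q|)$ DP-tuples of $c$ of the three admissible shapes, or $\bot$ if all of them are $\bot$, yields the $t$-table.

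For the running time I would scan the $c$-table, which has $|\fami(c)|\cdot n^{\OO(\ell)}\cdot|Q|$ entries; each non-$\bot$ entry $(c,X_c,B_1^c,B_2^c,q')$ with $B_2^c\subseteq\{v\}$ that is not of the excluded form contributes, through one evaluation of $\transit^1$ costing $\evaltime(\autom)$, to the single $t$-tuple $(t,X_c\setminus\{v\},B_1^c,\emptyset,\transit^1(q',B_1^c\cup B_2^c,B_1^c))$ whenever $X_c\setminus\{v\}\in\fami(t)$, and we keep the heaviest contribution; after initializing the $O(|\fami(t)|\cdot n^{\OO(\ell)}\cdot|Q|)$ entries of the $t$-table to $\bot$, the computation stays within $|\fami(c)|\cdot|\fami(t)|\cdot n^{\OO(\ell)}\cdot|Q|^2\cdot\evaltime(\autom)$. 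I expect the main obstacle to be the case bookkeeping rather than any individual computation: deciding, from $(X_t,B_1,B_2)$ alone, whether $v$ is absent from, lies in $X_1$ of, or lies in $X_2$ of the partial solution, and in the last case using $c=\topnode_{\Tc}(v)$ to insist that $v$ has already been committed to $\hat{\Tc}$ at $c$ (so $B_2^c=\{v\}$), which is exactly what keeps $\hat{\Tc}$ from \emph{forgetting a vertex before it is introduced}.
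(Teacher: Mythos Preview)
Your proposal is correct and follows essentially the same approach as the paper: both split on whether $v\in B_1$ or $v\notin B_1$, and in the latter case further on whether $v$ is absent from the partial solution or lies in $X_2$, arriving at the same three child tuples $(c,X_t\cup\{v\},B_1,\emptyset,q')$, $(c,X_t,B_1,\emptyset,q')$, and $(c,X_t\cup\{v\},B_1,\{v\},q')$ with the matching automaton transitions. Your explicit observation that $c=\topnode_{\Tc}(v)$ (hence $c$ is a $v$-top-node) and your warning against the ``dead-end'' tuples $(c,X_t\cup\{v\},B_1,\emptyset,q')$ with $v\notin B_1$ are not spelled out in the paper's proof but are consistent with it---the paper simply never lists those tuples among the candidates, so both arguments agree.
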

\begin{proof}
We iterate over all valid DP-tuples of the form $\tau = (t, X_t, B_1, \emptyset, q)$.
Consider a pair $(X_1,X_2)$ that fits to $\tau$, and let $\Tc_t = (T_t, \vbag_t)$ be a $(t,B_1,\emptyset)$-inner tree decomposition of $(X_1, X_2 \setminus X_t)$ of width at most $\ell$ such that the state of $\autom$ on $\Tc_t$ is $q$.

\paragraph{Case 1: $v \notin B_1$.}
Suppose that $v \notin B_1$.
By \Cref{lem:virtdecomp}, it must be that either (1) $v \notin (X_1 \cup X_2)$ and $\vbag_t(t) = \vbag_t(c)$, or (2) $v \in X_2$, $v \in \vbag_t(c)$, and $\vbag_t(t) = \vbag_t(c) \setminus \{v\}$.

In (1), we have that $(X_1, X_2)$ fits to some DP-tuple of form $(c, X_t, B_1, \emptyset, q')$, where $q' \in Q$ such that $\sigma^1_{\autom}(q', B_1, B_1) = q$, and any set $(X'_1,X'_2)$ that fits to such a DP-tuple also fits to $\tau$.

In (2), we have that $(X_1,X_2)$ fits to some DP-tuple of form $(c, X_t \cup \{v\}, B_1, \{v\}, q')$, where $q' \in Q$ such that $\sigma^1_{\autom}(q', B_1 \cup \{v\}, B_1) = q$, and any pair $(X'_1, X'_2)$ that fits to such a DP-tuple also fits to $\tau$.

Therefore, we can store for $\tau$ the maximum-weight pair $(X'_1,X'_2)$ that fits to either (1) or (2).

\paragraph{Case 2: $v \in B_1$.}
Suppose that $v \in B_1$.
By \Cref{lem:virtdecomp} it must be that $\vbag_t(t) = \vbag_t(c)$.
Therefore, $(X_1,X_2)$ fits to some DP-tuple of form $(c, X_t \cup \{v\}, B_1, \emptyset, q')$, where $q' \in Q$ such that $\sigma^1_{\autom}(q', B_1, B_1) = q$.
Furthermore, we observe that any pair $(X'_1, X'_2)$ that fits to such a DP-tuple also fits to $\tau$, so we can store for $\tau$ the maximum-weight set that fits to such a DP-tuple.
\end{proof}

\begin{lemma}[Join]
\label{lem:dptrans-join}
Let $t \in V(T)$ be a join-node with children $c_1,c_2$.
Given a $c_1$-table and a $c_2$-table, a $t$-table can be computed in time $|\fami(c_1)| \cdot |\fami(c_2)| \cdot |\fami(t)| \cdot n^{\OO(\ell)} \cdot |Q|^3 \cdot \evaltime(\autom)$.
\end{lemma}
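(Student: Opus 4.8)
The plan is to follow exactly the same template as the Introduce and Forget transitions (\Cref{lem:dptrans-intro,lem:dptrans-forget}): enumerate every valid DP-tuple $\tau = (t, X_t, B_1, B_2, q)$ at the join-node $t$, determine which combinations of DP-tuples at $c_1$ and at $c_2$ are ``compatible'' with $\tau$, glue the corresponding stored partial solutions, and keep the weight-maximizing result. The first observation is that, since $t$ is a join-node, $\bag(t) = \bag(c_1) = \bag(c_2)$ (so $X_{c_1} = X_{c_2} = X_t$ is forced), that $c_1$ and $c_2$ are neutral-nodes because $\Tc$ is $\ell$-supernice, and that a valid DP-tuple at a join-node or at such a neutral-node necessarily has $B_2 = \emptyset$ (a nonempty $B_2$ requires the node to be a top-node). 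Hence we only handle $\tau = (t, X_t, B_1, \emptyset, q)$, and invalid tuples are assigned $\bot$ as usual.

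I would define $\tau_1 = (c_1, X_t, B_1^{(1)}, \emptyset, q_1)$ and $\tau_2 = (c_2, X_t, B_1^{(2)}, \emptyset, q_2)$ to be \emph{compatible} with $\tau$ when $B_1^{(1)} \cup B_1^{(2)} = B_1$, $B_1^{(1)} \cap B_1^{(2)} = B_1 \cap \bag(t)$, and $\sigma^2_{\autom}(q_1, q_2, B_1^{(1)}, B_1^{(2)}, B_1) = q$; these conditions mirror the join-node requirement in \Cref{lem:virtdecomp:prop4} and the definition of the run of $\autom$ at a node with two children. For each $\tau$ the algorithm iterates over all compatible $(\tau_1, \tau_2)$, and for each such pair for which the $c_1$- and $c_2$-tables store pairs $(X^{\tau_1}_1, X^{\tau_1}_2)$ and $(X^{\tau_2}_1, X^{\tau_2}_2)$ (not $\bot$), it forms $X_1 = X^{\tau_1}_1 \cup X^{\tau_2}_1$, $X_2 = X^{\tau_1}_2 \cup X^{\tau_2}_2$, and stores for $\tau$ the weight-maximizing such glued pair.

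Correctness has the usual two directions. For soundness, I must check that the glued pair fits $\tau$: the key point is that $\bag(t)$ separates $\subtree_{\Tc}(c_1) \setminus \bag(t)$ from $\subtree_{\Tc}(c_2) \setminus \bag(t)$, so that $X_2$ stays an independent set and, combined with condition~\ref{enum:definnertd:nbitem} of the inner-tree-decomposition definition applied at $c_1$ and $c_2$, every edge of $G[X_1 \cup (X_2 \setminus X_t)]$ is already covered inside one of the two child inner tree decompositions or has both endpoints in $B_1$; thus attaching both child inner tree decompositions below a new root with bag $B_1$ yields an inner tree decomposition of $(X_1, X_2 \setminus X_t)$ whose $\autom$-state is $\sigma^2_{\autom}(q_1, q_2, B_1^{(1)}, B_1^{(2)}, B_1) = q$, while the partition and neighbourhood conditions at $t$ follow from the compatibility equalities. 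For completeness, any pair $(X_1,X_2)$ fitting $\tau$ with a witnessing inner tree decomposition $\Tc_t$ restricts to the subtrees $T_{c_1}, T_{c_2}$; putting $X^{(i)}_j = X_j \cap \subtree_{\Tc}(c_i)$ and $B_1^{(i)} = \vbag_t(c_i)$, the join-node requirement in \Cref{lem:virtdecomp:prop4} gives precisely the compatibility equalities, and the restricted objects witness that $(X^{(1)}_1, X^{(1)}_2)$ fits $\tau_1$ and $(X^{(2)}_1, X^{(2)}_2)$ fits $\tau_2$. Finally, since $\subtree_{\Tc}(c_1) \cap \subtree_{\Tc}(c_2) = \bag(t)$ and $(X_1 \cup X_2) \cap \bag(t) = X_t$, we get $\we(X_1 \cup X_2) = \we(X^{\tau_1}_1 \cup X^{\tau_1}_2) + \we(X^{\tau_2}_1 \cup X^{\tau_2}_2) - \we(X_t)$, so for fixed $\tau$ the $-\we(X_t)$ term is constant and maximizing the two summands over compatible pairs is correct.

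The running time is dominated by iterating over the at most $|\fami(t)| \cdot n^{\OO(\ell)} \cdot |Q|$ DP-tuples $\tau$ at $t$ and, for each, over the at most $|\fami(c_1)| \cdot n^{\OO(\ell)} \cdot |Q|$ DP-tuples at $c_1$ and the at most $|\fami(c_2)| \cdot n^{\OO(\ell)} \cdot |Q|$ at $c_2$, checking compatibility (one evaluation of $\sigma^2_{\autom}$) and doing a constant-time update, which yields the stated bound $|\fami(c_1)| \cdot |\fami(c_2)| \cdot |\fami(t)| \cdot n^{\OO(\ell)} \cdot |Q|^3 \cdot \evaltime(\autom)$. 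The main obstacle I expect is the soundness verification that the glued tree genuinely is an inner tree decomposition in the sense of \Cref{sec:dynprog} — in particular re-establishing condition~\ref{enum:definnertd:nbitem} at the new root node and re-checking the connectedness condition for vertices of $B_1 \setminus \bag(t)$ (each of which lies in exactly one of the two subtrees, so its occurrence set remains connected after gluing) — together with carefully keeping straight the correspondence between the $X_1/X_2$ split as seen at the children versus at $t$.
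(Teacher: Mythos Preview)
Your proposal is correct and follows essentially the same approach as the paper: restrict to $B_2=\emptyset$, define compatibility via the join clause of \Cref{lem:virtdecomp:prop4} (which, together with $B_1^{(i)}\subseteq\subtree_{\Tc}(c_i)$, forces $B_1^{(i)} = B_1 \cap \subtree_{\Tc}(c_i)$ exactly as in the paper), and prove soundness and completeness by gluing/restricting the inner tree decompositions across the separator $\bag(t)$. Your explicit weight identity $\we(X_1\cup X_2)=\we(X^{\tau_1}_1\cup X^{\tau_1}_2)+\we(X^{\tau_2}_1\cup X^{\tau_2}_2)-\we(X_t)$ is a nice detail that the paper leaves implicit.
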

\begin{proof}
We iterate over all valid DP-tuples of the form $\tau = (t, X_t, B_1, \emptyset, q)$.
Consider a pair $(X_1,X_2)$ that fits to $\tau$, and let $\Tc_t = (T_t, \vbag_t)$ be a $(t,B_1,\emptyset)$-inner tree decomposition of $(X_1, X_2 \setminus X_t)$ of width at most $\ell$ such that the state of $\autom$ on $\Tc_t$ is $q$.

By \Cref{lem:virtdecomp}, it must be that $\vbag_t(t) = \vbag_t(c_1) \cup \vbag_t(c_2)$ and $\vbag_t(c_1) \cap \vbag_t(c_2) = \vbag_t(t) \cap \bag(t)$.
As $\vbag_t(t) = B_1$, $\vbag_t(c_1) \subseteq \subtree_{\Tc}(c_1)$, $\vbag_t(c_2) \subseteq \subtree_{\Tc}(c_2)$, and $\subtree_{\Tc}(c_1) \cap \subtree_{\Tc}(c_2) = \bag(t)$, it follows that $\vbag_t(c_1) = B_1 \cap \subtree_{\Tc}(c_1)$ and $\vbag_t(c_2) = B_1 \cap \subtree_{\Tc}(c_2)$.
Thus, denote $B_1^1 = B_1 \cap \subtree_{\Tc}(c_1)$ and $B_1^2 = B_1 \cap \subtree_{\Tc}(c_2)$.

Therefore, $(X_1 \cap \subtree_{\Tc}(c_1), X_2 \cap \subtree_{\Tc}(c_1))$ fits to a DP-tuple of form $(c_1, X_t, B_1^1, \emptyset, q'_1)$ and $(X_1 \cap \subtree_{\Tc}(c_2), X_2 \cap \subtree_{\Tc}(c_2))$ to a DP-tuple of form $(c_2, X_t, B_1^2, \emptyset, q'_2)$, such that $\transit^2_{\autom}(q'_1, q'_2, B_1^1, B_1^2, B_1) = q$.
We argue also the following.

\begin{claim}
For any such pair of pairs $(X^1_1, X^1_2)$ and $(X^2_1, X^2_2)$, the pair $(X^1_1 \cup X^2_1, X^1_2 \cup X^2_2)$ fits to $\tau$.
\end{claim}
\begin{claimproof}
Most of the conditions are trivial to check, so let us check \Cref{enum:fits:isis,enum:fits:innerdecomp,enum:fits:innerdecompdp}, i.e., that (1) $X^1_2 \cup X^2_2$ is an independent set, and (2) if we take for each $i \in [2]$ a $(c_i, B_1^i, \emptyset)$-inner tree decomposition $\Tc_i$ of $(X^i_1, X^i_2 \setminus X_t)$, and combine $\Tc_1$ with $\Tc_2$ by connecting them from their root-bags to a new root-bag $B_1$, then we obtain a $(t,B_1,\emptyset)$-inner tree decomposition $\Tc' = (T', \vbag')$ of $(X^1_1 \cup X^2_1, (X^1_2 \cup X^2_2) \setminus X_t)$.

For (1), it suffices to recall that there are no edges between $\subtree_{\Tc}(c_1) \setminus \bag(t)$ and $\subtree_{\Tc}(c_2) \setminus \bag(t)$, so for any edge $uv$ with $u,v \in \subtree_{\Tc}(t)$, either $\{u,v\} \subseteq \subtree_{\Tc}(c_1)$ or $\{u,v\} \subseteq \subtree_{\Tc}(c_2)$.

For (2), we note that as $B_1 = B_1^1 \cup B_1^2$, such a tree decomposition clearly satisfies the vertex-condition and the connectedness-condition.
We then prove the edge-condition.

Consider an edge $uv$ with both ends in $X^1_1 \cup X^2_1 \cup ((X^1_2 \cup X^2_2) \setminus X_t)$.
By the above argument, either $\{u,v\} \subseteq \subtree_{\Tc}(c_1)$ or $\{u,v\} \subseteq \subtree_{\Tc}(c_2)$, so assume first $\{u,v\} \subseteq \subtree_{\Tc}(c_1)$.
However, now $\{u,v\} \subseteq X^1_1 \cup (X^1_2 \setminus X_t)$, so $u$ and $v$ share a bag already in $\Tc_1$.
The argument for the case when $\{u,v\} \subseteq \subtree_{\Tc}(c_2)$ is similar.

Finally, we should check that for all nodes $x \in V(T')$, $N((X^1_2 \cup X^2_2) \cap \bag(x)) \cap (X^1_1 \cup X^2_1) \cap \subtree_{\Tc}(x) \subseteq \vbag'(x)$.
For all descendants $y$ of $c_i$ it holds that $(X^1_2 \cup X^2_2) \cap \bag(y) = X^i_2 \cap \bag(y)$ and that $(X^1_1 \cup X^2_1) \cap \subtree_{\Tc}(y) = X^i_1 \cap \subtree_{\Tc}(y)$.
Therefore, it suffices to check this for $x = t$.
For $x = t$, this follows from the facts that (1) for every edge $uv$ with $u \in X^1_1 \cup X^2_1$ and $v \in X^1_2 \cup X^2_2$, either $\{u,v\} \subseteq X^1_1 \cup X^1_2$, or $\{u,v\} \subseteq X^2_1 \cup X^2_2$, and (2) $B_1 = B_1^1 \cup B_1^2$.
\end{claimproof}

Therefore, we can store for $\tau$ the maximum-weight pair that is obtained as a union of two pairs $(X^1_1, X^1_2)$ and $(X^2_1, X^2_2)$ that fit to such DP-tuples $(c_1, X_t, B_1^1, \emptyset, q'_1)$ and $(c_2, X_t, B_1^2, \emptyset, q'_2)$, respectively.
\end{proof}

\begin{lemma}[Neutral]
\label{lem:dptrans-neutral}
Let $t \in V(T)$ be a neutral-node with child $c$.
Given a $c$-table, a $t$-table can be computed in time $|\fami(c)| \cdot |\fami(t)| \cdot n^{\OO(\ell)} \cdot |Q|^2 \cdot \evaltime(\autom)$.
\end{lemma}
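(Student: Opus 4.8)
The plan is to mirror the structure of \Cref{lem:dptrans-intro,lem:dptrans-forget,lem:dptrans-join}: store $\bot$ for every invalid $t$-DP-tuple, and for every valid one $\tau$ compute a maximum-weight pair that fits $\tau$ (or detect that none exists) from the given $c$-table together with one automaton transition. First I would note that a neutral-node $t$ is not a top-node, so every valid DP-tuple at $t$ has $B_2 = \emptyset$; hence $\tau = (t, X_t, B_1, \emptyset, q)$, and $\bag(t) = \bag(c)$, $\subtree_{\Tc}(t) = \subtree_{\Tc}(c)$. Taking a pair $(X_1,X_2)$ fitting $\tau$ with a $(t,B_1,\emptyset)$-inner tree decomposition $\Tc_t = (T_t,\vbag_t)$ of $\autom$-state $q$, the neutral-node case of \Cref{lem:virtdecomp} applied to $t$ and its child $c$ in $T_t$ splits into: (Case 1) $\vbag_t(t) = \vbag_t(c)$, or (Case 2) $\vbag_t(t) = \vbag_t(c)\setminus\{v\}$ for some $v \in X_1$; in both cases $\vbag_t(c) \subseteq X_1$, so $\vbag_t(c)$ equals $B_1$ or $B_1 \cup \{v\}$.

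In Case 1, restricting $\Tc_t$ to the subtree rooted at $c$ shows that $(X_1,X_2)$ fits the $c$-DP-tuple $(c,X_t,B_1,\emptyset,q')$ for the $q'$ with $\sigma^1_{\autom}(q',B_1,B_1)=q$, and conversely any pair fitting that tuple extends to fit $\tau$ by adjoining a fresh root $t$ with $\vbag(t) = B_1 = \vbag(c)$ --- this preserves the tree-decomposition axioms and leaves the instance of \Cref{enum:definnertd:nbitem} at $t$ identical to the one at $c$. In Case 2, $\vbag_t(c) = B_1 \cup \{v\}$ with $v \in X_1 \setminus B_1$; \Cref{lem:virtdecomp:prop2} then forces $v \notin \bag(t)$, hence $v \notin X_t$, and the node-$t$ instance of \Cref{enum:definnertd:nbitem} forces $N(v) \cap (X_t \setminus B_1) = \emptyset$ (using $X_2 \cap \bag(t) = X_t \setminus B_1$); so $(X_1,X_2)$ fits the $c$-DP-tuple $(c,X_t,B_1\cup\{v\},\emptyset,q')$ with $\sigma^1_{\autom}(q',B_1\cup\{v\},B_1)=q$. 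Conversely, for every $v \in \subtree_{\Tc}(c)\setminus B_1$ with $v \notin X_t$, $N(v)\cap(X_t\setminus B_1)=\emptyset$, and $|B_1\cup\{v\}|\le \ell+1$, any pair fitting $(c,X_t,B_1\cup\{v\},\emptyset,q')$ extends to fit $\tau$ by attaching a fresh root $t$ with $\vbag(t)=B_1=\vbag(c)\setminus\{v\}$; I would check that $v$ does not reappear above $c$ (so connectedness survives), that property~2 of \Cref{lem:virtdecomp} holds at $t$ because $v\notin\bag(t)$, and that \Cref{enum:definnertd:nbitem} at $t$ holds precisely because $N(v)\cap(X_t\setminus B_1)=\emptyset$.

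Given this forward/backward equivalence, for each valid $\tau$ I would set the $t$-table entry to the maximum-weight pair among: the non-$\bot$ $c$-entries $(c,X_t,B_1,\emptyset,q')$ with $\sigma^1_{\autom}(q',B_1,B_1)=q$; and, for every admissible $v$ and every $q'$ with $\sigma^1_{\autom}(q',B_1\cup\{v\},B_1)=q$, the non-$\bot$ $c$-entry $(c,X_t,B_1\cup\{v\},\emptyset,q')$ with $v$ adjoined to its first part --- or $\bot$ if no such entry exists. For the running time there are at most $|\fami(t)|\cdot n^{\OO(\ell)}\cdot|Q|$ tuples $\tau$, each processed by scanning $\OO(n)$ vertices $v$ and $\OO(|Q|)$ states $q'$ with one $\sigma^1_{\autom}$-evaluation (cost $\evaltime(\autom)$) per combination, which fits within $|\fami(c)|\cdot|\fami(t)|\cdot n^{\OO(\ell)}\cdot|Q|^2\cdot\evaltime(\autom)$ (the $|\fami(c)|$ factor is a harmless overestimate), matching the claim since $\ell=\OO(kw^2)$.

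The step I expect to be the main obstacle is Case 2. Contrary to the naive picture in which ``nothing happens'' at a neutral-node, the inner tree decomposition of \Cref{lem:virtdecomp} forgets the $X_1$-vertices one at a time exactly at neutral-nodes, and deleting $v$ from the current bag is the single operation that can invalidate the decomposition, by stranding an edge from $v$ to a vertex of $X_2$ still lying in $\bag(t)$. Pinning down the correct side condition on $v$ ($v\notin X_t$ and $N(v)\cap(X_t\setminus B_1)=\emptyset$), showing it is both necessary (from \Cref{enum:definnertd:nbitem} at $t$) and sufficient for the extension, and confirming it is testable from the DP-tuple data alone, is the crux; everything else is routine bookkeeping parallel to the introduce/forget/join transitions.
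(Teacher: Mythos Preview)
Your proposal is correct and follows essentially the same argument as the paper: split into the two cases $\vbag_t(c)=B_1$ and $\vbag_t(c)=B_1\cup\{v\}$ from \Cref{lem:virtdecomp}, derive the side condition $v\notin\bag(t)$ and $N(v)\cap(X_t\setminus B_1)=\emptyset$ from \Cref{lem:virtdecomp:prop2} and \Cref{enum:definnertd:nbitem}, and verify both directions of the correspondence with the $c$-table entries. Two cosmetic points: in Case~2 you need not ``adjoin $v$ to the first part'', since any pair fitting $(c,X_t,B_1\cup\{v\},\emptyset,q')$ already has $v\in X_1'$; and the enumeration condition on $v$ should be $v\notin\bag(t)$ rather than $v\notin X_t$ (the latter is implied, and if $v\in\bag(t)\setminus X_t$ the $c$-tuple is not even a DP-tuple).
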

\begin{proof}
We iterate over all valid DP-tuples of the form $\tau = (t, X_t, B_1, \emptyset, q)$.
Consider a pair $(X_1,X_2)$ that fits to $\tau$, and let $\Tc_t = (T_t, \vbag_t)$ be a $(t,B_1,\emptyset)$-inner tree decomposition of $(X_1, X_2 \setminus X_t)$ of width at most $\ell$ such that the state of $\autom$ on $\Tc_t$ is $q$.
By \Cref{lem:virtdecomp}, either (1) $\vbag_t(c) = \vbag_t(t)$ or (2) $\vbag_t(c) = \vbag_t(t) \cup \{v\}$ for some $v \in X_1 \setminus \vbag_t(t)$.

\paragraph{Case (1).} In case (1), we have that $(X_1,X_2)$ fits to a DP-tuple of form $(c, X_t, B_1, \emptyset, q')$, where $q' \in Q$ such that $\sigma^1_{\autom}(q', B_1, B_1) = q$.
Furthermore, we observe that any pair $(X_1',X_2')$ that fits to such a DP-tuple also fits to $\tau$.

\paragraph{Case (2).}
For case (2), we observe that it must be that $v \in \subtree_{\Tc}(t) \setminus (\bag(t) \cup B_1)$.
Furthermore, because of \Cref{enum:definnertd:nbitem} of the definition of $(B_1,B_2)$-inner tree decomposition, it must be that $N(v) \cap (X_t \setminus B_1) = \emptyset$.
Assume that $v$ satisfies the above conditions.

Now, we have that $(X_1,X_2)$ fits to a DP-tuple of form $(c, X_t, B_1 \cup \{v\}, \emptyset, q')$, where $q' \in Q$ such that $\sigma^1_{\autom}(q', B_1 \cup \{v\}, B_1) = q$, and $v \in \subtree_{\Tc}(c) \setminus \bag(t)$ such that $v$ is not adjacent to any vertex in $X_t \setminus B_1$.
We argue that any pair $(X_1', X_2')$ that fits to such a DP-tuple also fits to $\tau$.
For this, most of the conditions are easy to check, but we should verify that any $(c, B_1 \cup \{v\}, \emptyset)$-inner tree decomposition $\Tc_c$ of $(X_1', X_2')$ can be lifted to a $(t, B_1, \emptyset)$-inner tree decomposition $\Tc' = (T', \vbag')$ of $(X_1', X_2')$ by adding a root-bag $B_1$.
Clearly, $\Tc'$ is a tree decomposition of $G[X_1' \cup (X'_2 \setminus \bag(t))]$, and satisfies with $(X'_1, X'_2)$ the conditions of \Cref{lem:virtdecomp}.
It remains to check that for all nodes $x \in V(T')$, $N(X'_2 \cap \bag(x)) \cap X'_1 \cap \subtree_{\Tc}(x) \subseteq \vbag'(x)$.
This clearly holds for $x \neq t$, so consider $x = t$.
As $\vbag'(t) = \vbag'(c) \setminus \{v\}$ and $\bag(t) = \bag(c)$, it suffices to consider only the vertex $v$ from $X'_1$.
For $v$, this holds by the choice of $v$, in particular, because $v$ is not adjacent to any vertex in $X_t \setminus B_1 = X'_2 \cap \bag(t)$.

Therefore, we can store for $\tau$ the maximum-weight pair that either fits to (1), or fits to (2) for some choice of $v$ that satisfies the aforementioned conditions.
\end{proof}

\begin{lemma}[Top]
\label{lem:dptrans-top}
Let $t \in V(T)$ be a $v$-top-node with child $c$.
Given a $c$-table, a $t$-table can be computed in time $|\fami(c)| \cdot |\fami(t)| \cdot n^{\OO(\ell)} \cdot |Q|^2 \cdot \evaltime(\autom)$.
\end{lemma}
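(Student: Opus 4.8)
The plan is to follow the template of the introduce/forget/neutral transitions (\Cref{lem:dptrans-intro,lem:dptrans-forget,lem:dptrans-neutral}): iterate over all valid DP-tuples $\tau=(t,X_t,B_1,B_2,q)$ and, for each, read off the weight-best pair fitting $\tau$ from the $c$-table. Since $t$ is a $v$-top-node, a valid $\tau$ already forces $B_2\in\{\emptyset,\{v\}\}$. Using that $\Tc$ is supernice, $c$ is the unique child of $t$ and $\bag(c)=\bag(t)$, so $\subtree_{\Tc}(t)=\subtree_{\Tc}(c)$ and $T_t$ is $T_c$ with $t$ added as a new root above $c$. Hence, if $(X_1,X_2)$ fits $\tau$ with inner tree decomposition $\Tc_t=(T_t,\vbag_t)$, then the top-node case of \Cref{lem:virtdecomp:prop4} of \Cref{lem:virtdecomp}, applied at the root $t$ of $\Tc_t$, gives either $\vbag_t(t)=\vbag_t(c)$ with $v$ not in the second coordinate $(X_2\setminus X_t)\cup B_2$, or $\vbag_t(t)=\vbag_t(c)\cup\{v\}$ with $v$ in it; since $v\in\bag(t)$, the first happens exactly when $B_2=\emptyset$ and the second exactly when $B_2=\{v\}$, and in the latter case \Cref{lem:virtdecomp:prop3} forces $v$ to occur only in $\vbag_t(t)$, so $\vbag_t(c)=B_1$.

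For $B_2=\emptyset$ we have $\vbag_t(t)=\vbag_t(c)=B_1$; I would argue that deleting the root $t$ from $\Tc_t$ yields a $(c,B_1,\emptyset)$-inner tree decomposition of $(X_1,X_2\setminus X_t)$ (the underlying graph is unchanged, and all conditions of \Cref{lem:virtdecomp} together with condition~\ref{enum:definnertd:nbitem} of inner tree decompositions are inherited), so $(X_1,X_2)$ fits $(c,X_t,B_1,\emptyset,q')$ with $q'=\run_\autom(c)$ and $q=\transit^1_\autom(q',B_1,B_1)$; conversely, for any $q'$ with $\transit^1_\autom(q',B_1,B_1)=q$, re-attaching a new root bag $B_1$ above a $(c,B_1,\emptyset)$-inner tree decomposition of a pair fitting $(c,X_t,B_1,\emptyset,q')$ produces a $(t,B_1,\emptyset)$-inner tree decomposition of the same graph, same width, and automaton state $q$, with $t$-admissibility following from $X_t\in\fami(t)$ — so the $t$-table entry for $\tau$ is the weight-best $c$-table entry over the admissible $q'$. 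For $B_2=\{v\}$, deleting $t$ from $\Tc_t$ (which now also erases $v$ from every bag) again gives a $(c,B_1,\emptyset)$-inner tree decomposition of $(X_1,X_2\setminus X_t)$, so one reduces to child tuples $(c,X_t,B_1,\emptyset,q')$ with $q=\transit^1_\autom(q',B_1,B_1\cup\{v\})$; conversely, given a pair $(X_1',X_2')$ fitting such a child tuple — note $v\in X_t\setminus B_1\subseteq X_2'$ — I would attach a new root with bag $B_1\cup\{v\}$ above a $(c,B_1,\emptyset)$-inner tree decomposition of $(X_1',X_2'\setminus X_t)$ and claim this is a $(t,B_1,\{v\})$-inner tree decomposition of $(X_1',(X_2'\setminus X_t)\cup\{v\})$, so that $(X_1',X_2')$ fits $\tau$.

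The step I expect to be the main obstacle is this converse of the $B_2=\{v\}$ case: re-inserting $v$ at the new root is legitimate only if every $X_1'$-neighbour of $v$ already lies in $B_1=\vbag_c(c)$, which is exactly what the edge-condition at the new root and condition~\ref{enum:definnertd:nbitem} at node $t$ require (here one also uses that $v\in X_2'$ has no $X_2'$-neighbours, and $X_1'\subseteq\subtree_{\Tc}(t)$). I would establish $N(v)\cap X_1'\subseteq B_1$ from the invariants already carried by the $c$-table: by \Cref{lem:virtdecomp:prop4} an $X_1$-vertex can leave the inner decomposition only at a neutral-node, by \Cref{lem:virtdecomp:prop2} every $X_1$-vertex sits in the inner bag at every node where it occurs in $\Tc$, and $v$ occupies the current bag at $c$ and — by connectedness of its $\Tc$-occurrences — at every node on the path from each occurrence of $v$ up to $c$; consequently no neutral-node on those paths may drop a neighbour of $v$ (this is precisely the condition the neutral transition of \Cref{lem:dptrans-neutral} checks), so each $X_1'$-neighbour of $v$ survives into $\vbag_c(c)=B_1$. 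Finally, for the running time: there are at most $|\fami(t)|\cdot n^{\OO(\ell)}\cdot|Q|$ tuples $\tau$ (with $B_2$ one of two choices), and for each I loop over the at most $|Q|$ candidate states $q'$, performing one $\transit^1_\autom$ evaluation and one lookup into the $c$-table (which has at most $|\fami(c)|\cdot n^{\OO(\ell)}\cdot|Q|$ entries) per candidate, then take a weight-maximum over boundedly many candidates; this gives the claimed bound $|\fami(c)|\cdot|\fami(t)|\cdot n^{\OO(\ell)}\cdot|Q|^2\cdot\evaltime(\autom)$.
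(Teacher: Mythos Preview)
Your proposal is correct and follows the same two-case split and bidirectional reduction to child DP-tuples as the paper. The one place where you diverge is the verification of $N(v)\cap X_1'\subseteq B_1$ in the converse of the $B_2=\{v\}$ case: you trace a path argument through \Cref{lem:virtdecomp:prop2,lem:virtdecomp:prop4}, arguing that an $X_1'$-neighbour of $v$ cannot be dropped at any neutral node along the path to $c$, whereas the paper simply applies condition~\ref{enum:definnertd:nbitem} of the inner tree decomposition definition directly at the node $c$ itself: since $v\in X_2'\cap\bag(c)$ and $X_1'\subseteq\subtree_{\Tc}(c)$, every $X_1'$-neighbour of $v$ lies in $\vbag_c(c)=B_1$ in one line. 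Your argument is essentially re-deriving this invariant along the path rather than invoking it at the root, so it works but is longer than needed.
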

\begin{proof}
We iterate over all valid DP-tuples of the form $\tau = (t, X_t, B_1, B_2, q)$.
Recall that thus, $B_2 = \{v\}$ or $B_2 = \emptyset$.
Consider a pair $(X_1,X_2)$ that fits to $\tau$, and let $\Tc_t = (T_t, \vbag_t)$ be a $(t,B_1,B_2)$-inner tree decomposition of $(X_1, X_2 \setminus X_t)$ of width at most $\ell$ such that the state of $\autom$ on $\Tc_t$ is $q$.

\paragraph{Case 1: $B_2 = \{v\}$.}
By \Cref{lem:virtdecomp}, in this case it holds that $\vbag_t(c) = \vbag_t(t) \setminus \{v\}$, and furthermore $\vbag_t(t)$ is the only bag of $(T_t, \vbag_t)$ containing $v$.

We have that $(X_1,X_2)$ fits to a DP-tuple of form $(c, X_t, B_1, \emptyset, q')$, where $q' \in Q$ such that $\sigma^1_{\autom}(q', B_1, B_1 \cup \{v\}) = q$.
We claim that every pair $(X_1',X_2')$ that fits to such DP-tuple also fits to $\tau$.
The non-trivial thing to argue is that any $(c, B_1, \emptyset)$-inner tree decomposition $\Tc_c = (T_c, \vbag_c)$ of $(X_1', X_2')$ can be lifted to a $(t, B_1, \{v\})$-inner tree decomposition $\Tc'$ of $(X_1', X_2')$ by adding a root bag $B_1 \cup \{v\}$.

Let us check that $\Tc'$ is indeed a tree decomposition of $G[X_1' \cup (X_2' \setminus X_t) \cup \{v\}]$.
The vertex-condition is satisfied because all vertices of $X_1' \cup (X_2' \setminus X_t) \cup \{v\}$ except $v$ occur in $\Tc_c$, while $v$ occurs in the new root bag.
The connectedness condition follows from the fact that $\Tc_c$ is a tree decomposition of $G[X_1' \cup (X_2' \setminus X_t)]$ with root bag $B_1$, and in particular, $v$ does not occur in any of its bags.
For the edge-condition consider an edge $uv$ between $v$ and a vertex $u \in X_1' \cup (X_2' \setminus X_t)$.
As $v \in X_2'$ and $X_2'$ is an independent set, we have that $u$ must be in $X_1'$, and therefore also in $X_1' \cap \subtree_{\Tc}(c)$.
As $v \in X_2' \cap \bag(c)$, it follows from \Cref{enum:definnertd:nbitem} of the definition of $(t, B_1, B_2)$-inner tree decomposition that $u \in \vbag_c(c)$, and therefore $u \in B_1$.

The other conditions in the definition of $(t, B_1, B_2)$-inner tree decomposition are easy to check.
Therefore, we can store for $\tau$ the maximum-weight pair that fits to such a DP-tuple.

\paragraph{Case 2: $B_2 = \emptyset$.}
By \Cref{lem:virtdecomp}, in this case it holds that $\vbag_t(c) = \vbag_t(c)$.
It follows that $(X_1,X_2)$ fits to some DP-tuple of form $(c, X_t, B_1, \emptyset, q')$, where $q' \in Q$ such that $\sigma^1_{\autom}(q', B_1, B_1) = q$.
Furthermore, we observe that any pair $(X_1',X_2')$ that fits to such a DP-tuple also fits to $\tau$, so we can store for $\tau$ the maximum-weight pair that fits to such a DP-tuple.
\end{proof}

\begin{lemma}[Initial]
\label{lem:dptrans-initial}
Let $t \in V(T)$ be an initial-node.
A $t$-table can be computed in time $n^{\OO(1)}$.
\end{lemma}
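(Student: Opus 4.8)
The plan is to treat an initial-node as a trivial base case. Since $t$ is a leaf with $\bag(t)=\emptyset$, we have $\subtree_{\Tc}(t)=\emptyset$, so the only set $X\subseteq\subtree_{\Tc}(t)$ is $X=\emptyset$. Accordingly, I would first unwind the definition of a DP-tuple at $t$: as $X_t\in\fami(t)$ must be a subset of $\bag(t)=\emptyset$ we get $X_t=\emptyset$ (and if $\emptyset\notin\fami(t)$, there are no DP-tuples at $t$ at all and the $t$-table is empty); as $B_1\subseteq\subtree_{\Tc}(t)=\emptyset$ we get $B_1=\emptyset$; and then $B_2\subseteq X_t\setminus B_1=\emptyset$ forces $B_2=\emptyset$. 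Hence the only DP-tuples at $t$ are $\tau_q=(t,\emptyset,\emptyset,\emptyset,q)$ for $q\in Q$, and each of them is valid, since $X_t\setminus B_1=\emptyset$ is an independent set and $B_2=\emptyset$.

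Next I would determine which pairs fit $\tau_q$. Any pair $(X_1,X_2)$ fitting $\tau_q$ must satisfy $X_1\cup X_2\subseteq\subtree_{\Tc}(t)=\emptyset$, so the only candidate is $(X_1,X_2)=(\emptyset,\emptyset)$; this pair is vacuously $t$-admissible (its only descendant in question is $t$ itself, and $\emptyset\cap\bag(t)=\emptyset\in\fami(t)$ in the nonempty case) and has weight $\we(\emptyset)=0$. The required $(t,\emptyset,\emptyset)$-inner tree decomposition of $(\emptyset,\emptyset)$ is the one-node decomposition $(T_t,\vbag_t)$ where $T_t$ is the single node $t$ and $\vbag_t(t)=\emptyset$; this is a tree decomposition of $G[\emptyset]$ of width $\le\ell$, and every clause in the definitions of \emph{inner tree decomposition} and in the properties of \Cref{lem:virtdecomp} holds vacuously for it (in particular $\vbag_t(t)\cap X_1=\vbag_t(t)\cap X_2=\emptyset$, and the neighborhood clause is empty). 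The state of $\autom$ on this decomposition is $\run_{\autom}(t)=\sigma^0(\emptyset)$. Therefore $(\emptyset,\emptyset)$ fits $\tau_q$ if and only if $q=\sigma^0(\emptyset)$.

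Finally I would fill in the $t$-table: store the pair $(\emptyset,\emptyset)$ for the tuple $\tau_{q_0}$ with $q_0=\sigma^0(\emptyset)$, and $\bot$ for every other tuple. Computing $q_0$ is a single evaluation of the nullary transition of $\autom$, and populating the table is immediate, so the whole step runs in $n^{\OO(1)}$ time. I do not expect any real obstacle here; the only thing requiring care is to verify that the trivial pair $(\emptyset,\emptyset)$ genuinely satisfies every clause of the definitions of \emph{fits}, \emph{valid}, \emph{$(t,B_1,B_2)$-inner tree decomposition}, and \emph{$t$-admissible} — all of which hold vacuously — and to note that it is precisely the automaton's initial state $\sigma^0(\emptyset)$ that singles out the unique non-$\bot$ entry of the table.
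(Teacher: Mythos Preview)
Your proposal is correct and follows essentially the same approach as the paper: reduce to the single relevant DP-tuple $(t,\emptyset,\emptyset,\emptyset,\sigma^0(\emptyset))$, store $(\emptyset,\emptyset)$ if $\emptyset\in\fami(t)$ and $\bot$ otherwise. Your version is in fact more thorough than the paper's, as you explicitly verify the fit conditions and handle the other states $q\neq\sigma^0(\emptyset)$ and the case $\emptyset\notin\fami(t)$.
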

\begin{proof}
The only DP-tuple to consider is $\tau = (t, \emptyset, \emptyset, \emptyset, \sigma_{\autom}^0(\emptyset))$.
If $\emptyset \in \fami(t)$, then we store the pair $(\emptyset, \emptyset)$, otherwise we store $\bot$.
\end{proof}

\subsection{Proof of \Cref{lem:maindplemma}}
\label{subsec:dpprooffinal}
We now finish the proof of \Cref{lem:maindplemma} by putting the ingredients from \Cref{sec:innertree,subsec:dptables,subsec:dptrans} together.
Let us re-state it for convenience.

\maindplemma*
\begin{proof}
We select $\ell$ as the integer $\ell \le \OO(kw^2)$ given by \Cref{lem:virtdecomp}.
Then, we use the algorithm of \Cref{lem:tdintosupernice} to obtain, in time $(|V(T)|+n)^{\OO(1)}$, from $\Tc$ an $\ell$-supernice tree decomposition $\Tc' = (T', \bag')$ of $G$ such that 
\begin{itemize}
\item for each $t' \in V(T')$, there exists $t \in V(T)$ with $\bag'(t') \subseteq \bag(t)$, and
\item for each $t \in V(T)$, there exists $t' \in V(T')$ with $\bag'(t') = \bag(t)$.
\end{itemize}
We construct a set $\fami'(t')$ for each $t' \in V(T')$ by taking the intersection of the set families $\{X_t \cap \bag'(t') \mid X_t \in \fami(t)\}$ over all $t \in V(T)$ with $\bag'(t') \subseteq \bag(t)$.
Observe that now, a set $X \subseteq V(G)$ satisfies that $X \cap \bag(t) \in \fami(t)$ for all $t \in V(T)$ if and only if it satisfies that $X \cap \bag'(t') \in \fami'(t')$ for all $t' \in V(T')$.
These sets $\fami'$ can be constructed in time $(|V(T)| + n + \sum_{t \in V(T)} |\fami(t)|)^{\OO(1)} = |V(T)|^{\OO(1)} \cdot n^{\OO(1)} \cdot \sum_{t \in V(T)} |\fami(t)|^{\OO(1)}$.

Next, we use \Cref{lem:cmsotwcheck,lem:cmsotreeautomaton} to construct a tree decomposition automaton $\autom = (\sigma^0, \sigma^1, \sigma^2, Q, F)$ of width $\ell$ so that
\begin{itemize}
\item the state of $\autom$ on a tree decomposition of a graph $G$ is in $F$ if and only if $G$ satisfies $\Phi$ and $\tw(G) \le w$, and
\item $|Q|, \evaltime(\autom) \le g(\ell, |\Phi|, w)$ for a computable function $g$.
\end{itemize}
This runs in time bounded by a computable function of $\ell$, $w$, and $|\Phi|$, and thus by a computable function of $k$, $w$, and $|\Phi|$.

We then apply the definitions of \Cref{subsec:dptables} with the decomposition $\Tc'$, the integer $\ell$, and the automaton $\autom$, and use the algorithms of \Cref{lem:dptrans-intro,lem:dptrans-forget,lem:dptrans-join,lem:dptrans-neutral,lem:dptrans-top,lem:dptrans-initial} to compute $t$-tables for all nodes $t$ of $T'$.
This runs in time $|V(T')| \cdot \left(\sum_{t \in V(T')} |\fami'(t)|\right)^{3} \cdot n^{\OO(\ell)} \cdot g(\ell, |\Phi|, w)^4 = f(k, w, |\Phi|) \cdot n^{\OO(kw^2)} \cdot |V(T)|^{\OO(1)} \cdot \sum_{t \in V(T)} |\fami(t)|^{\OO(1)}$ for a computable function $f$.

Let $r$ be the root of $T'$.
It remains to show that the $r$-table indeed stores an optimal solution.

\begin{claim}
\label{lem:maindplemma:claim1}
If a pair $(X_1, X_2)$ fits to a DP-tuple of form $(r, \emptyset, \emptyset, \emptyset, q)$, where $q \in F$, then 
\begin{itemize}
\item $\tw(G[X_1 \cup X_2]) \le w$,
\item $G[X_1 \cup X_2]$ satisfies $\Phi$, and
\item $X \cap \bag(t) \in \fami(t)$ for all $t \in V(T)$.
\end{itemize}
\end{claim}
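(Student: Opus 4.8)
The plan is to unfold the definition of \enquote{fits} for the tuple $(r,\emptyset,\emptyset,\emptyset,q)$ and invoke the two facts already established in this proof: the defining property of the automaton $\autom$, and the equivalence between $\fami$-membership and $\fami'$-membership. Throughout, write $X = X_1 \cup X_2$. First I would extract a tree decomposition of $G[X]$ from the hypothesis. Since $(X_1,X_2)$ fits to $(r,\emptyset,\emptyset,\emptyset,q)$, we have $X_t = X \cap \bag'(r) = \emptyset$ (recall $\bag'(r) = \emptyset$, as $\Tc'$ is supernice), so by \Cref{enum:fits:innerdecomp} of the definition of \enquote{fits} there is a $(r,\emptyset,\emptyset)$-inner tree decomposition $(T'_r,\vbag_r)$ of $(X_1,(X_2 \setminus X_t) \cup \emptyset) = (X_1,X_2)$. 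As $r$ is the root, $T'_r = T'$, and by the first item of the definition of an inner tree decomposition, $(T',\vbag_r)$ is a tree decomposition of $G[X_1 \cup (X_2 \setminus \bag'(r)) \cup \emptyset] = G[X]$ of width at most $\ell$. Since $\Tc'$ is nice, $T'$ is binary, so $(T',\vbag_r)$ is a binary tree decomposition of $G[X]$ of width at most $\ell$, i.e., a legitimate input for $\autom$.

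Next I would read off the treewidth bound and the $\CMSO_2$-property from the automaton. By \Cref{enum:fits:innerdecompdp} of the definition of \enquote{fits}, the state of $\autom$ on $(T',\vbag_r)$ (that is, $\run_{\autom}(r)$) equals $q$, which lies in $F$ by hypothesis. By the defining property of $\autom$, the state of $\autom$ on a tree decomposition of width at most $\ell$ of a graph $H$ lies in $F$ if and only if $H$ satisfies $\Phi$ and $\tw(H) \le w$; applied with $H = G[X]$, this yields $\tw(G[X]) \le w$ and that $G[X]$ satisfies $\Phi$, the first two bullet points.

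Finally, the membership condition. The third condition of the definition of \enquote{fits} says that $X$ is $r$-admissible, i.e., $X \cap \bag'(t') \in \fami'(t')$ for every descendant $t'$ of $r$; since $r$ is the root of $T'$, this holds for all $t' \in V(T')$. By the equivalence established earlier when constructing the families $\fami'$ from $\fami$ (using the bag-preservation properties of \Cref{lem:tdintosupernice}), a set $Y \subseteq V(G)$ satisfies $Y \cap \bag'(t') \in \fami'(t')$ for all $t' \in V(T')$ if and only if it satisfies $Y \cap \bag(t) \in \fami(t)$ for all $t \in V(T)$. Hence $X \cap \bag(t) \in \fami(t)$ for all $t \in V(T)$, which is the third bullet point.

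I do not expect a genuine obstacle here: this is the \enquote{soundness} direction of the dynamic programming and amounts to bookkeeping. The only point needing a moment's care is verifying that $(T',\vbag_r)$ is indeed a binary tree decomposition of $G[X]$ of width at most $\ell$, hence a valid argument for $\autom$ — which is immediate from the definition of an inner tree decomposition together with the niceness of $\Tc'$. (The substance of the lemma lies in the transition lemmas of \Cref{subsec:dptrans} and in the complementary \enquote{completeness} claim that an optimal solution fits to some root DP-tuple.)
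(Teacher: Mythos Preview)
Your proposal is correct and follows essentially the same route as the paper's proof: unfold the definition of ``fits'' at the root to obtain a tree decomposition of $G[X_1 \cup X_2]$ on which $\autom$ reaches an accepting state, read off $\tw \le w$ and $\Phi$ from the construction of $\autom$, and use $r$-admissibility together with the $\fami$/$\fami'$ equivalence for the third bullet. Your write-up is simply more explicit about the intermediate verifications (e.g., that $(T',\vbag_r)$ is binary and of width at most $\ell$), which the paper leaves implicit.
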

\begin{claimproof}
By definitions, then there exists a tree decomposition $\Tc_r$ of $G[X_1 \cup X_2]$ such that the state of $\autom$ on $\Tc$ is $q$, which by the construction of $\autom$ implies that $\tw(G[X_1 \cup X_2]) \le w$ and $G[X_1 \cup X_2]$ satisfies $\Phi$.
Furthermore, then $X_1 \cup X_2$ is $r$-admissible, implying that $(X_1 \cup X_2) \cap \bag'(t) \in \fami'(t)$ for all $t \in V(T')$, implying that $(X_1 \cup X_2) \cap \bag(t) \in \fami(t)$ for all $t \in V(T)$.
\end{claimproof}

\begin{claim}
\label{lem:maindplemma:claim2}
If a set $X \subseteq V(G)$ satisfies that
\begin{itemize}
\item $\tw(G[X]) \le w$,
\item $G[X]$ satisfies $\Phi$, and
\item $X \cap \bag(t) \in \fami(t)$ for all $t \in V(T)$,
\end{itemize}
then there exists a bipartition $(X_1,X_2)$ of $X$ and $q \in F$ such that $(X_1, X_2)$ fits to the DP-tuple $(r, \emptyset, \emptyset, \emptyset, q)$.
\end{claim}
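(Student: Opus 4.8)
The plan is to read this claim off directly from \Cref{lem:virtdecomp}: that lemma was designed precisely so that its output, applied at the root, is a valid inner tree decomposition, and the state of $\autom$ on it will be the sought $q$. So, unlike the transition lemmas, this step is not inductive.

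First I would apply \Cref{lem:virtdecomp} to the graph $G$, the $\ell$-supernice tree decomposition $\Tc' = (T',\bag')$ constructed earlier in the proof of \Cref{lem:maindplemma}, and the set $X$; this is legitimate because $\tw(G[X]) \le w$, because $\mu(\Tc') \le k$ (the bags of $\Tc'$ are subsets of bags of $\Tc$ by \Cref{lem:tdintosupernice}, and $\mu$ is monotone under taking subsets), and because $\ell$ was chosen as the integer promised by \Cref{lem:virtdecomp} for the pair $(k,w)$. This yields a bipartition $(X_1,X_2)$ of $X$ and a map $\vbag\colon V(T') \to 2^{V(G)}$ such that $\hat{\Tc} = (T',\vbag)$ is a tree decomposition of $G[X]$ of width at most $\ell$ satisfying \Cref{lem:virtdecomp:prop1,lem:virtdecomp:prop2,lem:virtdecomp:prop3,lem:virtdecomp:prop4}; by \Cref{lem:virtdecomp-furtherconstraints}, $X_2$ is independent and $N(X_2 \cap \bag'(x)) \cap X_1 \cap \subtree_{\Tc'}(x) \subseteq \vbag(x)$ for every $x \in V(T')$. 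I then let $q \in Q$ be the state of $\autom$ on $\hat{\Tc}$; since $\autom$ was built via \Cref{lem:cmsotwcheck,lem:cmsotreeautomaton} so that its state on a tree decomposition of a graph $H$ lies in $F$ exactly when $\tw(H) \le w$ and $H \models \Phi$, and both hold for $H = G[X]$ by hypothesis, we get $q \in F$.

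It remains to verify that $(X_1,X_2)$ fits to the DP-tuple $(r,\emptyset,\emptyset,\emptyset,q)$ in the sense of \Cref{subsec:dptables}. Since $\Tc'$ is supernice we have $\bag'(r) = \emptyset$ and $\subtree_{\Tc'}(r) = V(G)$, so with $X_t = B_1 = B_2 = \emptyset$ the conditions $X_1 \cup X_2 \subseteq \subtree_{\Tc'}(r)$, $(X_1 \cup X_2) \cap \bag'(r) = X_t$, $X_1 \cap \bag'(r) = B_1 \cap \bag'(r)$, and $B_1 \subseteq X_1$, $B_2 \subseteq X_2$ are immediate, and independence of $X_2$ was just established. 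For $r$-admissibility of $X_1 \cup X_2 = X$, note that for each $t' \in V(T')$ there is $t \in V(T)$ with $\bag'(t') \subseteq \bag(t)$ and $X \cap \bag(t) \in \fami(t)$, hence $X \cap \bag'(t') = (X \cap \bag(t)) \cap \bag'(t') \in \fami'(t')$ by the definition of $\fami'$ as the intersection of the restricted families. The heart of the verification is the existence of a $(r,\emptyset,\emptyset)$-inner tree decomposition of $(X_1, (X_2 \setminus \emptyset)\cup\emptyset) = (X_1,X_2)$ with $\autom$-state $q$: I claim $\hat{\Tc}$ itself works. Since $T_r = T'$ and $\bag'(r) = \emptyset$, the five requirements in the definition of $(r,\emptyset,\emptyset)$-inner tree decomposition unwind to: $\hat{\Tc}$ is a tree decomposition of $G[X_1 \cup X_2] = G[X]$ of width $\le \ell$ (given by \Cref{lem:virtdecomp:prop1}); the pair $(X_1,X_2)$ together with $\hat{\Tc}$ satisfies \Cref{lem:virtdecomp:prop1,lem:virtdecomp:prop2,lem:virtdecomp:prop3,lem:virtdecomp:prop4} (by construction); $\vbag(r) \cap X_1 = \vbag(r) \cap X_2 = \emptyset$ (since $\vbag(r) = \emptyset$ by the root case of \Cref{lem:virtdecomp:prop4}); and the neighborhood containment from \Cref{lem:virtdecomp-furtherconstraints}. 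Finally the $\autom$-state of $\hat{\Tc}$ is $q$ by choice. Hence $(X_1,X_2)$ fits to $(r,\emptyset,\emptyset,\emptyset,q)$ with $q \in F$.

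I expect the only real obstacle to be the bookkeeping of lining up the abstract definitions of ``$(t,B_1,B_2)$-inner tree decomposition'' and ``fits'' against the exact properties delivered by \Cref{lem:virtdecomp} and \Cref{lem:virtdecomp-furtherconstraints} — in particular checking that the passage from $\Tc$ to the $\ell$-supernice $\Tc'$ preserves the hypothesis $\mu \le k$ and the $\fami$-membership conditions (both via \Cref{lem:tdintosupernice}), and that $\ell$ is large enough for \Cref{lem:virtdecomp} to apply with parameters $(k,w)$. No genuinely difficult argument is needed; the content of the claim is exactly that \Cref{lem:virtdecomp} produces, at the root, precisely the witness required by the ``fits'' relation.
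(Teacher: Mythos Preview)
Your proposal is correct and follows essentially the same approach as the paper: apply \Cref{lem:virtdecomp} to obtain $(X_1,X_2)$ and $\hat{\Tc}$, invoke \Cref{lem:virtdecomp-furtherconstraints} to verify that $\hat{\Tc}$ is an $(r,\emptyset,\emptyset)$-inner tree decomposition, and observe that the automaton state lies in $F$ because $G[X]$ satisfies $\Phi$ and has treewidth at most $w$. Your write-up is considerably more detailed than the paper's four-sentence proof (and correctly flags the needed side-conditions about $\mu(\Tc')$ and $\fami'$), but the underlying argument is identical; one small imprecision is that for $r$-admissibility you need $X \cap \bag'(t')$ to lie in the restricted family for \emph{every} $t$ with $\bag'(t') \subseteq \bag(t)$, not just one, though this follows immediately from the hypothesis that $X \cap \bag(t) \in \fami(t)$ for all $t$.
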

\begin{claimproof}
Let $(X_1,X_2)$ and $\hat{\Tc} = (T', \vbag)$ be a bipartition of $X$ and a tree decomposition of $G[X]$ given by \Cref{lem:virtdecomp}.
We observe (using \Cref{lem:virtdecomp-furtherconstraints}) that $\hat{\Tc}$ is indeed an $(r,\emptyset,\emptyset)$-inner tree decomposition of $(X_1,X_2)$, as defined in \Cref{subsec:dptables}.
Furthermore, because $\hat{\Tc}$ is a tree decomposition of $G[X]$, and $\tw(G[X]) \le w$ and $G[X]$ satisfies $\Phi$, the state of $\autom$ on $\hat{\Tc}$ is $q \in F$.
It follows that $(X_1, X_2)$ fits to the DP-tuple $(r, \emptyset, \emptyset, \emptyset, q)$.
\end{claimproof}

By \Cref{lem:maindplemma:claim1,lem:maindplemma:claim2}, it suffices then to iterate through DP-tuples of form $(r, \emptyset, \emptyset, \emptyset, q)$ for $q \in F$, and return the stored pair $(X_1,X_2)$ that maximizes $\we(X_1 \cup X_2)$, or if all of them contain $\bot$, return that no such set $X$ exist.
\end{proof}

\section{Proof of \Cref{the:main}}\label{sec:finalproof}
In this section we put together the results of the previous sections to prove \Cref{the:main}.

Before that, we need one more ingredient, which is an algorithm for computing a tree decomposition $\Tc$ with small $\mu(\Tc)$.
Such an algorithm was given originally by Yolov~\cite{DBLP:conf/soda/Yolov18}, and with an improved running time by Dallard, Fomin, Golovach, Korhonen, and Milanič~\cite{DBLP:conf/icalp/DallardFGKM24}.

\begin{lemma}[\cite{DBLP:conf/icalp/DallardFGKM24}]
\label{lem:computedecomp}
There is an algorithm that, given an $n$-vertex graph $G$ and an integer $k$, in time $2^{\OO(k^2)} n^{\OO(k)}$ either outputs a tree decomposition $\Tc$ of $G$ with $\mu(\Tc) \le 8k$, or concludes that $\treemu(G) > k$.
\end{lemma}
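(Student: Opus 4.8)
This is the main theorem of \cite{DBLP:conf/icalp/DallardFGKM24}, so the shortest route is to cite it; the most economical self-contained route is to first establish the analogue with $\treemu$ replaced by the tree-independence number $\treealpha$ and then transfer it through the reduction of Yolov~\cite{DBLP:conf/soda/Yolov18} (the same reduction used to transfer \textsf{NP}-hardness between the two parameters). I sketch instead a direct argument, mirroring the classical recursive balanced-separator approximation of treewidth (cf.\ \cite{DBLP:books/sp/CyganFKLMPPS15}) with the \emph{size} of a bag/separator replaced throughout by its $\mu$-value. The algorithm is a recursion $\mathsf{Decompose}(W,S)$ where $S \subseteq V(G)$ is a current boundary with $\mu(S) \le c$ for a fixed $c = \OO(k)$, $W$ is a union of connected components of $G-S$ with $N(W) \subseteq S$, and the goal is a tree decomposition of $G[N[W]]$ of $\mu$-width $\le 8k$ containing $S$ in its root bag (or a failure that licenses the conclusion $\treemu(G) > k$). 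If $\mu(N[W])$ is already $\le 8k$ this is a base case; otherwise we search, within an enumerable family described below, for a separator $X$ of $G[N[W]]$ that is $\tfrac23$-balanced for $W$ and has $\mu(S \cup X) \le 8k$, recurse on each child piece $W_i$ with boundary $S_i = (S \cup X) \cap N[W_i]$, and glue the returned decompositions under a common root bag $S \cup X$. Each step shrinks the piece size by a constant factor, so the recursion depth is $\OO(\log n)$ and the number of calls is $n^{\OO(1)}$.

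\emph{Why a good separator exists.} If $\treemu(G) \le k$, fix a tree decomposition $\Tc^*$ with $\mu(\Tc^*) \le k$. A standard weighted-centroid argument on $\Tc^*$ yields a bag $B$ that is $\tfrac12$-balanced for $W$ with $\mu(B) \le k$. Since a subset of an induced matching is again an induced matching, $\mu$ is subadditive: $\mu(A_1 \cup A_2) \le \mu(A_1)+\mu(A_2)$. Hence $\mu(S \cup B) = \OO(k)$, and, combined with the boundary-control argument below, this gives a $\tfrac23$-balanced separator of $\mu$-value $\le 8k$ whenever $\treemu(G) \le k$. Keeping $\mu(S_i) \le c$ across all recursion levels (so $c$ does not grow) is handled exactly as in the treewidth case, by choosing the balance to also account for the old boundary so that only a bounded fraction of $S$ survives into any single child; this is what fixes the final constant $8$.

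\emph{Enumerating candidate separators --- the main obstacle.} In the treewidth setting the candidates are simply all vertex subsets of size $\le k+1$, of which there are $n^{\OO(k)}$. This fails for $\mu$, since a graph can have super-polynomially many sets of $\mu$-value $\OO(k)$ (already $N[v]$ of a single vertex may have unbounded $\mu$). One therefore restricts to separators of a special form and argues a good one of that form exists. Working with minimal separators, a minimal separator $X$ is determined by a full component $C$ it bounds ($X = N(C)$, every vertex of $X$ having a neighbor in $C$); taking a maximum induced matching $M$ among $X$--$C$ edges gives $|M| \le \mu(X)$ and $|V(M)| = \OO(\mu(X))$, so that $X$ is ``controlled'' by this bounded vertex set. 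Enumerating all vertex sets $Y$ of size $\OO(k)$ (there are $n^{\OO(k)}$ of them) together with, for each, the bounded list of minimal separators controlled by $Y$, yields a family of size $n^{\OO(k)}$ computable within the claimed time bound, which --- after showing $\Tc^*$ may be taken with all bags drawn from such separators --- is guaranteed to contain a $\tfrac23$-balanced separator of $\mu$-value $\le 8k$ whenever $\treemu(G) \le k$. Making the ``controlled by $Y$'' relation precise so that the family is simultaneously $n^{\OO(k)}$-enumerable, provably contains a good balanced separator, and composes with the recursion's boundary-merging is the delicate technical core; the branching over these witnesses and over how the old boundary partitions among children is what contributes the $2^{\OO(k^2)}$ factor.

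\emph{Output and the ``or concludes'' clause.} Run the recursion starting from $W = V(G)$, $S = \emptyset$. If at some call no separator in the enumerated family is both $\tfrac23$-balanced for $W$ and keeps $\mu(S \cup X) \le 8k$, output $\treemu(G) > k$; by the two pillars above this is sound. Otherwise the recursion terminates with a tree decomposition of $G$ of $\mu$-width $\le 8k$. The running time is the number of calls ($n^{\OO(1)}$) times the per-call cost ($n^{\OO(k)}$ for the enumeration and balance checks, plus $2^{\OO(k^2)}$ for the branching), i.e.\ $2^{\OO(k^2)} n^{\OO(k)}$ in total.
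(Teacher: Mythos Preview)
The paper does not prove this lemma at all: it is stated as a cited result from \cite{DBLP:conf/icalp/DallardFGKM24}, with the remark (in the introduction's footnote) that the result there is for $\treealpha$ and transfers to $\treemu$ via Yolov's reduction~\cite{DBLP:conf/soda/Yolov18}. Your opening two sentences already capture exactly this, so as a proof proposal you could have stopped there and matched the paper.

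Your direct sketch, however, has a genuine gap precisely where you flag it. You note that a minimal separator $X$ with full component $C$ carries an induced matching of size at most $\mu(X)$ among the $X$--$C$ edges, and propose that the $\OO(k)$ vertices of such a matching ``control'' $X$. But you never say what ``control'' means in a way that lets you, for each small witness set $Y$, enumerate a polynomial-size list of candidate separators and \emph{prove} that some good balanced separator appears on the list. This is not a minor omission: sets $X$ with $\mu(X)\le k$ can be arbitrarily large, with no evident canonical form recoverable from $\OO(k)$ vertices, so the whole $n^{\OO(k)}$ enumeration claim rests on an undefined notion. The actual algorithm of \cite{DBLP:conf/icalp/DallardFGKM24} works with $\treealpha$ --- where the corresponding enumeration is tractable for different structural reasons --- and is then ported to $\treemu$ via Yolov's reduction, i.e., exactly the route you mention and then set aside. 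Your direct argument would amount to an independent proof of that paper's main theorem for $\treemu$ without passing through $\treealpha$; as written, its central step is a black box.
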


Now we can prove \Cref{the:main}.

\maintheorem*
\begin{proof}
We first apply the algorithm of \Cref{lem:computedecomp}.
If it returns that $\treemu(G) > k$, we can return that also.
Otherwise, we obtain a tree decomposition $\Tc = (T,\bag)$ of $G$ with $\mu(\Tc) \le 8k$.

We then apply the algorithm of \Cref{lem:bagsignatures} to compute for each $t \in V(T)$ a family $\fami(t)$ of subsets of $\bag(t)$, so that if $X \subseteq V(G)$ is the optimal solution to $(G,\we, \Phi, w)$, then $X \cap \bag(t) \in \fami(t)$ for all $t$, and $|\fami(t)| \le f_1(8k, w, |\Phi|) \cdot n^{\OO(kw)}$ for a computable function $f_1$.
This runs in time $f_1(8k, w, |\Phi|) \cdot |V(T)| \cdot n^{\OO(kw)} = f_1(8k, w, |\Phi|) \cdot n^{\OO(kw)}$.

We then apply the algorithms of \Cref{lem:maindplemma} with $G$, $\we$, $\Tc$, $\Phi$, $w$, the sets $\fami(t)$, to find a set $X \subseteq V(G)$ so that $\tw(G[X]) \le w$, $G[X]$ satisfies $\Phi$, $X \cap \bag(t) \in \fami(t)$ for all $t \in V(T)$, and among such sets, $X$ maximizes $\we(X)$, or conclude that no such set $X$ exists.
We return the conclusion returned by the algorithm of \Cref{lem:maindplemma}.
This runs in time $f_2(8k, w, |\Phi|) \cdot n^{\OO(kw^2)} \cdot |V(T)|^{\OO(1)} \cdot \sum_{t \in V(T)} |\fami(t)|^{\OO(1)} = f_2(8k, w, |\Phi|) \cdot n^{\OO(kw^2)} \cdot f_1(8k, w, |\Phi|)^{\OO(1)}$, for a computable function $f_2$, which is also a bound for the total running time (assuming $f_1(8k, w, |\Phi|) \ge 2^{\Omega(k^2)}$).
\end{proof}

We then prove \Cref{cor:kttfreealg} from \Cref{the:main} by using the following observation by Lima, Milanič, Muršič, Okrasa, Rzążewski, and Štorgel~\cite{DBLP:conf/esa/LimaMMORS24}.

\begin{lemma}[\cite{DBLP:conf/esa/LimaMMORS24}]
\label{lem:kttfreebdtw}
There is a (computable) function $f(k,t)$, so that every $K_{t,t}$-subgraph-free graph with $\treemu(G) \le k$ has treewidth at most $f(k,t)$.
\end{lemma}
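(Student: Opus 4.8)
The plan is to start from a tree decomposition $\Tc = (T,\bag)$ of $G$ with $\mu(\Tc) \le k$ (which exists because $\treemu(G) \le k$) and to show that $G$ admits balanced separators of size bounded by a computable function of $k$ and $t$; by the standard characterization of treewidth via balanced separators this yields $\tw(G) \le f(k,t)$. Concretely, I would prove that for every $X \subseteq V(G)$ there is a set $S \subseteq V(G)$ with $|S| \le f_0(k,t)$ such that every connected component of $G - S$ contains at most $|X|/2$ vertices of $X$; it is standard (see, e.g., \cite{DBLP:books/sp/CyganFKLMPPS15}) that the existence of such separators for all $X$ forces $\tw(G) = O(f_0(k,t))$.

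Fix $X$. First I would use the classical fact that some bag $B = \bag(t)$ of $\Tc$ is already balanced for $X$: orient each edge $st'$ of $T$ towards the side of $T - st'$ whose bags contain more of $X$, and pick a node $t$ into which all incident edges point. The difficulty is that $|B|$ may be unbounded, so the heart of the argument is to replace $B$ by a much smaller balanced separator, and this is where $K_{t,t}$-subgraph-freeness enters. The graph $G[B]$ is $K_{t,t}$-subgraph-free, and since every induced matching of $G[B]$ is an induced matching of $G$ with all edges incident to $B$, its induced matching number is at most $k$. I would then establish the combinatorial lemma that a $K_{t,t}$-subgraph-free graph with induced matching number at most $k$ has a vertex cover of size $f_1(k,t)$: taking a maximum matching $M$ of $G[B]$, the subgraph of $G[B]$ induced by the endpoints of $M$ is $K_{t,t}$-subgraph-free on $2|M|$ vertices, hence has $O_t(|M|^{2-1/t})$ edges by K\H{o}v\'{a}ri--S\'{o}s--Tur\'{a}n; since each edge of $M$ is then ``in conflict'' with only $O_t(|M|^{1-1/t})$ others, a greedy argument extracts from $M$ an induced submatching of size $\Omega_t(|M|^{1/t})$ — an analogue of \Cref{lem:inducematching} that uses $K_{t,t}$-freeness in place of bounded treewidth — so $|M| \le f_1(k,t)/2$, and the elementary bound (vertex cover number at most twice the matching number) yields a vertex cover $C \subseteq B$ of $G[B]$ of size $f_1(k,t)$ with $B \setminus C$ independent in $G$.

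Given such a set $C_t \subseteq \bag(t)$ for every node $t$, the remaining task is to assemble a tree decomposition of $G$ of width bounded in terms of $k$ and $t$, and this is the step I expect to be the genuine obstacle: $B \setminus C_t$ being an independent set does not by itself make those vertices irrelevant for separation, since they can have neighbours in bags deeper in $\Tc$, and the adhesion sets $\bag(t) \cap \bag(t')$ themselves can be large independent sets. My plan here is a global refinement rather than ad hoc surgery on a single bag: apply the same vertex-cover bound to $G[\bag(t)]$ and to the bipartite graph between $\bag(t)$ and each incident adhesion set, bounding for every edge $tt'$ of $T$ how many vertices of $\bag(t) \cap \bag(t')$ need to remain ``shared''; route the remaining (independent) vertices of each bag into private leaf bags attached at $t$; and re-thread the decomposition along $T$. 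The surviving core at each node then has size $O(f_1(k,t)^2)$, giving the claimed bound on $\tw(G)$.
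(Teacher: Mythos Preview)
The paper does not prove \Cref{lem:kttfreebdtw}; it is quoted from~\cite{DBLP:conf/esa/LimaMMORS24} and used as a black box. So there is no ``paper's own proof'' to compare against, and your proposal must be judged on its merits.

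Your combinatorial core is correct and is indeed the right starting point: for any bag $B$ with $\mu(B)\le k$ in a $K_{t,t}$-subgraph-free graph, the K\H{o}v\'ari--S\'os--Tur\'an argument you outline shows that $G[B]$ has a vertex cover of size $f_1(k,t)$, hence $B\setminus C$ is an independent set of $G$. This is the analogue of \Cref{lem:inducematching} you claim, and it goes through.

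The gap is everything after that, and you already flag it yourself. Two concrete failures:
\begin{itemize}
\item In the balanced-separator route, you never argue that $C$ (or any bounded superset of it) is balanced for $X$. The independent set $B\setminus C$ can merge arbitrarily many components of $G-B$ into a single component of $G-C$ carrying more than $|X|/2$ vertices of $X$; nothing in $K_{t,t}$-freeness or $\mu(B)\le k$ prevents a single vertex of $B\setminus C$ from being adjacent to many components of $G-B$. You silently abandon this route in the last paragraph.
\item In the ``global refinement'' route, the step ``route the remaining (independent) vertices of each bag into private leaf bags attached at $t$'' is not sound: a vertex $v\in B\setminus C$ may lie in a long path of bags, with neighbours forcing it (via the edge- and connectedness-conditions) to stay in every bag along that path. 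You cannot relocate it to a leaf without destroying the decomposition. The phrase ``the bipartite graph between $\bag(t)$ and each incident adhesion set'' is also ill-posed, since adhesions are subsets of $\bag(t)$; and even if you bound the number of ``essential'' adhesion vertices per edge of $T$, a single vertex can sit in unboundedly many adhesions.
\end{itemize}
As written, the proposal establishes only that each $G[\bag(t)]$ has bounded vertex cover (hence bounded treewidth), which by itself does not bound $\tw(G)$: one still needs control over how the bags glue together, and your sketch does not supply it.
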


\corkttfree*
\begin{proof}
We apply the algorithm of \Cref{the:main} with the treewidth bound obtained from \Cref{lem:kttfreebdtw}, and with $\Phi$ edited to include a clause that stating that $G[X]$ is $K_{t,t}$-subgraph-free (such a clause can be added while increasing the length of $\Phi$ by $\OO(t)$).
\end{proof}

\section{Conclusions}
\label{sec:conclusions}
We showed that \mwisbt is solvable in polynomial-time on graphs of bounded $\treemu$, resolving a conjecture of Lima et al.~\cite{DBLP:conf/esa/LimaMMORS24}.
Let us discuss in this section potential extensions of our main theorem and open problems.

First, we note that the proof \Cref{the:main} straightforwardly extends to the setting where the vertices and edges of the graph $G$ are labeled by a (constant number of) colors, and the formula $\Phi$ can talk about these colors.
In particular, this is the setting of $\CMSO_2$ on arbitrary binary structures.
With this, one could express, for example, the problem of finding a longest induced $s,t$-path for a given pair of vertices $s,t$.

There are also other generalizations of \Cref{the:main}, which we suspect could be proven by generalizations of our arguments.
For example, \Cref{the:main} does not capture the problem of $c$-colorability for constant $c$, which was shown by Yolov~\cite{DBLP:conf/soda/Yolov18} to be polynomial-time solvable on graphs of bounded $\treemu$.
A common generalization of this and \Cref{the:main} would be to ask to find a maximum set $X \subseteq V(G)$ that can be covered by (or partitioned into) $c$ sets $X_1,\ldots,X_c$, so that each $G[X_i]$ has treewidth at most $w_i$ and satisfies $\Phi_i$.
It would be interesting to know whether our arguments generalize to these problems; we suspect that the partitioning problem could be harder than the covering problem.

Another generalized formulation of \mwisbt was considered by Fomin, Todinca, and Villanger~\cite{FominTV15}.
They considered the problem of finding sets $X \subseteq F \subseteq V(G)$, so that $|X|$ is maximized, $G[F]$ has bounded treewidth, and the pair $(G[F], X)$ (consider this a graph labeled by an unary predicate) satisfies a $\CMSO_2$ sentence $\Phi$.
This captures, for example, the problem of packing the maximum number of disjoint mutually induced cycles.
We suspect that our techniques could extend to solve also this problem, although this is not straightforward.

A more interesting direction than generalizing \Cref{the:main} to capture more problems would be to generalize the parameter $\treemu$ itself.
For example, if instead of measuring the intersection of a bag with an induced matching (i.e., an induced collection of $K_2$), we would measure the intersection of a bag with an induced collection of $P_3$, or of cycles, would similar algorithmic results hold?
In particular, is \mwis polynomial-time solvable when given a tree decomposition with such properties?

\appendix

\bibliographystyle{alpha}
\bibliography{main}

\end{document}